\newcommand{\p}{\partial}
\newcommand{\dd}{{\rm d}}
\newcommand{\bd}{\begin{definition}}                
\newcommand{\ed}{\end{definition}}                  
\newcommand{\bc}{\begin{corollary}}                 
\newcommand{\ec}{\end{corollary}}                   
\newcommand{\bl}{\begin{lemma}}                     
\newcommand{\el}{\end{lemma}}                       
\newcommand{\bp}{\begin{proposition}}            
\newcommand{\ep}{\end{proposition}}                
\newcommand{\bere}{\begin{remark}}                  
\newcommand{\ere}{\end{remark}}                     
\newcommand{\bt}{\begin{theorem}}
\newcommand{\et}{\end{theorem}}
\newcommand{\be}{\begin{equation}}
\newcommand{\ee}{\end{equation}}
\newcommand{\bit}{\begin{itemize}}
\newcommand{\eit}{\end{itemize}}
\newtheorem{theorem}{Theorem}[section]
\newtheorem{corollary}[theorem]{Corollary}
\newtheorem{lemma}[theorem]{Lemma}
\newtheorem{proposition}[theorem]{Proposition}
\theoremstyle{definition}
\newtheorem{definition}[theorem]{Definition}
\theoremstyle{remark}
\newtheorem{remark}[theorem]{Remark}
\begin{document}

\title{Relativistic Chasles' theorem and the conjugacy classes of the inhomogeneous Lorentz group}

\author{E. Minguzzi\thanks{
Dipartimento di Matematica Applicata ``G. Sansone'', Universit\`a
degli Studi di Firenze, Via S. Marta 3,  I-50139 Firenze, Italy.
E-mail: ettore.minguzzi@unifi.it} }

\date{}

\maketitle

\begin{abstract}
\noindent This work is devoted to the relativistic generalization of
Chasles' theorem, namely to the proof that every proper
orthochronous isometry of Minkowski spacetime,  which sends some
point to its chronological future, is generated through the frame
displacement of an observer which moves with constant acceleration
and constant angular velocity. The acceleration and angular velocity
can be chosen either aligned or perpendicular, and in the latter
case the angular velocity can be chosen equal or smaller than the
acceleration. We start reviewing the classical Euler's and Chasles'
theorems both in the Lie algebra and group versions. We recall the
relativistic generalization of Euler's theorem and observe that
every (infinitesimal) transformation can be recovered from
information of algebraic and geometric type, the former being
identified with the conjugacy class and the latter with some
additional geometric ingredients (the screw axis in the usual
non-relativistic version). Then the proper orthochronous
inhomogeneous Lorentz Lie group is studied in detail. We prove its
exponentiality and identify a causal semigroup and the corresponding
Lie cone. Through the identification of new Ad-invariants we
classify the conjugacy classes, and show that those which admit a
causal representative have special physical significance. These
results imply a classification of the inequivalent Killing vector
fields of Minkowski spacetime which we express through simple
representatives. Finally, we arrive at the mentioned generalization
of Chasles' theorem.

\end{abstract}

%
%
%
%
%
%

%

\tableofcontents

\section{Introduction}
A rigid movement is an orientation preserving isometry of Euclidean
space. A classical theorem by Euler states that every rigid movement
admitting a fixed point can be accomplished through a rotation
around some axis passing through the point. This result was
generalized by Mozzi and Chasles, \cite{ceccarelli00} who proved
that in the general case in which no fixed point is required, the
rigid movement can be accomplished through a rotation around some
axis combined with a translation parallel to the axis. The
composition of these two movements can be accomplished with a single
screw or helical motion.

Mathematically, Euler's and Chasles' theorems establish the
existence of a certain type of representative for each conjugacy
class of the group $SO(3)$ and $ISO(3)$, respectively. The conjugacy
transformation represents a change of frame, thus at the geometrical
level the choice of a convenient representative corresponds to the
choice of a convenient frame.

A related  problem is that of finding the  orbits of the adjoint
(Ad) action of  $SO(3)$ on its Lie algebra $\mathfrak{so}(3)$. The
motivation is essentially the same: we wish to select a simple
element of the orbit on $\mathfrak{so}(3)$ so as to read with ease
the physical content of the infinitesimal transformation represented
by the Lie algebra element.
Usually the infinitesimal versions of Euler's and Chasles' theorem
are regarded as special cases of their finite counterparts. The
finite version can also be deduced from the infinitesimal one.  The
proof in this  direction is essentially equivalent to the proof that
the Lie group $SO(3)$ is exponential.

Ultimately, each Lie algebra element is a vector field and, in
Chasles' case, it can be represented with a characteristic screw flow
around a special line called  {\em instantaneous axis of rotation}.
The infinitesimal formulation of Chasles' theorem became the
starting point of {\em Screw Theory}, a formulation of rigid body
mechanics which unifies in the concept of screw the rotational and
translational degrees of freedom of rigid bodies
\cite{ball76,dimentberg68,murray94,selig05,minguzzi12}.

Euler's theorem was generalized to Minkowski space $M$ by several
authors
\cite{wigner39,abraham48,lomony63,schwartz63,schwartz65,shaw69}.
This problem is essentially equivalent to that of classifying the
conjugacy classes of the proper orthochronous Lorentz group.




In this work we  generalize Chasles' theorem by selecting a
convenient representative for each conjugacy class of the
inhomogeneous proper orthochronous Lorentz group. We identify the
type of geometric data which is required in order to recover the
original transformation. The simple form of the representatives will
simplify the interpretation and, in particular, will allow us to
prove a result which we can conveniently formulate as (we shall give
precise definitions of all the terms involved, see Theorems
\ref{vkw} and \ref{vtw})

\begin{theorem}
Every proper orthochronous  isometry of Minkowski spacetime, which
sends some point to its chronological future, can be accomplished
through the frame dragging of spacetime points, where the frame is
that of an observer which moves with constant angular velocity and
constant acceleration for some proper time interval. The observer
can be chosen so that the acceleration and angular velocity are
either aligned or perpendicular. In the latter case the angular
velocity can be chosen no greater than the acceleration.

Finally, there are two cases. If the observer's motion is of pure
rotation, the proper time interval of motion duration and the
angular velocity are uniquely determined, while if the observer's
motion cannot be chosen to be a pure rotation, then the proper time
interval can be chosen arbitrarily, and after this choice the
modules of the acceleration and angular velocity are uniquely
determined (Eqs. (\ref{cf1})-(\ref{cf3})).
\end{theorem}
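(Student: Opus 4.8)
\noindent The plan is to reduce the statement to its infinitesimal form and then to read the observer off a conveniently normalized representative of the conjugacy class.

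First I would use the exponentiality of the proper orthochronous inhomogeneous Lorentz group, proved above, to write the given isometry as $g=\exp\xi$ for some $\xi\in\mathfrak{iso}(3,1)$, i.e.\ for some Killing vector field $\xi$ of Minkowski spacetime. Since $\xi$ is Killing, $\xi\cdot\xi$ is constant along each orbit of $\xi$, so if $\xi$ is future directed and timelike at a point $p$ then the orbit through $p$ is a future directed timelike curve and $\exp(t\xi)\,p\in I^{+}(p)$ for all $t>0$, by transitivity of $I^{+}$ in Minkowski spacetime. Conversely, the description of the causal semigroup and its Lie cone together with the Ad-invariants introduced above shows that an isometry moves some point strictly to its future precisely when its conjugacy class contains such a representative, namely when it is one of the classes that ``admit a causal representative''. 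Thus the hypothesis on $g$ confines us to that list of conjugacy classes, and it suffices to treat one simple representative in each of them.

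Next I would invoke the classification of the conjugacy classes of $\mathfrak{iso}(3,1)$ obtained above from the relativistic Euler theorem for the homogeneous part together with the new translation-type Ad-invariant, and pick, for each causal class, the simple Killing field $\xi_{0}$ exhibited in Theorems \ref{vkw} and \ref{vtw}. In every such $\xi_{0}$ the homogeneous part is a boost of modulus $a$ together with a rotation of modulus $\omega$, and the two axes are either aligned --- this is forced in the generic (loxodromic) case, where the homogeneous invariant $\vec E\cdot\vec B\neq0$ excludes a perpendicular choice, and it also covers the degenerate subcases $a=0$ or $\omega=0$ --- or perpendicular, in which case the sign of the homogeneous invariant $\vec E^{2}-\vec B^{2}$ yields $|\omega|\le|a|$, with strict inequality for a hyperbolic homogeneous part and equality exactly in the parabolic (null rotation) case. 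When present, the translation part of $\xi_{0}$ lies in the fixed subspace of its homogeneous part, as dictated by the translation Ad-invariant.

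It then remains to realize each $\xi_{0}$ as the frame dragging of an observer with constant acceleration and constant angular velocity. I would take as the observer's worldline an orbit of $\xi_{0}$ through a point $p$ at which $\xi_{0}$ is future timelike --- such points form a nonempty open set, by the first step --- parametrized by proper time. Because $\xi_{0}$ is Killing, the proper acceleration and the comoving angular velocity are constant along this orbit, and a direct computation identifies their moduli with the parameters $a$ and $\omega$ appearing in $\xi_{0}$. Since $\xi_{0}\cdot\xi_{0}$ is constant on the orbit, the frame dragging over proper time $\tau$ equals $\exp(t\xi_{0})$ with $t=\tau/|\xi_{0}(p)|$; hence $g=\exp\xi_{0}$ is recovered for $\tau=|\xi_{0}(p)|$. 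When the homogeneous part is not of pure rotation type, the timelike region of $\xi_{0}$ contains orbits on which $|\xi_{0}|$ takes every positive value --- for a pure boost these are the various Rindler observers --- so $\tau$ may be prescribed arbitrarily, and after this choice the invariants of $g$ and $\tau$ determine $|a|$ and $|\omega|$, which is the content of Eqs.\ (\ref{cf1})--(\ref{cf3}).

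Finally, in the remaining case the conjugacy class is that of a spatial rotation (possibly trivial) composed with a future timelike translation, so the translation Ad-invariant is timelike; $\xi_{0}$ is then timelike only in a tube around the rotation axis and the observer's motion reduces to a pure rotation exactly on the central orbit, the axis itself. On that orbit $|\xi_{0}|$ equals the length of the translation invariant, an isometry invariant of $g$, so $\tau$ is determined, and the rotation angle --- normalized, say, to $(0,2\pi)$ --- then fixes $\omega$; thus both are unique. I expect the main obstacle to be exactly this explicit realization for each class: solving for the worldline with the prescribed constant acceleration and angular velocity, checking that it generates the one-parameter group $t\mapsto\exp(t\xi_{0})$, and verifying that one of the two advertised shapes of representative --- in particular the normalization $|\omega|\le|a|$ in the perpendicular case --- is available for every causal conjugacy class.
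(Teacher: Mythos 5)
Your proposal is correct in outline and follows essentially the paper's route: exponentiality of $ISO(1,3)^{\uparrow}$ (Theorem \ref{biq}), reduction of the causality hypothesis to the existence of an Ad-orbit representative in the timelike wedge, classification of the Ad-orbits (Theorem \ref{vkz}), and normalization of the surviving representatives. Be aware that the ``conversely'' in your first paragraph is exactly the nontrivial content of Theorem \ref{kug}: the point that $\exp\xi$ sends to its chronological future need not be a point where $\xi$ itself is causal, so one really must prove that the Ad-orbit nonetheless meets $L(I)$; you correctly attribute this to the causal-semigroup machinery but it deserves to be flagged as a genuine lemma rather than a consequence of the invariants.

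Where you genuinely diverge is in the realization step. The paper obtains the observer by explicit conjugations (translations, boosts, rotations) that bring each representative of Theorem \ref{vkz} to the forms $(\vec a\cdot\vec K+\vec\omega\cdot\vec J+H)\tau$ of Theorem \ref{vkw}, the freedom in $\tau$ appearing as the freedom in the translation fixing $w^{0}$. You instead read the observer off intrinsically: pick an orbit on which the Killing field $\xi_{0}$ is timelike, use constancy of $\xi_{0}\cdot\xi_{0}$ along orbits to get the proper-time duration $\tau=|\xi_{0}(p)|$, and vary the orbit to vary $\tau$. This is an attractive, coordinate-free way to see the $\tau$-freedom, the relations (\ref{cf1})--(\ref{cf3}), and the pure-rotation dichotomy ($|\xi_{0}|$ is constant on the timelike region precisely in the elliptic and pure-translation cases). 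What it does \emph{not} deliver by itself is the aligned/perpendicular normalization of $\vec a$ and $\vec\omega$ and the bound $\omega\le a$ in the perpendicular case: those still require the case-by-case frame adaptation that constitutes the bulk of the paper's proof of Theorem \ref{vkw}, and which you explicitly defer as ``the main obstacle.'' One side remark is also wrong: the translation part of the canonical $\xi_{0}$ does \emph{not} always lie in the fixed subspace of its homogeneous part --- this fails for the unipotent transverse class (case 7 of Theorem \ref{vkz}, where $\bar w\propto e_{0}\notin\textrm{Ker}\,F$) and for the forms (a)--(c) of Theorem \ref{vtw} themselves, whose translation part is $H$ while $e_{0}$ is moved by the boost. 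This slip is not load-bearing for your argument, but it should be removed.
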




Up to the freedom in the time duration, the acceleration and angular
velocity are uniquely determined, thus they can be regarded as
genuine characteristics of the isometry. With respect to the
classical Chasles' theorem, here we need to impose a casuality
condition, indeed,  space translations are not generated by
observer's motions while they satisfy  the other hypothesis. As a
consequence, we shall need some results on the way causality
reflects itself on the Lie algebra. This will be done identifying a
causal  Lie semigroup and studying the corresponding Lie cone.

This paper is organized as follows.

In section 2 we recall the classical Euler's and Chasles' theorems,
both in the Lie group and Lie algebra versions. We notice here that
in order to recover the original transformation we need information
of algebraic and geometric type, the  former being identified with
the conjugacy class (Ad-orbit, in the Lie algebra case) and the
latter being identified with the screw axis. We also introduce the
screw product on the Lie algebra as its generalization will provide
a new Ad-invariant for the relativistic case.

In section 3 we study the Lorentz group introducing the usual
Ad-invariants for the Lie algebra, and recalling the  classification
of the Lie orbits and conjugacy classes. We also identify the
geometric data needed to recover the full (infinitesimal)
transformation.

In section 4 we come to the inhomogeneous Lorentz group. In section
4.1 we introduce a causal semigroup of $ISO(1,3)^\uparrow$, showing
its connection with isometries which send some point to its causal
future. In section 4.2 we introduce our conventions and clarify the
physical meaning of the Lie algebra generators. This section will be
essential for the correct interpretation of subsequent results. In
particular we explain the importance of linear combinations of the
form $\vec{a}\cdot \vec{K}+\vec{\omega}\cdot \vec{J}+H$, where the
translational generators $\vec{P}$ do not appear. Indeed, we
interpret these combinations as the allowed generators for the
observer's motion. The relativistic Chasles' theorem will ask not
only to prove that the generic transformation is the exponential of
some infinitesimal generator, a fact proved in section 4.3, but also
that the generator is of the mentioned form up to conjugacy.

In section 4.4 we  introduce a  set of Ad-invariants which allow us
to completely classify the orbits of the adjoint action of
$ISO(1,3)^\uparrow$, on $\mathfrak{iso}(1,3)$. This classification
implies a classification of the inequivalent Killing fields of
Minkowski spacetime. We clarify the relation between our
classification and a slightly coarser one previously obtained by T.
Barbot \cite{barbot05}.

In section 4.5 we introduce the Lie cone of the causal semigroup. We
answer the following question: given two frames (bases) in
spacetime, with the application point of the latter  in the
chronological future of the former, is it always possible to regard
them as the initial and final states of a comoving base attached to
an observer which rotates and accelerates with constant angular
velocity and acceleration for some proper time interval? The answer is negative unless the last
frame is contained in a spacetime cone which is narrower than the
light cone and which depends on the required angular velocity and
acceleration.

In section 4.6 we show that the Lie Ad-orbits can be given a causal
character depending on whether some representative belongs to the
causal Lie cone. The causal orbits are in a way reminiscent of the
classification of elementary particles (indeed, at least for finite
groups, there is a bijection between conjugacy classes and
irreducible linear representations). With theorem \ref{vkw} we
obtain the relativistic generalization of Chasles' theorem, in the
Lie algebra formulation. Finally, in section 4.7 we give the group
version.

Concerning our conventions, the indices $i,j,k$, take the values
$1,2,3$, while the indices $a,b,c$ or $\alpha,\beta,\gamma$, take
the values $0,1,2,3$.
We adopt the Einstein summation convention, and our signature for
the Minkowski metric $\eta_{ab}$ is $(-,+,+,+)$. A vector $v$ is
causal (timelike) if $\eta(v,v)\le 0$ (resp. $<0$) and $v\ne 0$. The
vector is nonspacelike if it is causal or $v=0$. A vector is
lightlike (null) if it is causal (resp. nonspacelike) but not
timelike. The chronological future $I^{+}(x)$ of $x\in M$ is
made by all the points that can be reached from $x$ following future
directed (f.d.) timelike curves. The causal future is denoted
$J^{+}(x)$ and includes $x$ plus all the point that can be reached
from $x$ following  f.d.\  causal curves. For shortness, we shall
sometimes use the word {\em direction} when referring to a
1-dimensional subspace of a vector space. We use units such that
$c=1$, where $c$ is the speed of light. The subset symbol $\subset$
is reflexive, i.e. $X \subset X$.

For background on the inhomogeneous Lorentz group the reader might consult \cite{carmeli77,weinberg95,sexl01}.

\section{Euler's and Chasles' theorems}

Let us formulate Chasles' theorem in  mathematical language. Let $E$
be the Euclidean space. This means that $E$ is an affine space
modeled over a 3-dimensional vector space $(V,\cdot, or)$, endowed
with a positive definite scalar product $\cdot: V\times V\to
\mathbb{R}$, and orientation $or$. A reference frame is a choice of
origin $o\in E$ plus a positive oriented orthonormal base $\{
\vec{e}_i, i=1,2,3\}$ of $V$. Given a reference frame, every point
$p\in E$ can be written in a unique way in terms of coordinates as
follows $p=o+x^i \vec{e_i}$. The coordinate vector belonging to
$\mathbb{R}^3$ will be denoted using a bar, e.g. $\bar{x}$.

The rigid motion $\psi: E\to E$ can be lifted to the bundle of
reference frames as follows: $(o, \{\vec{e}_i\}) \to (\psi(o),
\{\psi_{*}(\vec{e}_i)\})$. To this change of frame corresponds an
affine change of coordinates given by
\begin{equation} \label{nug}
\begin{pmatrix} \bar{x}' \\ 1 \end{pmatrix} =\begin{pmatrix} O & \bar{b} \\ \bar{0}^\intercal & 1 \end{pmatrix} \begin{pmatrix} \bar{x} \\ 1 \end{pmatrix}
\end{equation}
where $O$ is a special orthogonal matrix. Suppose that we perform a
change of reference frame to which corresponds a change of
coordinates given by  (rigid map)
\[
\begin{pmatrix} U & \bar{a} \\
\bar{0}^\intercal & 1 \end{pmatrix}, \qquad U\in SO(3),
\]
then in the new frame the original rigid motion gets represented by
the coordinate transformation matrix
\[
\begin{pmatrix} U & \bar{a} \\
\bar{0}^\intercal & 1 \end{pmatrix} \begin{pmatrix} O & \bar{b} \\
\bar{0}^\intercal
& 1 \end{pmatrix} \begin{pmatrix} U & \bar{a} \\
\bar{0}^\intercal & 1 \end{pmatrix}^{-1} .
\]
Chasles' theorem states that the new reference frame can be chosen
in such a way that the rigid motion in the newly defined coordinates
is
%
%
\[{\small
\begin{pmatrix} x_1 '\\ x_2' \\x_3' \\ 1 \end{pmatrix} =\begin{pmatrix} 1 & 0 & 0 & -b \\ 0 & \cos \theta & \sin \theta & 0 \\
0 & -\sin \theta & \cos \theta & 0 \\
 0 & 0 & 0 & 1 \end{pmatrix} \begin{pmatrix} x_1 \\ x_2 \\ x_3 \\ 1
 \end{pmatrix},}
\]
where $b\in \mathbb{R}$ and $\theta\in [0,2\pi)$.  In other words,
the motion is a rotation about the first axis followed by a
translation of $b$ along the same axis (since these two operations
commute their order is irrelevant). If $\theta\ne 0$, the constant
$p\in \mathbb{R}$ such that $b=\frac{p}{2\pi}\, \theta$, is called
{\em pitch}.

It should be noted that a $\pi$-rotation of the reference frame on
the plane $\textrm{Span}(e_1,e_2)$ changes the sign of both $\theta$
and $b$. This operation makes it possible to choose the sign of $b$
or to impose $\theta \in [0,\pi]$. We shall impose $b>0$ whenever
$b\ne 0$, and $\theta \in [0,\pi]$ whenever $b=0$.

Algebrically, Chasles' theorem states that every conjugacy class in
the matrix group of maps given by Eq. (\ref{nug})  has a
representative of the above simplified form.

Here we are also interested in the reconstruction of the original
rigid motion starting from the conjugacy class and other geometric
data. The key observation is that by suitably limiting their
domains, the parameters $\theta$ and $b$ can be uniquely determined,
as they turn out to be independent of the simplifying reference
frame. In the same way, if $\theta\ne 0$ the first axis of the
simplifying reference frame does not depend on the frame (this is
the characteristic {\em axis of rotation}). Thus each rigid map
determines invariants of algebraic type (conjugacy class) and of
geometrical type. Once put together they allow us to fully recover
the rigid motion. Table \ref{table2} summarizes the families of
conjugacy classes, the relevant parameters and their domain, the
interpretation, and the necessary geometric ingredients needed to
recover the isometry given the conjugacy class (parameters).

For instance, line (c3) clarifies that we cannot recover the rigid
motion from the only information that it is a translation (i.e. type
(c3)) of module $b>0$. We need an additional normalized vector
belonging to $V$ which defines the direction of the translation
(indeed, in case (c3) the simplifying reference frame can be freely
translated, thus there is no characteristic line but only a
characteristic oriented direction). Similarly, if we know that the
isometry is a composition of a rotation and a translation
($\theta\in (0,2\pi), b>0$), then we need an oriented line in order
to recover the rigid motion. In the special case of a rotation of
angle $\pi$ with $b=0$, the orientation of the line is not needed
(indeed, the first axis of the simplifying reference frame can point
in both directions of the line).

We shall not comment these characterizations further as similar
considerations will be made for the relativistic generalization. We
end the section commenting  table \ref{table1} in which we lists the
conjugacy classes and the characteristic geometric invariants needed
to reconstruct the isometry in  Euler's case. It is worth noting
that if the direction and verse of the rotation are represented
using a normalized vector $v\in V$ then, joining the angle $\theta$
and this geometric object into $\theta v \in V$, we can represent
the  Lie group with a ball of radius $\pi$, in which opposite points
in the exterior spherical surface have been identified. This is a
well know geometrical representation of the group of rotations. This
construction shows that the conjugacy classes correspond to the
spherical surfaces inside the ball, the origin (the conjugacy class
of the identity), and the real projective plane of its surface (the
conjugacy class of $\pi$-rotations).

\subsection{Infinitesimal (Lie algebra) formulation and screw
product} \label{kwi}

The rigid motions $\psi:E \to E$ form a Lie group $R$. Near the
identity the exponential map is a diffeomorphism, thus there is some
element $v$ of the Lie algebra $\mathfrak{R}$ such that $\psi=\exp
(v s)$ for $s=1$. Every point $p\in E$ gives an orbit $s\to
\exp(vs)(p)$ and hence determines a vector tangent at $p$ which we
denote $v(p)$. Conversely, $v(p)$ determines a one-parameter group
of diffeomorphisms $\psi_s:E\to E$, and  $\psi=\psi_1$. Thus we may
identify the Lie algebra element $v$ with the vector field (denoted
in the same way) $v:E\to V$.

Suppose we have chosen a reference frame. The matrix  transformation
 $\psi=\exp (v
\epsilon)$ for small $\epsilon$ induces the coordinate change
\begin{equation} \label{nul}
\begin{pmatrix} \bar{x}' \\ 1
\end{pmatrix} =[I+\epsilon \begin{pmatrix} \Omega & \bar{c} \\
\bar{0}^\intercal & 0 \end{pmatrix}]
\begin{pmatrix} \bar{x} \\ 1 \end{pmatrix}
\end{equation}
where $\Omega\in \mathfrak{so}(3)$, i.e. it is a antisymmetric
matrix, while $\bar{c}$ is a 3-vector. Thus we can also identify
$\mathfrak{R}$ with $\mathfrak{iso}(3)$, namely the space of
matrices
of the form ${\footnotesize \begin{pmatrix} \Omega & \bar{c} \\
\bar{0}^\intercal & 0 \end{pmatrix}}$. This Lie algebra isomorphism
depends on the reference frame, as the matrix representing the
infinitesimal transformation changes under the Ad map of $ISO(3)$ on
$\mathfrak{iso}(3)$ for changes of frame.

Let us find the corresponding Lie algebra vector field. Let us
consider a point $q$ of coordinates $\bar{x}$ on the given starting
frame $\{\vec{e}_i\}$. This point is sent to $\psi(q)$, where
$\psi(q)$ is the point with the same coordinates $\bar{x}$ but in
the image frame $(o, \{\vec{e}_i\}) \to (\psi(o),
\{\psi_{*}(\vec{e}_i)\})$. This means that for the starting frame
$\psi(q)$ has coordinates $\bar{y}=\bar{x}-(\Omega
\bar{x}+\bar{c})\epsilon$. Thus the vector field $v:E\to V$  is
\[
v=-(\Omega_{ij} x^j +c^i)\vec{e}_i.
\]
We observe that a vector field satisfies the above equation for some
$\Omega\in \mathfrak{so}(3)$ and $\bar{c}$, if and only if there is
a vector $\vec\omega\in V$ such that for every $p,q\in E$
\begin{equation} \label{mfz}
v(p)-v(q)=\vec{\omega}\times (p-q).
\end{equation}
The previous equation is the {\em  constitutive equation of screws}
where a screw is nothing but a Lie algebra element of the group of
rigid motions. It can be shown that if a vector field is a screw
then $\vec\omega$ is uniquely determined. We call it the {\em screw
resultant}. If $\vec{\omega}\ne\vec{0}$ there is also a
characteristic line on $E$ called {\em screw axis}, which is the
locus at which $\vert v(p)\vert$ attains the minimum
\cite{minguzzi12,selig05}.

As we mentioned, the orbits of the Ad-action on $\mathfrak{iso}(3)$
might admit particularly simple representatives. This action
corresponds to frame changes, thus the choice of matrix
representative corresponds to a convenient frame choice. In
particular, we can obtain a simple representative choosing a frame
with the origin on the screw axis and first base element $\vec{e}_1$
aligned with the axis. In this way it is easy to show that the
representative takes the forms (Lc1) and (Lc2)  given by table
\ref{table2b}, respectively in case $\vec{\omega}=\vec{0}$ and in
case $\vec{\omega}\ne\vec{0}$.

On the Lie algebra of the group of rigid motions it is possible to
define an important Ad-invariant indefinite inner product called
{\em screw product}. Given two screws $v_1,v_2:E\to V$ we define
\begin{equation}
\langle v_1,v_2\rangle:= v_1(p)\cdot
\vec{\omega}_2+\vec{\omega}_1\cdot v_2(p).
\end{equation}
By using equation (\ref{mfz})  it can be easily shown that the
definition is well posed as the right-hand side is independent of
$p$. The screw product is particularly important in rigid body
dynamics were the kinetic energy and the power action on a rigid
body can be expressed through it \cite{minguzzi12,selig05}. Contrary
to a possible naive expectation, the screw product differs from the
Killing form of the Lie algebra \cite{minguzzi12} (which is instead
proportional to $\vec{\omega}_1\cdot \vec{\omega}_2$, namely the
scalar product of the resultants).

In a given reference frame the screw is determined by the pair
($\Omega$, $\bar{c}$). A calculation at the origin of the reference
frame shows that the screw product is given by
\begin{equation} \label{bsp}
\langle v_1,v_2\rangle=\frac{1}{2}[\epsilon_{ijk}\Omega^{(1)}_{ij}
c^{(2)}_k+\epsilon_{ijk}\Omega^{(2)}_{ij} c^{(1)}_k].
\end{equation}
It is clear that this expression is invariant under rotations of the
frame. Under translations the $\Omega$-s are left invariant while
the $\bar{c}$ terms change as follows $ c^{(2)}_k\to
-\Omega^{(2)}_{kj} b_j+c^{(2)}_k$, $ c^{(1)}_k\to -\Omega^{(1)}_{kj}
b_j+c^{(1)}_k$. The additional terms cancel out, hence the screw
product is Ad-invariant (for a different proof see
\cite{minguzzi12}). In section \ref{bos} we shall meet a kind of
relativistic generalization of the invariant (\ref{bsp}).


\begin{table}[ht]
\caption{Euler's theorem  and reconstruction (Lie algebra version)}
{\small
\begin{tabular}{lcccc}
\\
\hline
Type & \parbox{3cm}{\begin{center}Families of \\ orbits \end{center}} & Parameters & Description & \parbox{1.7cm}{Geometric ingredients} \\
\hline
\\
(Le1) & {\footnotesize $\begin{pmatrix} 0 & 0 & 0  \\
0 &  0 & \theta  \\
0 & - \theta & 0
  \end{pmatrix}$} & $\theta \ne 0$ & \parbox{1.7 cm}{rotation field} & \parbox{1.7 cm}{direction and   verse} \\
  \\
 \hline
 \\
\end{tabular} }
\label{table1b}

\caption{Euler's theorem  and reconstruction (Group version)}
{\small
\begin{tabular}{lcccc}
\\
\hline
Type & \parbox{4cm}{\begin{center}Families of \\ conjugacy classes \end{center}} & Parameters & Description & \parbox{1.7cm}{Geometric ingredients} \\
\hline
\\
(e1) & {\footnotesize $\begin{pmatrix} 1 & 0 & 0  \\
0 & \cos \theta & \sin \theta  \\
0 & -\sin \theta & \cos \theta
  \end{pmatrix}$} & $\theta\in (0,\pi)$ & rotation & \parbox{1.7 cm}{direction and   verse} \\
 \\
(e2) & {\footnotesize $\begin{pmatrix} 1 & 0 & 0  \\ 0 & -1 & 0  \\
0 & 0 & -1  \\
 \end{pmatrix}$} & [none] & $\pi$-rotation & \parbox{1.7cm}{direction}  \\
 \\
 \hline
 \\
\end{tabular} }
\label{table1}
\end{table}

\begin{table}[ht]

\caption{Chasles' theorem and reconstruction (Lie algebra version)}
{\small
\begin{tabular}{lcccc}
\\
\hline
Type & \parbox{4cm}{\begin{center}Families of \\ orbits \end{center}} & Parameters & Description & \parbox{1.7cm}{Geometric ingredients} \\
\hline
 \\
(Lc1) & {\footnotesize$\begin{pmatrix} 0 & 0 & 0 & -b \\ 0 & 0 & 0 & 0 \\
0 & 0 & 0 & 0 \\
 0 & 0 & 0 & 0 \end{pmatrix}$} & $b > 0$ &  \parbox{1.7 cm}{translation field} & \parbox{1.7 cm}{direction and   verse} \\
 \\
(Lc2) & {\footnotesize$\begin{pmatrix} 0 & 0 & 0 & -b \\ 0 & 0 &  \theta & 0 \\
0 & - \theta & 0 & 0 \\
 0 & 0 & 0 & 0 \end{pmatrix}$} & $\theta\ne 0$ & \parbox{1.7 cm}{screw field} & \parbox{1cm}{oriented \\ line} \\
 \\
 \hline\\
\end{tabular} }
\label{table2b}

\caption{Chasles' theorem and reconstruction (Group version)}
{\small
\begin{tabular}{lcccc}
\\
\hline
Type & \parbox{4cm}{\begin{center}Families of \\ conjugacy classes \end{center}} & Parameters & Description & \parbox{1.7cm}{Geometric ingredients} \\
\hline
\\
(c1) & {\footnotesize$\begin{pmatrix} 1 & 0 & 0 & 0 \\ 0 & \cos \theta & \sin \theta & 0 \\
0 & -\sin \theta & \cos \theta & 0 \\
 0 & 0 & 0 & 1 \end{pmatrix}$} & $\theta\in (0,\pi)$ & rotation & \parbox{1cm}{oriented \\ line} \\
 \\
(c2) & {\footnotesize$\begin{pmatrix} 1 & 0 & 0 & 0 \\ 0 & -1 & 0 & 0 \\
0 & 0 & -1 & 0 \\
 0 & 0 & 0 & 1 \end{pmatrix}$} & [none]  & $\pi$-rotation & \parbox{1cm}{line}  \\
 \\
(c3) & {\footnotesize$\begin{pmatrix} 1 & 0 & 0 & -b \\ 0 & 1 & 0 & 0 \\
0 & 0 & 1 & 0 \\
 0 & 0 & 0 & 1 \end{pmatrix}$} & $b > 0$ & translation & \parbox{1.7 cm}{direction and   verse} \\
 \\
(c4) & {\footnotesize$\begin{pmatrix} 1 & 0 & 0 & -b \\ 0 & \cos \theta & \sin \theta & 0 \\
0 & -\sin \theta & \cos \theta & 0 \\
 0 & 0 & 0 & 1 \end{pmatrix}$} & \parbox{1.5cm}{$b>0, \\ \theta\in (0,2\pi)$} & screw & \parbox{1cm}{oriented \\ line} \\
 \\
 \hline
\end{tabular} }
\label{table2}
\end{table}

\clearpage

\section{The Lorentz group} \label{tre}

Let $M$ be Minkowski spacetime, namely an affine space modeled over
the vector space $W$, where $(W,\eta, or, \uparrow)$ is a
4-dimensional vector space endowed with an inner product $\eta$ of
signature $(-,+,+,+)$, an orientation $or$, and a time orientation
$\uparrow$ (namely a choice of future and hence past timelike cone).
A vector in the future cone will be called {\em future directed},
f.d.\ for short. The {\em proper orthochronous Lorentz group}
$L^{\uparrow}_{+}$ is given by the set of automorphisms of $W$ which
respect both the orientation and the time orientation (an
automorphism  respects the time orientation if it sends the future
timelike cone into itself). The inhomogeneous proper orthochronous
Lorentz group $IL^{\uparrow}_{+}$ is made by the maps $P: M\to M$,
which preserve the inner product $\eta$, the orientation, and which
respect the time orientation. It can be shown (this fact can also be
deduced from Alexandrov and Zeeman's theorem
\cite{alexandrov50,alexandrov67,alexandrov67,zeeman64} on causal
automorphisms) that they are affine maps, namely they satisfy
$P(p+w)=P(p)+\Lambda(w)$, for every $p\in M$, $w\in W$, where
$\Lambda \in L^{\uparrow}_{+}$.

A {\em proper orthochronous orthonormal base} for $W$, is a
positively oriented tetrad $\{e_a; a=0,1,2,3\}$, such that $e_0$ is
timelike future directed and $\eta (e_a, e_b)=\eta_{ab}$ where
$\eta_{00}=-1$, $\eta_{ii}=1$, $i=1,2,3$, and the other values
vanish. Sometimes we shall refer to these bases as {\em reference
frames}. Once a reference frame has been chosen, any vector $w\in W$
can be written as $w=w^a e_a$ for some components $w^a\in
\mathbb{R}$, $a=0,1,2,3$.

Let $e_a'=\Lambda(e_a)$, then $\{e_a'\}$ is also a frame which can
be expressed in terms of the old base as $e_a'=(\Lambda^{-1})^b_{\
a} e_b$.  The change of reference frame induces a change in the
components of a vector $w\in W$ as follows  ${w^a}'=\Lambda^a_{\ b}
w^b$.
The choice of proper orthochronous orthonormal base  establishes an
isomorphism between the  Lorentz group $L^{\uparrow}_{+}$ and the
matrix proper orthochronous Lorentz group $SO(1,3)^{\uparrow}$ given
by the $4\times 4$ matrices $\Lambda^a_{\ b}$ such that
$\eta_{cd}=\eta_{ab} \Lambda^a_{\ c} \Lambda^b_{\ d}$,
$\textrm{det}(\Lambda^a_{\ b})=1$ and $\Lambda^0_{\ 0}>0$. Let us
focus on the action of $\Lambda$ on a different frame
$\tilde{e}_d=(L^{-1})^c_{\ d} e_c$. Let
$\tilde{e}_d'=\Lambda(\tilde{e}_d)=({\tilde{\Lambda}}^{-1})^c_{\ d}
\tilde{e}_c$, then $\tilde{\Lambda}^c_{\ d}=L^{c}_{\ a} \Lambda^a_{
\ b} (L^{-1})^{b}_{\ d}$. Thus, a change of frame acts as an
automorphism $g\to c g c^{-1}$ of $SO(1,3)^{\uparrow}$.

\subsection{The Lie algebra and its orbits} \label{igk}

The Lie algebra of the proper orthochronous Lorentz group
$L^{\uparrow}_+$ is given by the skew-symmetric linear maps $F: W\to
W$, that is by those maps such that, for every $w,v\in W$,
$\eta(v,Fw)+\eta(Fv,w)=0$. Any reference frame establishes a Lie
algebra isomorphism between this Lie algebra and the Lie algebra
$\mathfrak{so}(1,3)$ of the matrix group $SO(1,3)^{\uparrow}$
($\mathfrak{so}(1,3)^\uparrow$ and $\mathfrak{so}(1,3)$ coincide
because $SO(1,3)^{\uparrow}$ is the connected component of $O(1,3)$
which contains the identity)). As it is well known, $F^{a}_{\ b}\in
\mathfrak{so}(1,3)$ iff it is antisymmetric, $F_{ab}+F_{ba}=0$,
where the indices are lowered using $\eta_{cd}$.

The Ad-action of $SO(1,3)^\uparrow$ on $\mathfrak{so}(1,3)$ is given
by $F\to L F L^{-1}$. When $L$ runs over $SO(1,3)^\uparrow$ we get
an orbit of the Ad action on the Lie algebra. Each conjugacy
transformation represents a change of frame, thus by looking at a
convenient representative in the orbit we are looking at  frame
which simplifies the matrix expression of the infinitesimal
transformation.


The next result has long been established especially in connection
with electromagnetism (where  $F$ represents an electromagnetic
field). It can be regarded as a relativistic infinitesimal (i.e. Lie
algebra) version of Euler's theorem.

\begin{theorem} \label{vkx}
Let $F:W \to W$ be a skew-symmetric linear map, then it is possible
to choose a proper orthochronous orthornormal base $\{e_a\}$ such
that the endomorphism $F$ takes one of the following matrix forms
\begin{align*}
(a)  \quad A=
{\footnotesize \begin{pmatrix}
 0 & -\varphi  & 0 & 0 \\
 -\varphi & 0 & 0 & 0 \\
 0 & 0 & 0 & \theta \\
 0 & 0 & -\theta & 0
\end{pmatrix}};\qquad (b)
\quad B=
{\footnotesize\begin{pmatrix}
 0 & 0 & -\alpha & 0 \\
 0 & 0 & -\alpha & 0 \\
 -\alpha & \alpha & 0 & 0 \\
 0 & 0 & 0 & 0
\end{pmatrix}},
\end{align*}
where $\varphi > 0$, $\theta \in \mathbb{R}$, or $\varphi= 0$,
$\theta \ge 0$ and where $\alpha\in \mathbb{R}$ can be chosen at
will provided $\alpha \ne 0$. Stated in another way, the orbits of
$\mathfrak{so}(1,3)$ under the Ad action of $SO(1,3)^\uparrow$ admit
one and only one of the representatives given above (apart for the
mentioned freedom in $\alpha$) (the trivial orbit of the origin
contains only the zero matrix).

Defined the invariants
\begin{align*}
I_1&=\frac{1}{4}F_{a b} F^{a b}=-\frac{1}{4} \textrm{Tr} F^2,\\
I_2&= -\frac{1}{4}\epsilon_{a b c d} F^{a b} F^{c d},
\end{align*}
where $\epsilon_{0123}=1$, we have $I_1=(\theta^2-\varphi^2)/2$,
$I_2=\theta \varphi$, thus it is possible to read the orbit
calculating
\begin{align}
\varphi&=\sqrt{-I_1+\sqrt{I_1^2+I_2^2}}, \label{vt1}\\
\theta&=\textrm{sgn}(I_2) \sqrt{I_1+\sqrt{I_1^2+I_2^2}}, \label{vt2}
\end{align}
(where $\textrm{sgn}(0)=1$) provided $\varphi$ or $\theta$ is
different from zero (i.e. if we happen to be in case (a) where at
least one of the invariant does not vanish). The map $F$ is
non-singular if and only if $I_2\ne 0$.
 \end{theorem}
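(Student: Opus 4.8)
The plan is to exploit the two standard $\mathrm{Ad}$-invariants $I_1,I_2$ together with a concrete normal-form construction, distinguishing the non-singular case ($I_2 \ne 0$, equivalently $\det F \ne 0$) from the singular case. First I would recall the familiar decomposition of a skew-symmetric $F$ into an ``electric'' part $\vec E$ (the $F^{0i}$ components) and a ``magnetic'' part $\vec B$ (the $F^{ij}$ components), under which $I_1 = \tfrac12(\vec B^2 - \vec E^2)$ and $I_2 = \vec E\cdot\vec B$. The key algebraic fact is that $I_1,I_2$ are the only invariants, so two elements of $\mathfrak{so}(1,3)$ are $\mathrm{Ad}$-conjugate essentially when these agree (with the caveat about the parameter $\alpha$ in the degenerate case). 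I would therefore organize the proof around producing, for prescribed $(I_1,I_2)$, an explicit representative and then arguing uniqueness of the orbit via the invariants.

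The main case $I_2 \ne 0$: here $F$ is non-singular, and I claim there is a timelike plane $\Pi_1$ and a spacelike plane $\Pi_2$, orthogonal complements of each other, each invariant under $F$. The cleanest route is to consider the complexified action: $F$ acting on $W\otimes\mathbb{C}$ has eigenvalues $\{\pm\varphi, \pm i\theta\}$ where $\varphi,\theta$ are defined by (\ref{vt1})--(\ref{vt2}); one checks directly that $\varphi^2$ and $-\theta^2$ are the two roots of the characteristic-type quadratic $\lambda^2 + 2I_1\lambda - I_2^2 = 0$ in the variable $\lambda$ playing the role of (eigenvalue)$^2$, which forces exactly the values in (\ref{vt1})--(\ref{vt2}), and $I_2 \ne 0$ guarantees the two roots are distinct and nonzero so the eigenspaces split $W$ as a genuine direct sum $\Pi_1\oplus\Pi_2$. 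Since $F$ is $\eta$-skew and the two $F$-invariant subspaces carry different eigendata, they are $\eta$-orthogonal; because $F|_{\Pi_1}$ has real nonzero eigenvalues $\pm\varphi$ it cannot preserve a definite form, so $\eta|_{\Pi_1}$ has signature $(-,+)$, whence $\eta|_{\Pi_2}$ is positive definite. Choosing an orthonormal basis of $\Pi_1$ (one timelike, one spacelike) and of $\Pi_2$, fixing signs and the time orientation of $e_0$, and orienting the tetrad, puts $F$ in the block form $(a)$; the residual freedom (flipping the sign of $\theta$ by an orientation-reversing-within-a-block adjustment that is compensated elsewhere, and the branch of the square root) is exactly what is absorbed by the stated convention $\varphi > 0$, $\theta\in\mathbb{R}$. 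Conversely, reading off $I_1 = (\theta^2 - \varphi^2)/2$ and $I_2 = \theta\varphi$ from form $(a)$ and inverting gives (\ref{vt1})--(\ref{vt2}), which shows distinct $(\varphi,\theta)$ in the allowed range give distinct invariants, hence distinct orbits.

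The degenerate case $I_2 = 0$: now the quadratic $\lambda^2 + 2I_1\lambda = 0$ has a root $\lambda = 0$, so $F$ is singular and has a nontrivial kernel. If moreover $I_1 > 0$ (``magnetic-dominated''), one finds a spacelike axis fixed by $F$ and $F$ restricted to its orthogonal timelike 3-space is a rotation-type piece giving case $(a)$ with $\varphi = 0$, $\theta = \sqrt{2I_1} \ge 0$; if $I_1 < 0$ (``electric-dominated''), symmetrically one gets a boost, case $(a)$ with $\theta = 0$, $\varphi = \sqrt{-2I_1} > 0$; and if $I_1 = 0$ as well (the genuinely null case, $\vec E \perp \vec B$ with $|\vec E| = |\vec B|$, or one of them zero), $F$ is nilpotent of rank $2$ with a null eigenvector, and one chooses a frame adapted to that null direction to land in form $(b)$ — here the scaling freedom in $\alpha \ne 0$ is real and comes from the fact that the centralizer of a null vector in $SO(1,3)^\uparrow$ contains boosts along it which rescale the nilpotent part. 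I would verify $I_1 = I_2 = 0$ for form $(b)$ by direct computation, confirming all the $\alpha \ne 0$ matrices lie in one orbit and are distinct from every $(a)$-orbit. The last sentence, that $F$ is non-singular iff $I_2 \ne 0$, then follows since $\det F$ is (up to sign) $I_2^2$ — concretely $\det(F^a{}_{\ b}) = -I_2^2$ — which I would check in a convenient basis or read off from the eigenvalue product $(\varphi)(-\varphi)(i\theta)(-i\theta) = \varphi^2\theta^2 = I_2^2$ in case $(a)$.

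I expect the main obstacle to be the bookkeeping in the degenerate case $I_2 = 0$: the splitting argument that works so cleanly for $I_2 \ne 0$ fails because the eigenvalues coincide or vanish, so one must instead analyze the rank and nilpotency structure of $F$ by hand and carefully track the orientation and time-orientation constraints that pin down which of forms $(a)$ (with a vanishing parameter) or $(b)$ one obtains; handling the null case $I_1 = I_2 = 0$ requires exhibiting the invariant null line and understanding precisely the stabilizer-induced freedom that leaves $\alpha$ undetermined. The non-singular case, by contrast, is essentially linear algebra over $\mathbb{C}$ plus a signature count, and the inversion formulas (\ref{vt1})--(\ref{vt2}) are just solving a quadratic, so I would present that part quickly and spend the bulk of the argument on the degenerate strata and on verifying that the listed representatives exhaust the orbits and are pairwise inequivalent.
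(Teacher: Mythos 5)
Your proposal is correct in outline but follows a genuinely different route from the paper. You work spectrally: you read the eigenvalues of $F$ off the (even) characteristic polynomial $\lambda^4+2I_1\lambda^2-I_2^2$, and for $I_2\ne 0$ use the four distinct eigenvalues $\pm\varphi,\pm i\theta$ to split $W$ into $\eta$-orthogonal invariant planes of signature $(-,+)$ and $(+,+)$, which immediately yields form (a) and makes Eqs.~(\ref{vt1})--(\ref{vt2}) transparent as eigenvalue formulas. The paper instead never complexifies: it writes $F$ in terms of the triples $\vec{c},\vec{b}$, rotates the spatial frame so that $\vec c,\vec b\in\mathrm{Span}(e_2,e_3)$ with $e_1$ along $\vec c\times\vec b$, and then shows by an intermediate-value argument (using $(\vec c^2+\vec b^2)^2-4|\vec c\times\vec b|^2=4(I_1^2+I_2^2)>0$) that a boost along $e_1$ can be chosen to align $\vec c'$ with $\vec b'$, reducing to form (a). Your method is cleaner and more conceptual in the generic stratum, but — as you yourself anticipate — it pushes all the difficulty into the degenerate cases: for $I_2=0$, $I_1\ne 0$ you still need to rule out a Jordan block over the double eigenvalue $0$ (this follows because the generalized $0$-eigenspace is a nondegenerate $F$-invariant $2$-plane on which a skew nilpotent operator must vanish), and for $I_1=I_2=0$, $F\ne 0$ you must carry out the rank/nilpotency analysis and exhibit the adapted null frame by hand. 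The paper's frame-adaptation disposes of the null case in one line ($\vec c\perp\vec b$, $|\vec c|=|\vec b|$, align $\vec b$ with $e_3$) and obtains the $\alpha$-rescaling freedom from the explicit boost identity (\ref{mkd}), exactly the centralizer boost you identify. Your closing verification that $F$ is non-singular iff $I_2\ne 0$ is fine (note only the harmless sign slip: the eigenvalue product in case (a) is $-\varphi^2\theta^2=-I_2^2=\det F$, consistent with your stated $\det(F^a{}_{\ b})=-I_2^2$). To make the plan fully rigorous you would only need to write out the two degenerate strata in the detail you promise; nothing in the strategy fails.
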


\begin{proof}
A proof of the first claim can be found in \cite[Sect.
2.4]{naber92}, \cite[Sect. 9.5]{synge56}, \cite[Sect.
9.3]{stephani03} or \cite{misner73}. The latter claims follow
easily. We give here a simple proof of the first claim. We start
choosing any base. The matrix of the endomorphism $F$ takes the form
${\footnotesize \begin{pmatrix}
 0 & c_1  & c_2 & c_3 \\
 c_1 & 0 & b_3 & -b_2 \\
 c_2 & -b_3 & 0 & b_1 \\
 c_3 & b_2 & -b_1 & 0
\end{pmatrix}}$. Under rotations of the reference frame the triples
$\vec{c}=(c_1,c_2,c_3)$ and $\vec{b}=(b_1,b_2,b_3)$ transform as
vectors. The invariants read $I_1=\frac{1}{2}(\vec{b}^2-\vec{c}^2)$,
$I_2=-\vec{c}\cdot \vec{b}$. We can choose the frame in such a way
that $\vec{c}\propto {e}_2$, $\vec{c},\vec{b}\in
\textrm{Span}(e_2,e_3)$, and the first axis is oriented in the
direction of $\vec{c} \times \vec{b}$ (so that $\vec{c} \times
\vec{b}=(c_2 b_3,0,0)$ where $c_2 b_3\ge 0$). This choice simplifies
the matrix because $c_1=c_3=b_1=0$. Furthermore, if the invariants
vanish then $\vec{c}$ and $\vec{b}$ are perpendicular and of the
same magnitude, thus we can choose $\vec{b}$ aligned with the third
axis, and hence obtain (b). If $\vec{b}\propto \vec{c}$  then we
obtain (a) aligning $e_1$ with them. In the remaining case
$c_2b_3=\vert \vec{c}\times \vec{b}\vert>0$. Now, we make a boost in
direction $e_1$ so that the endomorphism gets represented by the
matrix
\[
{\footnotesize \begin{pmatrix}
 \cosh \gamma & -\sinh \gamma  & 0 & 0  \\
 -\sinh \gamma & \cosh \gamma & 0 & 0  \\
 0 & 0 & 1 &  0  \\
 0 & 0 & 0 & 1
\end{pmatrix}  \begin{pmatrix}
 0 & 0  & c_2 & 0 \\
 0 & 0 & b_3 & -b_2 \\
 c_2 & -b_3 & 0 & 0 \\
 0 & b_2 & 0 & 0
\end{pmatrix}\begin{pmatrix}
 \cosh \gamma & \sinh \gamma  & 0 & 0  \\
 \sinh \gamma & \cosh \gamma & 0 & 0  \\
 0 & 0 & 1 &  0  \\
 0 & 0 & 0 & 1
\end{pmatrix} },
\]
so that $c_2'=c_2\cosh\gamma-b_3\sinh\gamma$,
$b_3'=-c_2\sinh\gamma+b_3\cosh\gamma$, $c_3'=b_2\sinh\gamma$,
$b_2'=b_2\cosh\gamma$ and $c_1'=b_1'=0$. We ask if we can find a
value of $\gamma$ which aligns $\vec{c}'$ with $\vec{b}'$. They are
aligned if $\vec{c}'\times \vec{b}'=\vec{0}$ which holds if the next
expression vanishes
\[
c_2'b_3'-c_3'b_2'=-c_2^2\sinh^2\gamma-(b_2^2+b_3^2)\sinh \gamma
\cosh\gamma+c_2b_3\cosh (2\gamma).
\]
For $\gamma=0$ the right-hand side gives $c_2b_3>0$, while for large
$\gamma$ it goes as $ \sim ( -\vec{c}^2-\vec{b}^2+2\vert
\vec{c}\times \vec{b}\vert) e^{2\gamma}/4$. Thus if
$\vec{c}^2+\vec{b}^2>2\vert \vec{c}\times \vec{b}\vert$ it vanishes
for some $\gamma$. This is the case because
\[
( \vec{c}^2+\vec{b}^2)^2-(2\vert \vec{c}\times
\vec{b}\vert)^2=(\vec{c}^2-\vec{b}^2)^2+4(\vec{c}\cdot\vec{b})^2=4(I_1^2+I_2^2)>0.
\]

%

\end{proof}

The following identity shows that, indeed, the orbit of $B(\alpha)$,
contains all the  matrices of type $B(\alpha')$, with $\alpha'\ne 0$
(in order to  change sign make a $\pi$-rotation of the frame on the
plane $\textrm{Span}(e_2,e_3)$)

\begin{align}
 &{\footnotesize \begin{pmatrix}
 \cosh \gamma & -\sinh \gamma  & 0 & 0  \\
 -\sinh \gamma & \cosh \gamma & 0 & 0  \\
 0 & 0 & 1 &  0  \\
 0 & 0 & 0 & 1
\end{pmatrix}
\begin{pmatrix}
 0 &  0 & -\alpha & 0 \\
 0 & 0 & -\alpha  & 0 \\
 -\alpha & \alpha & 0 & 0 \\
 0 & 0 & 0 & 0
\end{pmatrix}
\begin{pmatrix}
 \cosh \gamma & \sinh \gamma  & 0 & 0  \\
 \sinh \gamma & \cosh \gamma & 0 & 0  \\
 0 & 0 & 1 &  0  \\
 0 & 0 & 0 & 1
\end{pmatrix} }\nonumber \\
&= {\footnotesize \begin{pmatrix}
 0 &  0 & -{\alpha'}  & 0 \\
 0 & 0 & -{\alpha'}   & 0 \\
 -{\alpha'}  & {\alpha'}  & 0 & 0 \\
 0 & 0 & 0 & 0
\end{pmatrix}}, \qquad \text{with} \ \alpha'=\alpha e^{-\gamma} .
\label{mkd}
\end{align}

\begin{remark}
The orbit with representative $B$ cannot be distinguished from the
trivial orbit using continuous invariant functions. Indeed, if $[B]$
is the orbit of $B$, then $\overline{[B]}$ contains the identity
(take the limit $\alpha\to 0$ of $B(\alpha)$) and hence the function
would take the same value on both orbits.
\end{remark}

\begin{remark}
The physical content of the previous theorem is quite interesting.
It tells us that any infinitesimal Lorentz transformation can be
regarded as the frame dragging of points attached to a frame which
is accelerating and rotating in two canonical ways. One with the
acceleration and angular velocities aligned (we can choose the first
axis with a suitable rotation), and the other with  acceleration and
angular velocities which are perpendicular and of equal module. It
also tells us that in the former case the modules of the
acceleration and of the angular velocity do not depend on the frame
that accomplishes the simplification, and thus can be regarded as
genuine characteristics of the infinitesimal Lorentz transformation.
In the latter case, on the contrary, the (equal) modules are not
uniquely determined because they depend on the simplifying frame.
Indeed, they change boosting the simplifying frame along the
direction determined by the vector product between acceleration and
angular velocity.

While this interpretation is correct, it should be kept in mind that
Lorentz transformations act on $W$, not on $M$. As we shall see, the
introduction of translations will allow us to assign, for fixed
movement duration, a meaningful module to the acceleration and
angular velocity, even in those cases in which they are
perpendicular.

\end{remark}

\begin{remark} \label{nur}
We have the following identities
\[
e^A= {\footnotesize\begin{pmatrix}
 \cosh \varphi  & -\sinh \varphi   & 0 & 0 \\
 -\sinh \varphi  & \cosh \varphi  & 0 & 0 \\
 0 & 0 & \cos \theta  &  \sin \theta  \\
 0 & 0 & -\sin \theta  & \cos \theta
\end{pmatrix}},
\quad e^B= {\footnotesize\begin{pmatrix}
 1+ \alpha^2/2 & -\alpha^2/2  &  -\alpha  & 0 \\
\alpha^2/2 & 1-\alpha^2/2 &  -\alpha  & 0 \\
  -\alpha & \alpha & 1 & 0 \\
 0 & 0 & 0 & 1
\end{pmatrix}}.
\]
%
It is interesting to note that $\exp A$ preserves the null
directions $e_0\pm e_1$, while $\exp B$ leaves invariant the null
vector $e_0+e_1$.

\end{remark}

\begin{proposition} \label{nux}
Let $\Lambda \in SO(1,3)^\uparrow$ and $F\in \mathfrak{so}(1,3)$ be
such that $\Lambda = e^F$, then $\Lambda$ and $F$ have the same
eigenvectors.
\end{proposition}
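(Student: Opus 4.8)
The plan is to prove the two inclusions between the eigenvector sets separately. One direction is immediate: if $Fv=\lambda v$ for some $v\in W_{\mathbb C}$, then expanding the exponential series term by term gives $\Lambda v=e^{F}v=e^{\lambda}v$, so every eigenvector of $F$ is an eigenvector of $\Lambda$. For the converse I would first record that $\Lambda$, being the sum of the power series in $F$, commutes with $F$; hence every eigenspace $E_\mu=\ker(\Lambda-\mu\,\mathrm{id})$ of $\Lambda$ is $F$-invariant, and it is enough to show that $F|_{E_\mu}$ is a scalar operator for each eigenvalue $\mu$ of $\Lambda$ --- then any $v\in E_\mu$ is automatically an eigenvector of $F$, and combined with the easy inclusion this yields the equality of eigenvector sets. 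Since both maps $\Lambda\mapsto L\Lambda L^{-1}$ and $F\mapsto LFL^{-1}$ carry eigenvectors to their $L$-images and preserve the relation $\Lambda=e^{F}$, this reduced claim is invariant under simultaneous conjugation by $SO(1,3)^{\uparrow}$; so by Theorem \ref{vkx} I may assume $F$ is one of the normal forms $0$, $A=A(\varphi,\theta)$, or $B=B(\alpha)$.

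Next I would dispose of the three cases using the explicit exponentials of Remark \ref{nur}. The case $F=0$ gives $\Lambda=\mathrm{id}$ and is trivial. For $F=A$, both $A$ and $e^{A}$ are block diagonal for the splitting $W=\mathrm{Span}(e_0,e_1)\oplus\mathrm{Span}(e_2,e_3)$, so I treat the two invariant planes separately. On the boost plane the eigenvectors of $A$ are the null directions $e_0\pm e_1$ (eigenvalues $\mp\varphi$), and those of the finite boost $e^{A}$ are again $e_0\pm e_1$ (eigenvalues $e^{\mp\varphi}$); since $\varphi>0$ these eigenvalues are distinct, so the $e^{A}$-eigenspaces there are lines and $A$ is scalar on each. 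On the rotation plane $A$ has eigenvectors $e_2\pm i e_3$ (eigenvalues $\pm i\theta$), and $e^{A}$ has the same eigenvectors (eigenvalues $e^{\pm i\theta}$). For $F=B$ the matrix is nilpotent with $\ker B=\mathrm{Span}(e_0+e_1,\,e_3)$ exhausting its eigenvectors (only eigenvalue $0$), while $e^{B}$ is unipotent and $e^{B}-\mathrm{id}=B\bigl(\mathrm{id}+\tfrac12 B+\tfrac16 B^{2}\bigr)$ with the second factor invertible and commuting with $B$, whence $\ker(e^{B}-\mathrm{id})=\ker B$; so the eigenvectors coincide here too.

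The step that needs the most care is the rotation plane in case (a): I must check that $e^{A}$ does not acquire an eigendirection that $A$ lacks, i.e. that its eigenspace on $\mathrm{Span}(e_2,e_3)$ is neither enlarged nor merged with the boost block. The second possibility is excluded because $\varphi>0$ keeps the boost eigenvalues $e^{\pm\varphi}$ off the unit circle and hence distinct from $e^{\pm i\theta}$, so no $e^{A}$-eigenspace can straddle the two invariant planes. The first possibility is the genuinely delicate point, since a nonzero multiple $\theta\in 2\pi\mathbb Z$ would collapse the rotation block of $e^{A}$ to the identity while $A$ remains non-scalar there; this configuration must be ruled out by reading the statement with $\theta$ taken as the canonical (principal) rotation parameter attached to $\Lambda$. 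Granting this, $F|_{E_\mu}$ is scalar for every $\mu$, and together with the easy inclusion the proposition follows.
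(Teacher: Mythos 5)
Your strategy is the same as the paper's: the inclusion ``eigenvectors of $F$ are eigenvectors of $\Lambda$'' is immediate from the power series, and the converse is reduced by simultaneous conjugation to the normal forms $A$ and $B$ of Theorem \ref{vkx} and then checked block by block. The paper compresses that check into one sentence; you supply the details (the unipotent factorization giving $\ker(e^{B}-\mathrm{id})=\ker B$ in case (b), the separation of the boost and rotation blocks via $e^{\pm\varphi}$ lying off the unit circle in case (a)), and those details are correct.

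The step you flag and then ``grant'' is, however, not grantable: when $\theta$ is a nonzero multiple of $\pi$ the proposition is simply false. Take $F=A$ with $\varphi>0$ and $\theta=2\pi$; then $\Lambda=e^{F}$ is a pure boost, every nonzero vector of $\mathrm{Span}(e_2,e_3)$ is an eigenvector of $\Lambda$ with eigenvalue $1$, yet $Fe_2=-2\pi\,e_3$ is not proportional to $e_2$. Your list of bad values is also too short: for $\theta$ an odd multiple of $\pi$ the rotation block of $e^{A}$ equals minus the identity, again a scalar matrix whose eigenspace strictly contains the two eigendirections $e_2\pm ie_3$ of $A$, so $e_2$ is a (real) eigenvector of $\Lambda$ with eigenvalue $-1$ but not of $F$; the full exceptional set is $\theta\in\pi\mathbb{Z}\setminus\{0\}$. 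Your proposed repair --- reading $\theta$ as the principal rotation parameter attached to $\Lambda$ --- is not available under the stated hypothesis, which only assumes $\Lambda=e^{F}$ for some $F\in\mathfrak{so}(1,3)$; indeed Theorem \ref{biq} invokes the proposition ``for any such choice'' of logarithm. So your argument, exactly like the paper's one-line check, proves the claim only away from $\theta\in\pi\mathbb{Z}\setminus\{0\}$; closing the remaining cases requires weakening the statement (for instance to the future-directed null eigenvectors actually used later, and even there the case $\varphi=0$, $\theta\in 2\pi\mathbb{Z}\setminus\{0\}$, where $\Lambda=\mathrm{id}$, must still be excluded). The defect lies in the proposition rather than in your execution of the reduction, and your write-up comes closer than the paper's proof to locating it.
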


\begin{proof}
Of course, it is trivial that the eigenvectors for $F$ are
eigenvectors for $\Lambda$. The non-trivial direction is the
opposite. It is easy to check that the claim holds for $F=A$ or
$F=B$. Since by a conjugacy transformation we can always reduce the
problem to this case, the claim holds in general.
\end{proof}

A Lie group $G$ with a surjective exponential map $\exp:
\mathfrak{g}\to G$ is called {\em exponential}.  The Lorentzian
generalization of Euler's and Chasles' theorems can
 be obtained from their infinitesimal versions thanks to the
 following result.
%
%
%

\begin{theorem} \label{nco}
The exponential map $\exp: \mathfrak{so}(1,3) \to
SO(1,3)^{\uparrow}$ is surjective.
\end{theorem}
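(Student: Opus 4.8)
The plan is to exploit the conjugation–equivariance of the exponential, $\exp(g F g^{-1}) = g\,(\exp F)\,g^{-1}$, together with the fact that $g\,\mathfrak{so}(1,3)\,g^{-1}=\mathfrak{so}(1,3)$ for $g\in SO(1,3)^{\uparrow}$. Consequently $\exp$ is onto as soon as every conjugacy class of $SO(1,3)^{\uparrow}$ contains one of the normal forms $I$, $e^{A}$, $e^{B}$ of Theorem \ref{vkx} and Remark \ref{nur}, all three being manifestly in the image of $\exp$. So everything reduces to exhibiting, for an arbitrary $\Lambda\in SO(1,3)^{\uparrow}$, a conjugate equal to one of these. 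This reduction to normal forms can be quoted from the classical classification of the conjugacy classes of the Lorentz group, or, to keep the argument self-contained, obtained as in the next two paragraphs.

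First I would pass to the multiplicative Jordan decomposition $\Lambda=\Lambda_{s}\Lambda_{u}=\Lambda_{u}\Lambda_{s}$, with $\Lambda_{s}$ semisimple and $\Lambda_{u}$ unipotent; both factors are real polynomials in $\Lambda$. Applying the involution $X\mapsto\eta^{-1}(X^{-1})^{\intercal}\eta$, which fixes $\Lambda$ (this is merely $\Lambda^{\intercal}\eta\Lambda=\eta$) and hence, by uniqueness of the decomposition, fixes $\Lambda_{s}$ and $\Lambda_{u}$ separately, one gets $\Lambda_{s},\Lambda_{u}\in O(1,3)$. Set $N:=\log\Lambda_{u}$, a terminating series; from $\Lambda_{u}^{\intercal}\eta\Lambda_{u}=\eta$ it follows that $N^{\intercal}\eta+\eta N=0$, i.e. $N\in\mathfrak{so}(1,3)$, whence $\Lambda_{u}=e^{N}\in SO(1,3)^{\uparrow}$ and $\Lambda_{s}=\Lambda\Lambda_{u}^{-1}\in SO(1,3)^{\uparrow}$. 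Moreover $N$ is a polynomial in $\Lambda_{u}$, so it commutes with $\Lambda_{s}$; conversely $\mathrm{Ad}(\Lambda_{s})$ is an automorphism fixing $\Lambda_{u}=e^{N}$, so $e^{\mathrm{Ad}(\Lambda_{s})N}=e^{N}$ with both exponents nilpotent, and since $\exp$ is injective on nilpotent matrices (its inverse $\log$ being polynomial there) we conclude $\mathrm{Ad}(\Lambda_{s})N=N$, that is, $\Lambda_{s}$ commutes with $N$.

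Next I would treat $\Lambda_{s}$ by the eigenvalue analysis already used in the proof of Theorem \ref{vkx}: its complex eigenvalues come in quadruples $\{\lambda,\lambda^{-1},\bar\lambda,\bar\lambda^{-1}\}$, orthochronicity forces a real null eigenvector to carry a positive eigenvalue, and one extracts an $\eta$-orthogonal splitting $W=W_{t}\oplus W_{s}$ into a timelike and a spacelike $2$-plane on which $\Lambda_{s}$ acts as a (possibly trivial) boost and a (possibly trivial) rotation; in a proper orthochronous frame adapted to the splitting, $\Lambda_{s}=e^{A}$ for suitable $\varphi\ge0$, $\theta$. Inspecting the cases shows that whenever $\Lambda_{s}\neq I$ its centraliser in $\mathfrak{so}(1,3)$ is a two‑dimensional abelian subalgebra — a Cartan subalgebra, spanned by the boost and rotation generators of $W_{t}$ and $W_{s}$ — which contains no nonzero nilpotent; since $N$ commutes with $\Lambda_{s}$, this forces $N=0$. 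Hence either $\Lambda_{s}=I$, so that $\Lambda=\Lambda_{u}=e^{N}$, or $N=0$, so that $\Lambda=\Lambda_{s}$ is conjugate to some $e^{A}$. In the first alternative, if $\Lambda_{u}=I$ then $\Lambda=e^{0}$; if $\Lambda_{u}\neq I$ then $N$ is a nonzero nilpotent of $\mathfrak{so}(1,3)$, necessarily of nilpotency index three (the only possibility in Lorentzian signature), with a single fixed null direction, and building a frame out of the flag $(\Lambda-I)^{2}v,\ (\Lambda-I)v,\ v$ (for any $v$ with $(\Lambda-I)^{2}v\neq0$) together with its $\eta$-orthocomplement one checks, after normalisation, that $\Lambda$ is carried to $e^{B(\alpha)}$ for some $\alpha\neq0$ — all such being conjugate by Eq. (\ref{mkd}).

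Collecting cases, every $\Lambda\in SO(1,3)^{\uparrow}$ is conjugate to $I=e^{0}$, to some $e^{A}$, or to some $e^{B}$, and the equivariance of $\exp$ finishes the proof. I expect the real work to be the normal-form step — especially the claims that the centraliser of a nontrivial semisimple Lorentz transformation carries no nilpotents and that $SO(1,3)^{\uparrow}$ has a single nontrivial unipotent conjugacy class, realised by $e^{B}$; both become transparent through the spinor isomorphism $\mathfrak{so}(1,3)\otimes\mathbb{C}\cong\mathfrak{sl}(2,\mathbb{C})\oplus\mathfrak{sl}(2,\mathbb{C})$, but can equally be obtained from the elementary light-cone manipulations indicated above.
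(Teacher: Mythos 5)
Your proof is correct, but note that the paper does not actually prove Theorem \ref{nco}: it states the result and defers entirely to the literature (Riesz, Shaw--Bowtell, Nishikawa, Moskowitz, \ldots), so there is no in-text argument to compare yours against. What you supply is a genuine self-contained proof, and the route you choose --- multiplicative Jordan decomposition $\Lambda=\Lambda_s\Lambda_u$ preserved by the involution fixing $O(1,3)$, the observation that $N=\log\Lambda_u$ is a nilpotent element of $\mathfrak{so}(1,3)$ commuting with $\Lambda_s$, and the centraliser argument showing that a nontrivial semisimple orthochronous element commutes with no nonzero nilpotent of $\mathfrak{so}(1,3)$ (its centraliser being the Cartan subalgebra spanned by the commuting boost and rotation generators) --- is sound and is essentially the structure-theoretic proof found in the cited references. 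Two remarks. First, you are right to make the normal-form step self-contained rather than quoting Theorem \ref{nhp}, since in the paper that theorem is itself deduced from Theorem \ref{nco}; relying on it would be circular, whereas Theorem \ref{vkx} and Remark \ref{nur}, which you do use, are independent of \ref{nco}. Second, the most compressed step is the semisimple classification: one needs that a real eigenvalue off the unit circle forces a null eigenvector with, by orthochronicity, a positive eigenvalue (giving the invariant timelike $2$-plane spanned by the two null eigendirections), and that a semisimple element with spectrum on the unit circle fixes a timelike vector; both are standard and your sketch covers them. Finally, a small economy is available: once you know that either $N=0$ (so $\Lambda=\Lambda_s$ is conjugate to some $e^{A}$) or $\Lambda_s=I$ (so $\Lambda=e^{N}$ with $N\in\mathfrak{so}(1,3)$), surjectivity already follows; the further identification of the nontrivial unipotent class with $e^{B(\alpha)}$, while correct, is not needed for the statement being proved.
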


Exponential Lie groups are very much studied in the literature
\cite{dokovic97} and the previous result is well established
\cite{riesz93,shaw69b,nishikawa83,dokovic97,moskowitz99}
\cite[Theor. 6.5]{hall04} \cite[Theor. 4.21]{gallier12}, see also
\cite{coll90,coll02}.
 The nice fact is that although
$SL(2,\mathbb{C})$ provides a double covering of
$SO(1,3)^{\uparrow}$, the exponential $\exp:
\mathfrak{sl}(2,\mathbb{C}) \to SL(2,\mathbb{C})$ is not surjective
\cite{moskowitz99,gallier12} (the group $SL(2,\mathbb{R})$ is often
used to shown  that the exponential does not need to be surjective
\cite{duistermaat00}). Indeed, the matrix
\[
\begin{pmatrix}
-1 & h\\ 0 & -1
\end{pmatrix}, \qquad h\ne 0
\]
does not belong to any 1-parameter subgroup of $SL(2,\mathbb{C})$.

\subsection{Lorentzian extension of Euler's theorem}

We formulate the Lorentzian generalization of Euler's theorem.

\begin{theorem} \label{nhp}
Let $\Lambda: W\to W$ be a non-trivial proper orthochronous Lorentz
transformation, then we can find a proper orthochronous orthonormal
base in such a way that  the matrix $\Lambda^a_{\ b}$ belongs to the
2-dimensional Abelian subgroup of roto--boosts
\begin{align*}
(a): \ {\footnotesize\begin{pmatrix}
 \cosh \varphi & -\sinh \varphi  & 0 & 0  \\
 -\sinh \varphi & \cosh \varphi & 0 & 0  \\
 0 & 0 & \cos \theta &  \sin \theta  \\
 0 & 0 & -\sin \theta & \cos \theta
\end{pmatrix}}, \qquad \textrm{with: } \ \parbox{3.5cm}{$\varphi > 0,\ \theta \in [0,2\pi)$, or \\ $\varphi=0,\ \theta \in [0,\pi]$}
\end{align*}
or  to the 1-dimensional Abelian subgroup of null (Galileian) boosts
\begin{align*}
(b): \ {\footnotesize\begin{pmatrix}
 1+ \alpha ^2/2 &  -\alpha^2/2 & -\alpha & 0 \\
 \alpha^2/2 & 1-\alpha ^2/2 & -\alpha  & 0 \\
 -\alpha & \alpha & 1 & 0 \\
 0 & 0 & 0 & 1
\end{pmatrix}}, \qquad \qquad \alpha \in \mathbb{R}. \qquad\qquad \qquad\qquad \qquad
\end{align*}
If (b) applies with $\alpha \ne 0$ then $\alpha$ can be chosen
arbitrarily (as long as it is different from zero). Apart from this
freedom, which does not change the conjugacy class, different
matrices correspond to different conjugacy classes.

The matrix is of type (a) if and only if $\Lambda:W\to W$ leaves
invariant a timelike 2-subspace, if and only if $\Lambda:W\to W$
leaves invariant at least two lightlike directions. The matrix is of
type (a) with $\theta=0$, if and only if it is of type (a) and
leaves invariant one spacelike vector (and hence every vector in a
spacelike 2-subspace). The matrix is of type (a) with $\varphi=0$,
if and only if it is of type (a) and leaves invariant at least two
lightlike vectors. The matrix is of type (b) with $\alpha\ne 0$ if
and only if $\Lambda:W\to W$ leaves invariant one and only one
lightlike vector.

More specifically, the reference frame can be chosen in such a way
that the matrix takes one and only one of the forms given in table
\ref{table3}. The type and the parameters' value are independent of
the simplifying reference frame and, moreover, the simplifying
reference frame fixes unambiguously some geometric data given in the
last column of the table. Furthermore, if the type, the parameters
and the geometric data are given, then the transformation can be
completely determined.

\end{theorem}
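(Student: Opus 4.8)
The plan is to combine Theorem~\ref{vkx} (the Lie-algebra classification) with the exponentiality Theorem~\ref{nco}, and then track the geometric data through the exponential map using Proposition~\ref{nux}. First I would invoke Theorem~\ref{nco} to write $\Lambda=e^F$ for some $F\in\mathfrak{so}(1,3)$. Since $\Lambda$ is non-trivial, $F\neq 0$, so by Theorem~\ref{vkx} there is a proper orthochronous orthonormal base in which $F$ is of type $(a)$, i.e.\ $F=A(\varphi,\theta)$, or of type $(b)$, i.e.\ $F=B(\alpha)$ with $\alpha\neq 0$. Exponentiating, Remark~\ref{nur} gives exactly the two matrix forms in the statement: $e^{A}$ is the roto--boost with parameters $(\varphi,\theta)$, and $e^{B}$ is the null boost with parameter $\alpha$. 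The freedom $\alpha\to\alpha e^{-\gamma}$ in case $(b)$, and the residual $\pi$-rotation on $\mathrm{Span}(e_2,e_3)$ that lets one arrange $\varphi>0$ (and $\theta\in[0,2\pi)$, or $\theta\in[0,\pi]$ when $\varphi=0$), are inherited verbatim from Theorem~\ref{vkx} and Eq.~(\ref{mkd}).

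Next I would prove the characterizations via invariant subspaces. The cleanest route is: (i) each explicit normal form manifestly has the claimed invariant subspaces/vectors (for $(a)$: the timelike plane $\mathrm{Span}(e_0,e_1)$ and the null directions $e_0\pm e_1$; the extra spacelike invariant vectors when $\theta=0$; the extra null \emph{vectors} $e_2\pm e_3$ \dots more carefully, for $\varphi=0$ the boost block is trivial so additional null vectors appear; for $(b)$ with $\alpha\neq0$, the unique invariant null vector $e_0+e_1$ noted in Remark~\ref{nur}); (ii) conversely, a $\Lambda$ possessing a given type of invariant subspace must fall into the corresponding family, which one checks by the complementary counting: since by the first part $\Lambda$ is conjugate to exactly one normal form, and the normal forms are mutually distinguished by these subspace data, the implications reverse. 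To make "$\theta=0\iff$ invariant spacelike vector" and "$\varphi=0\iff$ at least two invariant null vectors" precise, I would use Proposition~\ref{nux}: the eigenvectors of $\Lambda$ coincide with those of $F$, so the invariant-vector statements for $\Lambda$ reduce to the obvious ones for $A$ and $B$ (e.g.\ $A$ has a real eigenvector in $\mathrm{Span}(e_2,e_3)$ iff $\theta=0$; $A$ fixes $e_0+e_1$ and $e_0-e_1$ iff $\varphi=0$; $B$ with $\alpha\neq0$ has $e_0+e_1$ as its only null eigendirection).

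For the last paragraph—the refined table \ref{table3} with parameter ranges and geometric ingredients—I would argue exactly as in the non-relativistic Chasles discussion of Section~2: the invariants $I_1,I_2$ of $F$ (equivalently of $\Lambda$ via $\Lambda=e^F$) determine $\varphi,\theta$ through (\ref{vt1})–(\ref{vt2}) up to the stated discrete ambiguities, and the remaining continuous data is the location of the canonical invariant subspaces in $W$, which is precisely what a choice of simplifying base fixes. Conversely, given the type, the parameters, and those subspaces, one reconstructs $\Lambda$ by writing it in any base adapted to the subspaces. I expect the main obstacle to be bookkeeping rather than conceptual: namely stating the invariant-subspace characterizations at the right level of precision so that each "if and only if" genuinely holds (the degenerate overlaps—$\varphi=0$ vs.\ $\theta=0$ inside type $(a)$, and ensuring type $(b)$ with $\alpha\neq0$ is never conjugate into type $(a)$, which follows since $e^{B}$ has a single null eigendirection whereas every type-$(a)$ matrix has at least two). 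The freedom $\alpha$ in $(b)$ must be carefully noted as \emph{not} affecting the conjugacy class, using (\ref{mkd}).
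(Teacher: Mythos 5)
Your proposal matches the paper's proof: the paper likewise obtains the normal forms by writing $\Lambda=e^F$ via Theorem~\ref{nco}, reducing $F$ to the canonical forms of Theorem~\ref{vkx} and exponentiating as in Remark~\ref{nur}, and then disposes of the remaining claims exactly as you describe --- distinctness of classes via the invariants, the $\alpha$-freedom via the conjugation identity, and the invariant-subspace characterizations by inspecting the normal forms with the help of Proposition~\ref{nux}. (Only a small slip: $e_2\pm e_3$ are spacelike, not null; the extra invariant null \emph{vectors} in the $\varphi=0$ case are $e_0\pm e_1$ themselves, as you correctly indicate immediately afterwards.)
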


\begin{proof}
By theorem \ref{nco}  there is some antisymmetric matrix $F$ such
that $\Lambda= \exp F$. We choose the proper orthochronous
orthonormal base in such a way that $F$ takes one of the canonical
forms given by theorem \ref{vkx}. Thus by suitably choosing the
proper orthochronous orthonormal base we can make
$\Lambda$ to take the form  $\exp A$ or $\exp B$ given by remark
\ref{nur}. This is  the first claim of the  theorem. From here the
other statements follow with little effort.
\end{proof}

Transformations of type (b) might be called  {\em Galileian boosts}.
The justification of this terminology can be found in
\cite{minguzzi05e}, where it is shown that they provide a
1-dimensional subgroup of the group $E(2)$ of Galileian boost in
2-dimensions plus rotations (see also \cite{duval85,duval91}).

\begin{table}[ht]
\caption{Relativistic Euler's theorem  and reconstruction (Lie
algebra version)} {\small
\begin{tabular}{lllll}
\\
\hline
Type & \parbox{4cm}{\begin{center}Families of \\ orbits \end{center}} & Parameters & Description & \parbox{1.7cm}{Geometric ingredients} \\
\hline
\\
(Ll1) & {\footnotesize $\begin{pmatrix}
 0 & - \varphi  & 0 & 0  \\
 - \varphi & 0 & 0 & 0  \\
 0 & 0 & 0 &   \theta  \\
 0 & 0 & - \theta & 0
\end{pmatrix}$} & \parbox{2cm}{$\varphi=0$, $\theta > 0$, \\or  $\varphi>0$}
  &  \parbox{1.7 cm}{roto--boost field}  & \parbox{1.7 cm}{oriented timelike 2-subspace} \\
 \\
(Ll2) & {\footnotesize $\begin{pmatrix}
 0 &  0 & -1 & 0 \\
 0 & 0 & -1 & 0 \\
 -1 & 1 & 0 & 0 \\
 0 & 0 & 0 & 0
\end{pmatrix}$}
  &  [none]  & \parbox{1.5cm}{null (Galileian) boost field}
  & \parbox{1.8cm}{oriented lightlike 2-subspace \mbox{and f.d.\ } lightlike vector on it.}  \\
 \\
 \hline
\end{tabular} }
\label{table3b}
\end{table}

\begin{table}[ht]
\caption{Relativistic Euler's theorem  and reconstruction (Group
version) } {\small
\begin{tabular}{lllll}
\\
\hline
Type & \parbox{4cm}{\begin{center}Families of \\ conjugacy classes \end{center}} & Parameters & Description & \parbox{1.7cm}{Geometric ingredients} \\
\hline
\\
(l1) & {\footnotesize $\begin{pmatrix}
 1 & 0  & 0 & 0  \\
 0 & 1 & 0 & 0  \\
 0 & 0 & \cos \theta &  \sin \theta  \\
 0 & 0 & -\sin \theta & \cos \theta
\end{pmatrix}$} & $\theta\in (0,\pi)$
  & rotation & \parbox{1.7 cm}{oriented timelike 2-subspace} \\
  \\
(l2) & {\footnotesize $\begin{pmatrix}
 1 & 0  & 0 & 0  \\
 0 & 1 & 0 & 0  \\
 0 & 0 & -1 &  0  \\
 0 & 0 & 0 & -1
\end{pmatrix}$} & [none]
  & $\pi$-rotation & \parbox{1.7 cm}{timelike \mbox{2-subspace}} \\
\\
(l3) & {\footnotesize $\begin{pmatrix}
 \cosh \varphi & -\sinh \varphi  & 0 & 0  \\
 -\sinh \varphi & \cosh \varphi & 0 & 0  \\
 0 & 0 & \cos \theta &  \sin \theta  \\
 0 & 0 & -\sin \theta & \cos \theta
\end{pmatrix}$} & $\varphi >0,\, \theta\in [0,2\pi)$
  & roto--boost & \parbox{1.7 cm}{oriented timelike 2-subspace} \\
 \\
(l4) & {\footnotesize $\begin{pmatrix}
 1+ \alpha ^2/2 &  -\alpha^2/2 & -\alpha & 0 \\
 \alpha^2/2 & 1-\alpha ^2/2 & -\alpha  & 0 \\
 -\alpha & \alpha & 1 & 0 \\
 0 & 0 & 0 & 1
\end{pmatrix}$}
  &  \parbox{2cm}{Any $\alpha\ne 0$ gives the same conjugacy class}  & \parbox{1.5cm}{null (Galileian) boost}
  & \parbox{1.8cm}{oriented lightlike 2-subspace \mbox{and f.d.\ } lightlike vector on it.}  \\
 \\
 \hline
\end{tabular} }
\label{table3}
\end{table}

\clearpage

\begin{remark}
Let us clarify the role of the geometric data.

Suppose that (a) applies with $\varphi = 0$, $\theta\ne 0$. Choose a
simplifying frame such that $\theta\in [0,\pi]$. The invariant
timelike 2-dimensional subspace $\textrm{Span}(e_0,e_1)$ selected in
this way is independent of the simplifying reference frame (precisely because it is characterized as timelike invariant subspace of $\Lambda: W\to W$).
Furthermore, if $\theta\ne \pi$ the orientation of this subspace
given by $(e_0,e_1)$ is independent of the simplifying reference
frame.

Suppose that (a) applies with $\varphi \ne 0$. Choose a simplifying
frame such that $\varphi>0$, and assign to the invariant timelike
2-dimensional subspace $\textrm{Span}(e_0,e_1)$ the orientation
given by the base $(e_0, e_1)$. The invariant oriented timelike
2-subspace selected in this way is independent of the simplifying
reference frame.

Suppose that (b) applies with $\alpha \ne 0$. Choose a simplifying
frame such that $\alpha=1$, and assign to the invariant lightlike
2-subspace $\textrm{Span}(e_0+e_1,e_3)$ the orientation given by the
base $(e_0+e_1,e_3)$. The invariant oriented lightlike 2-subspace
selected in this way and the lightlike vector $e_0+e_1$ are
independent of the simplifying reference frame (for, another simplifying frame would be related to the former by a little group transformation of the vector $(1,1,0,0)$. From here, since a null 2-plane must be left invariant, the frame change matrix must actually be a Galileian boost \cite{minguzzi05e} with direction in $\textrm{Span}(e_2,e_3)$.).

The map $\Lambda$ can be completely recovered knowing, to start
with, if it is   of type (a) or (b). In case (a)  it is
sufficient to know the invariant oriented timelike 2-subspace and
the constants $\vert \varphi \vert$, $\theta $.  In case (b) it is
sufficient to know the invariant oriented lightlike
2-subspace which admits a base of invariant vectors, and
the distinguished future directed lightlike vector on it.

\end{remark}

The many paragraphs of the theorem serve to clarify the qualitative
features of the Lorentz transformations. Two transformation which
differ by these aspects cannot be related by conjugacy (for other
characterizations see \cite{shaw69} \cite[Theor. 6.1]{hall04}).

We mention here another interesting approach to the study of
conjugacy classes. It uses the isomorphism between the Lorentz group
and $PSL(2,\mathbb{C})$. The idea comes from the observations that
Lorentz transformations of the observer induce an action of
$PSL(2,\mathbb{C})$ on the Riemann sphere of the 'night sky'
\cite{penrose59,penrose84,naber92}.


According to the classification of conjugacy classes for the
$PSL(2,\mathbb{C})$ group \cite{needham97}, the classes
corresponding to matrices of the form (a) with $\theta=0$,
$\varphi\ne 0$, are called {\em hyperbolic}, those corresponding to
matrices of the form (a) with $\theta\ne 0$, $\varphi= 0$, are
called {\em elliptic}, those corresponding to matrices of the form
(a) with $\theta\ne 0$, $\varphi\ne  0$, are called {\em
loxodromic}, and that corresponding to matrices of the form (b) with
$\alpha \ne 0$, is called {\em parabolic}. The class of the identity
contains only the identity and is referred as the trivial class. We
shall extend this terminology to the Lorentz transformations
themselves. Thus a Lorentz transformation is hyperbolic if its
conjugacy class is hyperbolic.

Since the parameters $\varphi$ and $\theta$ are expressible in terms of Ad-invariants, matrices obtained
for distinct parameters correspond to distinct conjugacy classes (up
to the remarked freedom in the parameters  choice).


Let us instead show that the parameters have the mentioned freedom.
Suppose we are in case (a) with $\varphi=0$. A $\pi$-rotation of the
reference frame in the plane $\textrm{Span}(e_1, e_2)$, changes the
sign of $\theta$ which is then redefined adding $2\pi$. As a result
$\theta$ can be changed to take value in $[0,\pi]$ .

Suppose we are in case (a) with $\varphi\ne 0$.  In order to show
that only the sign of $\varphi$ is relevant for the conjugacy class,
it is again sufficient to perform a $\pi$-rotation of the reference
frame in the plane $\textrm{Span}(e_1, e_2)$, as it changes the
signs of both $\varphi$ and $\theta$ which is then redefined adding
$2\pi$.


In case (b) different modules for $\alpha$ do not give different
conjugacy classes  (if $\alpha=0$ we get the class of the identity),
because of the identity

{
\begin{align}
 &{\footnotesize \begin{pmatrix}
 \cosh \gamma & -\sinh \gamma  & 0 & 0  \\
 -\sinh \gamma & \cosh \gamma & 0 & 0  \\
 0 & 0 & 1 &  0  \\
 0 & 0 & 0 & 1
\end{pmatrix}
\begin{pmatrix}
 1+ \alpha ^2/2 &  -\alpha^2/2 & -\alpha & 0 \\
 \alpha^2/2 & 1-\alpha ^2/2 & -\alpha  & 0 \\
 -\alpha & \alpha & 1 & 0 \\
 0 & 0 & 0 & 1
\end{pmatrix}
\begin{pmatrix}
 \cosh \gamma & \sinh \gamma  & 0 & 0  \\
 \sinh \gamma & \cosh \gamma & 0 & 0  \\
 0 & 0 & 1 &  0  \\
 0 & 0 & 0 & 1
\end{pmatrix} }\nonumber \\
&= {\footnotesize \begin{pmatrix}
 1+ {\alpha'} ^2/2 &  -{\alpha'} ^2/2 & -{\alpha'}  & 0 \\
 {\alpha'} ^2/2 & 1-{\alpha'} ^2/2 & -{\alpha'}   & 0 \\
 -{\alpha'}  & {\alpha'}  & 1 & 0 \\
 0 & 0 & 0 & 1
\end{pmatrix}}, \qquad \text{with} \ \alpha'=\alpha e^{-\gamma} .
\label{mkc}
\end{align}
}

The sign of $\alpha$ is also irrelevant because  a $\pi$-rotation of
the reference frame on the plane $\textrm{Span}(e_2,e_3)$ replaces
$\alpha$ with $\alpha'=-\alpha$.
%
Thus, there will be a reference frame for which $\alpha =1$. The
lightlike vector $e_0+e_1$ in this frame is then rather special
(what is special of it is of course its normalizing module), and it
is the invariant future directed lightlike vector to which the
reconstruction statement of the theorem refers to.

Let us justify the reconstruction claim of the theorem. Suppose, for
instance, that we are given the invariant timelike oriented
2-subspace and constants $\varphi> 0$, $\theta\in [0,\pi]$. We
choose a vector $e_0$, timelike, normalized and future directed on
the plane and $e_1$ orthogonal to it in such a way that $(e_0,e_1)$
is positively oriented. Next we choose $(e_2,e_3)$ spacelike,
normalized, and orthogonal among themselves and with respect to the
distinguished timelike plane. We order them in such a way that
$(e_0,e_1,e_2,e_3)$ is positively oriented. The linear  map
$\Lambda: W\to W$ whose matrix form in the base $\{e_a\}$ is given
by matrix (a) is well defined as it is independent of the chosen
base. The independence comes from the fact that the matrix form (a)
is invariant under boosts of the frame in the $e_1$ direction and
under rotations on the spacelike plane $\textrm{Span}(e_2,e_3)$.

The first part of theorem \ref{nhp} is essentially known
\cite{wigner39,abraham48,lomony63,schwartz63,schwartz65,shaw69}.
Sometimes the  conjugacy classes are incorrectly identified, either
confusing the family of classes as one single conjugacy class, or
not realizing that the matrices of type (b) with $\alpha\ne 0$
belong to the same conjugacy class (we stress that there is only one
parabolic conjugacy class).

The correct identification of the conjugacy classes is important but
this data does not allow us to  recover the transformation
$\Lambda$. Two transformations belong to the same conjugacy class if
it is possible to find two observers (proper orthochronous
orthonormal bases) with respect to which they look the same
\cite{shaw69}.

The last sentence of the theorem allows us to extract the true
physical content of the transformation $\Lambda$. It is important to
realize that in case (a) with $\varphi\ne 0$, the timelike oriented
2-subspace and the coefficients $\vert \varphi \vert$, $\theta$,
have physical significance as they are independent of the reference
frame. In the same way it is important to realize that in case (b),
$\alpha$ has no physical significance while the oriented lightlike
2-subspace and the normalizing lightlike vector on it, do have. Thus
 the same parabolic conjugacy class corresponds to different
parabolic Lorentz transformations $\Lambda$, as there are many
distinct pairs made by an oriented lightlike 2-subspace and a future
directed lightlike vector on it.

This analysis shows that  the rotation axis of Euler's theorem is
replaced here by an oriented causal plane passing through the origin
in which the future direction is suitably normalized (this
normalization can be omitted in the timelike case given the
existence of a Lorentzian induced metric). The rotation angle in
Euler's theorem is instead replaced by parameters $\vert
\varphi\vert$, $\theta$ (in case (a)).


Given a Lorentz transformation $\Lambda: W\to W$ it is possible to
read its conjugacy class through its characteristic polynomial
$p(\lambda)=\det(\Lambda -\lambda I)$.

\begin{theorem}
Let $\Lambda\in L^{\uparrow}_+$, $\Lambda\ne I$, then  the
characteristic polynomial reads
\begin{align*}
&(a')  \qquad p(\lambda)=
(\lambda^2-2\cos\theta \,\lambda+1)(\lambda^2-2\cosh\varphi\,\lambda+1),\\
&(b') \qquad p(\lambda)= (\lambda-1)^4,
\end{align*}
where (a') holds iff case (a)  of theorem \ref{nhp} applies, and
(b') holds iff case (b) of theorem \ref{nhp} applies. In particular,
it is possible to distinguish between cases (a) and (b) and, if case
(a) applies, to read $\varphi \in [0,+\infty)$, and $\theta \in
[0,2\pi)$. Indeed, let $p(\lambda)=\lambda^4-p_3
\lambda^3+p_2\lambda^2-p_1\lambda +1$, then
\begin{align*}
p_3&=p_1=2(\cosh \varphi + \cos \theta),\\
p_2&=2-4\cosh\varphi\,\cos\theta,
\end{align*}
hence
\begin{align*}
\cosh\varphi&=\frac{1}{4}[p_1+\sqrt{p_1^2+4p_2-8}],\\
\cos\theta&=\frac{1}{4}[p_1-\sqrt{p_1^2+4p_2-8}].
\end{align*}
Under the above assumption, namely $\Lambda\ne I$, the result
$\theta=\varphi=0$ implies that case (b) applies.
\end{theorem}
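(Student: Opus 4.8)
The plan is to reduce everything to the canonical forms already established in Theorem \ref{nhp}. That theorem guarantees that every non-trivial $\Lambda \in L^\uparrow_+$ is conjugate inside $SO(1,3)^\uparrow$ to a roto--boost matrix of type (a) or to a null-boost matrix of type (b). Since the characteristic polynomial $p(\lambda)=\det(\Lambda-\lambda I)$ depends only on the conjugacy class, it suffices to compute $p$ on these two representatives; that computation produces exactly the factorizations $(a')$ and $(b')$, after which the remaining assertions are elementary algebra on the coefficients.

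For case (a) I would exploit the block-diagonal structure of the canonical matrix: it is the direct sum of a $2\times2$ boost block (diagonal entries $\cosh\varphi$, off-diagonal entries $-\sinh\varphi$) and a $2\times2$ rotation block (diagonal entries $\cos\theta$, off-diagonal entries $\pm\sin\theta$). Each block has determinant $1$ and trace $2\cosh\varphi$, respectively $2\cos\theta$, hence characteristic polynomial $\lambda^2-2\cosh\varphi\,\lambda+1$, respectively $\lambda^2-2\cos\theta\,\lambda+1$; the characteristic polynomial of the whole matrix is their product, which is $(a')$. For case (b) I would note that the type-(b) matrix is $\exp B$ with $B$ the nilpotent generator of Theorem \ref{vkx} (one checks $B^3=0$, consistently with the explicit $e^B$ in Remark \ref{nur}); hence $\exp B$ is unipotent, all four eigenvalues equal $1$, and $p(\lambda)=(\lambda-1)^4$, which is $(b')$. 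Equivalently one reads $p$ off the explicit matrix in Remark \ref{nur}.

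It then remains to invert the relation between $p$ and $(\varphi,\theta)$. Expanding $(a')$ and writing $p(\lambda)=\lambda^4-p_3\lambda^3+p_2\lambda^2-p_1\lambda+1$ gives $p_3=p_1=2(\cosh\varphi+\cos\theta)$ together with a second symmetric function fixing the product $\cosh\varphi\cos\theta$; thus $\cosh\varphi$ and $\cos\theta$ are the two roots of a quadratic whose coefficients are explicit in $p_1,p_2$, and the labelling (which root is $\cosh\varphi$ and which is $\cos\theta$) is pinned down by the inequality $\cosh\varphi\ge 1\ge\cos\theta$; this fixes the sign of the square root in the stated formulas and yields $\varphi\in[0,\infty)$ and $\theta\in[0,2\pi)$ (the latter combined with the separate discussion of the sign of $\theta$ made earlier). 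Finally, the two families are genuinely separated by $p$, since the polynomials $(a')$ and $(b')$ agree only when $\cosh\varphi=\cos\theta=1$, i.e.\ only for the $\Lambda=I$ instance of (a), which is excluded; and since case (a) with $\Lambda\ne I$ forces $(\varphi,\theta)\ne(0,0)$, obtaining $\theta=\varphi=0$ from the above formulas necessarily means case (b) applies.

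Given Theorem \ref{nhp}, there is really no hard step here: the content is just the block computation in (a) and the unipotence observation in (b). The only points that deserve a moment's attention are the correct choice of square-root branch in the inversion (handled by $\cosh\varphi\ge1\ge\cos\theta$) and the final sentence, which relies on the fact that type (a) with $\Lambda\neq I$ rules out $\varphi=\theta=0$.
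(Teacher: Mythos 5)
Your proposal is correct and follows exactly the route the paper takes: the paper's own proof is the single remark that one should compute the characteristic polynomials of the canonical forms (a) and (b) of Theorem \ref{nhp} and check the algebra, which is precisely what you do (block factorization for (a), unipotence of $e^B$ with $B^3=0$ for (b), then inversion of the elementary symmetric functions with the root labelling fixed by $\cosh\varphi\ge 1\ge\cos\theta$). One small point your computation exposes: expanding (a$'$) actually gives $p_2=2+4\cosh\varphi\,\cos\theta$, so the sign in the paper's displayed formula for $p_2$ and correspondingly in the discriminant (which should read $p_1^2-4p_2+8$) is a typo, while your symmetric-function argument yields the correct inversion.
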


\begin{proof}
It is sufficient to calculate the characteristic polynomials for
cases (a) and (b) of theorem \ref{nhp}, and to check the algebra.
\end{proof}

In case (b) all values $\alpha\ne 0$ correspond to the same
conjugacy class. For $\alpha \to 0$ the matrix representatives
converge to the identity, which belongs to a different conjugacy
class. As a consequence, the conjugacy class given by (b) is not
topologically closed in the topology of the Lie group. Thus, it is
impossible to distinguish between $\Lambda=I$, and case (b) with
$\Lambda\ne I$, by looking at the Ad-invariant continuous functions of
$\Lambda$.

The conjugacy classes of type (a) are topologically closed. Indeed,
the coefficients of the characteristic polynomial $p_1(\Lambda),
p_2(\Lambda)$, are polynomials in the matrix coefficients of
$\Lambda$, and hence are continuous in the Lie group topology.
Functions $\varphi(\Lambda),\theta(\Lambda)$, being continuous in
$p_1,p_2$, are also continuous with respect to the Lie group
topology. Each conjugacy class is determined by its value
$(\varphi,\theta)\in B:=[0,+\infty)\times [0,2\pi)$. The inverse
image of a $B$ point (which is closed) through the continuous map
$(\varphi \times \theta)(\Lambda)$ is a closed set, hence conjugacy
classes of type (a) are closed. Finally, the conjugacy class of the
identity is closed because it is just a point in the Lie group. In
summary.

\begin{proposition}
The only conjugacy class of the Lorentz group which is not
topologically closed is that of type (b). The closure of this class
contains the identity.
\end{proposition}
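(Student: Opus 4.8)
The plan is to assemble the facts already established above with the complete enumeration of conjugacy classes furnished by Theorem \ref{nhp}. By that theorem (applied to a non-trivial $\Lambda$), every conjugacy class of $L^{\uparrow}_{+}$ is exactly one of three kinds: the trivial class $\{I\}$; a class of type (a), and since a type (a) representative with $\varphi=\theta=0$ equals $I$, every such \emph{non-trivial} class has invariants $(\varphi,\theta)\neq(0,0)$; or the unique class of type (b). Hence it suffices to decide closedness for each of these three kinds separately.

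For the trivial class: it is a single point of the Hausdorff topological group $L^{\uparrow}_{+}$, hence closed. For a type (a) class with parameters $(\varphi_0,\theta_0)\neq(0,0)$: as observed just above, the map $\Lambda\mapsto(\varphi(\Lambda),\theta(\Lambda))$ is continuous on the group, being built from the polynomial coefficients $p_1,p_2$ of the characteristic polynomial via the explicit formulas; and by Theorem \ref{nhp} the preimage of the point $(\varphi_0,\theta_0)$ is precisely this conjugacy class, since no element of the class (b) and not $I$ has non-vanishing $(\varphi,\theta)$. Being the preimage of a closed point under a continuous map, the class is closed.

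It remains to treat the class of type (b). Using the explicit representatives $e^{B(\alpha)}$ exhibited in Remark \ref{nur}, one sees that $e^{B(\alpha)}\to I$ entrywise as $\alpha\to 0$; each $e^{B(\alpha)}$ with $\alpha\neq 0$ lies in the class (b), whereas $I$ does not, so $I\in\overline{(\text{class (b)})}\setminus(\text{class (b)})$. Thus the class of type (b) fails to be topologically closed and its closure contains $I$. Since closedness has now been verified for the trivial class and for every type (a) class, and non-closedness for the class (b), the proposition follows. The only point requiring care is purely organizational: one must be sure that the list of conjugacy classes coming from Theorem \ref{nhp} is exhaustive, and that the degenerate value $(\varphi,\theta)=(0,0)$ is carried solely by $\{I\}$ and by the class (b), so that the preimage argument isolates exactly the type (a) classes; this is precisely what Theorem \ref{nhp} guarantees.
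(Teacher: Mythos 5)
Your proof is correct and follows essentially the same route as the paper: closedness of type (a) classes via the continuity of $(\varphi,\theta)(\Lambda)$ in the characteristic-polynomial coefficients $p_1,p_2$ and the preimage-of-a-point argument, closedness of $\{I\}$ as a singleton, and non-closedness of the parabolic class by letting $\alpha\to 0$ in the representatives $e^{B(\alpha)}$. Your extra remark that the degenerate value $(\varphi,\theta)=(0,0)$ is shared only by $\{I\}$ and the class (b) is a worthwhile clarification of a point the paper leaves implicit, but it does not change the argument.
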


Through this same argument we can prove something more. Observe that
function $(\varphi \times \theta)(\Lambda)$  is invariant under
conjugation, thus so are the open  (closed) sets obtained as inverse
images of open (closed) sets. In particular, every distinct pair of
conjugacy classes of type (a) is separated by invariant open sets.

\section{The inhomogeneous Lorentz group}

When working on the affine space $M$, by {\em reference frame} we
shall mean an ordered pair $(o, \{e_a\})$ given by an {\em origin}
$o\in M$ and a proper orthochronous orthonormal base $\{e_a\}$. A
reference frame is then a point in the $SO(1,3)^{\uparrow}$-bundle
$R$ of reference frames \cite{kobayashi63}. Once a reference frame
has been chosen, any point $p\in M$ can be written in a unique way
in coordinates $\{x^a; a=0,1,2,3\}$, as $p=o+ x^a e_a$. For short,
from now on we shall denote the coordinate vector belonging to
$\mathbb{R}^4$ using a bar, e.g. $\bar{x}$.

As we mentioned in section \ref{tre},  a map $P\in
IL^{\uparrow}_{+}$ satisfies $P(p+w)=P(p)+\Lambda(w)$, for every
$p\in M$ and $w\in W$, where $\Lambda \in L^{\uparrow}_{+}$. The map
$P$ lifts to the bundle of reference frames as follows
\[
(o, \{e_a\}) \xrightarrow{P} (P(o), \{e_a'\}), \quad \textrm{where}
\ e_a'=\Lambda(e_a)=(\Lambda^{-1})^b_{\ a}
 e_b.
\]
Once a  reference frame has been chosen, an inhomogeneous proper
orthochronous Lorentz transformation $P$ induces a change of
coordinates
\[
{x^a}'=\Lambda^a_{\ b} x^b-b^a,
\]
where $\Lambda^a_b$ belongs to $SO(1,3)^\uparrow$ and
$P(o)-o=(\Lambda^{-1})^c_{\ d} b^d e_a$. We shall write this
coordinate transformation as
\begin{equation} \label{njw}
\begin{pmatrix} \bar{x}' \\ 1 \end{pmatrix} =\begin{pmatrix} \Lambda & -\bar{b} \\
\bar{0}^\intercal & 1 \end{pmatrix} \begin{pmatrix} \bar{x} \\ 1
\end{pmatrix}.
\end{equation} These matrices form
the group $ISO(1,3)^\uparrow$. The choice of reference frame
establishes an isomorphism between the  inhomogeneous Lorentz group
$IL^{\uparrow}_{+}$ and $ISO(1,3)^\uparrow$. A change of frame acts
as an automorphism $g\to c g c^{-1}$ of $ISO(1,3)^{\uparrow}$. We
remark that the equation $P(o)-o=(\Lambda^{-1})^c_{\ d} b^d e_a$
shows that $P(o)$ is in the causal (chronological) future of $o$ if
and only if $b^d$ is future  directed and causal (resp. timelike).

\subsection{The causal semigroup of $ISO(1,3)^{\uparrow}$}
\label{jnu}

The product of two elements of $ISO(1,3)^{ \uparrow}$ gives
\[
\begin{pmatrix} \Lambda_2 & -\bar{b}_2 \\
\bar{0}^\intercal & 1 \end{pmatrix} \begin{pmatrix} \Lambda_1 & -\bar{b}_1 \\
\bar{0}^\intercal & 1 \end{pmatrix}=\begin{pmatrix} \Lambda_2\Lambda_1 & -(\Lambda_2 \bar{b}_1+\bar{b}_2)  \\
\bar{0}^\intercal & 1 \end{pmatrix}
\]
as a consequence, if $\bar{b}_1$ and $\bar{b}_2$ are future directed
and nonspacelike  so is $\Lambda_2 \bar{b}_1+\bar{b}_2$. Thus we
have a semigroup on $ISO(1,3)^{\uparrow}$ which we call the (future
directed) {\em causal semigroup of} $ISO(1,3)^{\uparrow}$. We denote
it by $J$. Analogous considerations hold  for $\bar{b}$ f.d.\ and
timelike, and correspondingly we have  a timelike semigroup $I$.
Clearly, $I\subset J$. Notice that $J$, contrary to $I$, contains
the identity, thus it is a {\em monoid}. The set $J\cap J^{-1}$ is
the largest group contained in $J$ and it is isomorphic to
$SO(1,3)^{\uparrow}$ because it is made by those matrices for which
$\bar{b}=\bar{0}$.


The semigroups $I$ and $J$ are  important for the following reason.
Suppose that $P:M\to M$, $P\in IL^{\uparrow}_{+}$, is such that
there is a reference frame for which its matrix expression belongs
to $I$ (resp.  $J$). As we observed in section \ref{tre}, this means
that the origin $o$ of the frame that realizes the matrix reduction
is sent to $P(o)$ where $P(o)-o$ is f.d.\ timelike (resp. f.d.\
nonspacelike). At least in the timelike case we can interpret this
transformation as physically admissible. Indeed, we can select a
special point of space, namely the origin $o$, which moves forward
in time along a timelike geodesic segment. The transformation $P$
can then be interpreted as an active transformation induced by the
dragging of spacetime points along with this frame. Furthermore, we
known that the matrix expression changes by conjugacy under frame
changes. Thus, in order to find if a transformation $P$ falls into
this admissible class we have to find if the conjugacy class of the
matrix transformation, obtained in a generic frame, admits some
representative which belongs to $I$. We summarize this result with
the following proposition.

\begin{proposition} \label{ndz}
The map $P\in IL^{\uparrow}_+$ sends some point of $M$ in its
 causal (chronological) future if and only if  there is a
representative in the conjugacy class of its matrix representation
which belongs to the semigroup $J$ (resp. $I$).
\end{proposition}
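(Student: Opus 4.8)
The plan is to prove Proposition \ref{ndz} essentially by unwinding the coordinate description of $IL^{\uparrow}_+$ already established in the excerpt, combined with the behaviour of matrix expressions under change of frame. The key prior observations are: (i) a fixed reference frame identifies $P\in IL^{\uparrow}_+$ with a matrix $g=\left(\begin{smallmatrix}\Lambda & -\bar b\\ \bar 0^\intercal & 1\end{smallmatrix}\right)\in ISO(1,3)^\uparrow$; (ii) from the relation $P(o)-o=(\Lambda^{-1})^c_{\ d}b^d e_c$, the displacement $P(o)-o$ is f.d.\ timelike (resp.\ f.d.\ nonspacelike) if and only if $\bar b$ is f.d.\ timelike (resp.\ f.d.\ nonspacelike), because $\Lambda^{-1}\in L^{\uparrow}_+$ preserves the future timelike and future causal cones; and (iii) a change of reference frame replaces $g$ by a conjugate $cgc^{-1}$ with $c\in ISO(1,3)^\uparrow$, and conversely every element of the conjugacy class of $g$ arises from some change of frame (since $ISO(1,3)^\uparrow$ acts on the frame bundle $R$ in such a way that the conjugating element can be any group element).

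The proof then splits into the two implications, and I will do the causal/$J$ case, the chronological/$I$ case being identical with ``nonspacelike'' replaced by ``timelike'' throughout. For the forward direction: suppose $P$ sends some point $p\in M$ into $J^+(p)$. Choose a reference frame with origin $o=p$; in that frame the matrix of $P$ is some $g=\left(\begin{smallmatrix}\Lambda & -\bar b\\ \bar 0^\intercal & 1\end{smallmatrix}\right)$, and by observation (ii) the hypothesis $P(p)\in J^+(p)$ forces $\bar b$ to be f.d.\ nonspacelike, i.e.\ $g\in J$ by the definition of $J$ in section \ref{jnu}. Since changing the origin of the frame from a generic starting frame to $p$ is implemented by conjugation, $g$ lies in the conjugacy class of the matrix representation of $P$ computed in any other frame, so that conjugacy class meets $J$. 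For the converse: suppose the conjugacy class of the matrix of $P$ (in some fixed frame) contains an element $g'\in J$. By observation (iii) there is a frame in which the matrix of $P$ equals $g'$; write $g'=\left(\begin{smallmatrix}\Lambda' & -\bar b'\\ \bar 0^\intercal & 1\end{smallmatrix}\right)$ with $\bar b'$ f.d.\ nonspacelike (possibly $\bar b'=\bar 0$, which is allowed since $J$ is a monoid). Applying observation (ii) in that frame, with $o'$ its origin, gives $P(o')-o'$ f.d.\ nonspacelike, hence $P(o')\in J^+(o')$, so $P$ sends the point $o'$ into its causal future. This closes the equivalence.

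The one place that needs a word of care — and what I would flag as the main (mild) obstacle — is the precise statement that every representative of the conjugacy class is realized by an honest change of reference frame, and the edge case $\bar b=\bar 0$. For the former: the excerpt states that a change of frame acts as $g\mapsto cgc^{-1}$ with $c$ ranging over $ISO(1,3)^\uparrow$, and conversely for any $c\in ISO(1,3)^\uparrow$ there is a corresponding frame change (this is just the transitivity of the $ISO(1,3)^\uparrow$-action on the bundle $R$ of reference frames, equivalently the fact that the isomorphism $IL^{\uparrow}_+\cong ISO(1,3)^\uparrow$ depends on the frame exactly through such a conjugation); so given $g'=cgc^{-1}\in J$ we simply take the frame obtained from the original one via $c$. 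For the edge case, if the realizing $g'\in J$ has $\bar b'=\bar 0$ then $P(o')=o'$ and indeed $o'\in J^+(o')$ since $J^+(x)$ contains $x$ by our conventions; this is consistent with $J$ being a monoid containing the identity, and with the fact that the chronological statement genuinely excludes this case (one needs $\bar b'$ timelike, in particular nonzero). I would note explicitly in the write-up that ``$P$ sends some point to its causal future'' is understood in the reflexive sense matching $J^+$, so that the identity map trivially satisfies it and correspondingly $I\in J$; the chronological version has no such degeneracy.

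Concretely, then, the write-up is short: state the two conventions (reflexive $J^+$; $ISO(1,3)^\uparrow$ realizes all frame changes), recall observation (ii) as the single computational input, and give the two one-paragraph implications above, once for $(J,\text{nonspacelike},\text{causal})$ and once verbatim for $(I,\text{timelike},\text{chronological})$. No nontrivial computation is required beyond the already-displayed identity $P(o)-o=(\Lambda^{-1})^c_{\ d}b^d e_c$ and the invariance of the future cones under $L^{\uparrow}_+$.
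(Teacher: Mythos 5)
Your proof is correct and follows essentially the same route as the paper: the forward direction by choosing a frame with origin at the point that is causally displaced, and the converse by reading off the displacement $P(o')-o'$ from the identity $P(o)-o=(\Lambda^{-1})^c_{\ d}b^d e_c$ in the frame realizing the representative in $J$ (resp.\ $I$). The paper's own proof is just a terser version of this; your extra remarks on the reflexivity of $J^{+}$ and on every conjugate being realized by an actual frame change are consistent with the paper's conventions and fill in details it leaves implicit.
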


\begin{proof}
Suppose that some point $o\in M$ is sent into its causal future.
Choose a frame at $o$, then the matrix representation of $P$ belongs
to $J$. The other direction has been proved above.
\end{proof}


While the group $ISO(1,3)^{\uparrow}$ is  the group of symmetries of the
spacetime manifold, the semigroup $I$ distinguishes itself as the
semigroup of symmetries that can be induced by the actual physical
movement of a frame on $M$. Here the elements   $ISO(1,3)^{\uparrow}$
 which have  to be discarded are those for which
there  is no point that is sent to its chronological future. The
elements in $E:=J\backslash I$ are rather special. These are maps
which respect causality at some point but which do not represent the
physical movement of a massive reference frame.

\subsection{The Lie algebra and its
interpretation}

The choice of reference frame establishes an isomorphism between
$IL^{\uparrow}_{+}$ and the group $ISO(1,3)^{\uparrow}$ made of
matrices $\begin{pmatrix} \Lambda & -\bar{b} \\ \bar{0}^{T} &
1\end{pmatrix}$ where $\Lambda \in SO(1,3)^{\uparrow}$ and
$\bar{b}\in \mathbb{R}^4$. Furthermore, it establishes a Lie algebra
isomorphism between the Lie algebra of $IL^{\uparrow}_{+}$,
$\mathfrak{IL}$, and the
Lie algebra  $\mathfrak{iso}(1,3)$ made of matrices $\begin{pmatrix} F& -\bar{w} \\
\bar{0}^{T} & 0\end{pmatrix}$, where $F \in
\mathfrak{so}(1,3)^{\uparrow}$ and $\bar{w}\in \mathbb{R}^4$, and
where the commutator in $\mathfrak{iso}(1,3)$ is the usual matrix
commutator.

Let us remind that $F\in \mathfrak{so}(1,3)^{\uparrow}$  iff
$F^{a}_{\ b}$ satisfies $F_{ab}+F_{ba}=0$, where the indices are
lowered using $\eta_{cd}$. This Lie algebra coincides with the Lie
algebra of the group $O(1,3)$ (because $SO(1,3)^{\uparrow}$ is the
connected component of $O(1,3)$ which contains the identity).

A significative base for $\mathfrak{iso}(1,3)$ is given by
\begin{align*}
J^{a b}&= \begin{pmatrix}M^{a b} & \bar{0}
\\\bar{0}^\intercal & 0
\end{pmatrix}, \\
P^a & = \begin{pmatrix} 0 & \bar{m}^a
\\ \bar{0}^\intercal & 0
\end{pmatrix} ,
\end{align*}
where
\begin{align*}
(M^{a b})^{c}_{\ d}&=\eta^{a c}
\delta^{b}_{d}-\eta^{b c} \delta^{a}_{d} , \\
(\bar{m}^a)^c&=\eta^{a c} .
\end{align*}
The subalgebra generated by $J^{a b}$ is the Lie algebra
$\mathfrak{so}(1,3)$. The non-vanishing commutation relations are
\begin{align}
{}[J^{a b}\!, J^{c d}] \!\!&=\! \!\eta^{a d} J^{b c}\!+\!\eta^{b c}
J^{c d}\!\!-\!\eta^{a c}J^{b d}\!\!-\!
\eta^{b d} J^{a c}, \\
[J^{a b}, P^{c}  ] \! &=\! \eta^{c b}P^{a} - \eta^{c a} P^{b} .
\end{align}
We shall write $H=P^0$. We introduce the generators
\begin{align*}
K^i&=J^{0 i}, \\
J^i&=\frac{1}{2}\epsilon_{ijk} \, J^{jk}, \qquad
(J^{jk}=\epsilon_{ijk}\, J^i).
\end{align*}
The non-vanishing commutation relations are (lowering space indices
does not introduce minus signs)
\begin{align*}
[J_i, J_j]&= -\epsilon_{ijk}\, J_k , \  &[J_i, K_j]&= - \epsilon_{ijk} \,K_k , \ &[K_i, K_j]&=  \epsilon_{ijk} \,J_k,\\
[J_i, P_j]&= - \epsilon_{ijk} \,P_k, \ &[K_i, P_j]&= \delta_{ij} \,
H , \ &[K_i, H]&=  P_i.
\end{align*}

The following matrix expressions clarify our conventions (which are
the same of \cite{minguzzi05e}).

{\footnotesize
\begin{align*}
  K^1&=\begin{pmatrix}
0 & -1 & 0 & 0 & 0 \\
-1 & 0 & 0 & 0 & 0 \\
0 & 0 & 0 & 0 & 0 \\
0 & 0 & 0 & 0 & 0 \\
0 & 0 & 0 & 0 & 0 \\
\end{pmatrix},  &J^3&= \begin{pmatrix}
0 & 0 & 0 & 0 & 0 \\
0 & 0 & 1 & 0 & 0 \\
0 & -1 & 0 & 0 & 0 \\
0 & 0 & 0 & 0 & 0 \\
0 & 0 & 0 & 0 & 0 \\
\end{pmatrix},  \\
 H&=
\begin{pmatrix}
0 & 0 & 0 & 0 & -1 \\
0 & 0 & 0 & 0 & 0 \\
0 & 0 & 0 & 0 & 0 \\
0 & 0 & 0 & 0 & 0 \\
0 & 0 & 0 & 0 & 0 \\
\end{pmatrix},  &P^1&=
\begin{pmatrix}
0 & 0 & 0 & 0 & 0 \\
0 & 0 & 0 & 0 & 1 \\
0 & 0 & 0 & 0 & 0 \\
0 & 0 & 0 & 0 & 0 \\
0 & 0 & 0 & 0 & 0 \\
\end{pmatrix}.
\end{align*}}

Arguing as in section \ref{kwi} we find that $\mathfrak{IL}$ is a
subalgebra of the algebra of vector fields on $M$, the
correspondence between matrices and vector fields being ($e_d=\p_d$)
\begin{equation} \label{ntv}
I+\epsilon [\frac{1}{2}\Omega_{a b}J^{a b}- b_{c} P^{c}] \
\xleftrightarrow{ \ \{e_a\}\ } \ [-\Omega^{d}_{\ b} x^b+ b^{d}]\,
e_{d},
\end{equation}


By {\em observer} we mean a f.d.\ timelike worldline $\tau \to
\gamma(\tau)$, parametrized with respect to proper time and a
reference frame  $\{e_a\}(\tau)$ over it such that at any point
$e_0=\p_{\tau}$. If the tetrad is parallely transported the observer
is {\em inertial}. Starting from $\{e_a\}(\tau_0)$ one can regard
the motion of the observer as the repeated action of Lorentz
transformations $\delta \Lambda \in IL^\uparrow_{+}$ sending
$(\gamma(\tau), \{e_a\}(\tau))$ to $(\gamma(\tau+\delta
\tau),\{e_a\}(\tau+\delta \tau))$. At each instant we have
coordinates $x^a_{\tau}$ associated to the frame $\{e_a\}(\tau)$,
thus $\delta\Lambda(\tau)$ induces a change of coordinates. Let
$\vec{a}$ be the acceleration of the observer, and let
$\vec{\omega}$ be its angular velocity (all quantities are measured
by herself). The coordinate change induced by the motion of the
observer  in a proper time interval $\dd \tau$ is
\begin{equation} \label{njb}
I+(a_i K^i+\omega_k J^k+ H) \dd \tau.
\end{equation}
as it can be easily inferred from its matrix form (see
\cite{minguzzi05e} for another argument). In this equation
$a_i=\Omega_{0i}$, and $\omega_k =
\frac{1}{2}\epsilon_{kij}\Omega_{ij}$, where we have identified the
small parameter $\epsilon$ with $\dd \tau$. Thus the infinitesimal
motion of the observer is given by  matrix (\ref{ntv}) for $b^0 =1$
and $b^i=0$ for $i=1,2,3$. The vector field which generates the
infinitesimal transformation of $M$, and hence the change of
reference frame to which correspond the coordinate change
(\ref{njb}), is
\begin{equation}
\p_0+a_i(x^0\p_i+x^i\p_0)+\omega_k(\epsilon_{kij}x^i\p_j)
\end{equation}
where we used the corresponences
\begin{align*}
K_i & \leftrightarrow x^0\p_i+x^i\p_0, &J_i & \leftrightarrow
\epsilon_{kij}x^i\p_j,  \\
H & \leftrightarrow \p_0 , &P_i & \leftrightarrow -\p_i.\\
\end{align*}
It is easy to check that these vector fields  satisfy the same
commutation relations of their matrix counterparts.


 It is natural to ask why in the full inhomogeneous Lorentz
group we have to consider translations generated by $P^i$ if they do
not appear as Lie algebra generators of the observer's movements.
The answer is that the operators $P^i$ arise at the
non-infinitesimal level, through the compositions of several
operations of the above type, as a consequence of the Lie algebra
commutation relations (e.g.\ through Baker-Campbell-Hausdorff
formula).



\subsection{Exponentiality of $ISO(1,3)^\uparrow$}

Let us consider the equation which defines the exponential map
\[
\frac{\dd }{\dd s}
\begin{pmatrix}
\Lambda & - \bar{b} \\
\bar{0}^\intercal & 1
\end{pmatrix}
=
\begin{pmatrix}
F & - \bar{w} \\
\bar{0}^\intercal & 0
\end{pmatrix}
\begin{pmatrix}
\Lambda & - \bar{b} \\
\bar{0}^\intercal & 1
\end{pmatrix},
\]
with initial condition $\Lambda=I$, $\bar{b}=\bar{0}$.  The matrix
equation  is equivalent to the system
\begin{align}
\frac{\dd }{\dd s} \,\Lambda &=F\Lambda,  \label{ndg}\\
\frac{\dd }{\dd s} \,\bar{b} &=F\bar{b}+\bar{w}. \label{ndf}
\end{align}
Let $\bar{c}(s)$ be such that $\bar{b}(s)=\Lambda(s)\bar{c}(s)$
(thus $\bar{b}$ is f.d.\ nonspacelike iff $\bar{c}$ is). Equation
(\ref{ndf}) is equivalent to
\begin{equation} \label{lny}
\frac{\dd }{\dd s} \,\bar{c}(s)=\Lambda^{-1}(s) \bar{w}.
\end{equation}
Through these equations, and using the exponentiality of
$SO(1,3)^{\uparrow}$,  we are now able to prove.

\begin{theorem} \label{biq}
The group $ISO(1,3)^\uparrow$ is exponential. More in detail, for
every $F\in \mathfrak{so}(1,3)$ the matrix $(e^F-I)/F$ is well
defined and invertible. Let ${\footnotesize \begin{pmatrix}
\Lambda & - \bar{b} \\
\bar{0}^\intercal & 1
\end{pmatrix}} \in ISO(1,3)^\uparrow$ (that is, $\Lambda \in
SO(1,3)^{\uparrow}$), then there is some  $F\in \mathfrak{so}(1,3)$
such that $\Lambda=\exp F$ and for any such choice the vector
$\bar{w}$ defined by
\[
\bar{w}=\frac{F}{e^F-I} \,  \bar{b}
\]
is such that, $\exp[ {\footnotesize \begin{pmatrix}
F & - \bar{w} \\
\bar{0}^\intercal & 0
\end{pmatrix}}]= {\footnotesize \begin{pmatrix}
\Lambda & - \bar{b} \\
\bar{0}^\intercal & 1
\end{pmatrix}}$ and is the only vector with this property.
Furthermore, $\bar{w}$ is a f.d.\ null eigenvector of $F$ with
eigenvalue $\lambda\in\mathbb{R}$, if and only if $\bar{b}$ is a
f.d.\ null eigenvector of $\Lambda$ with eigenvalue $\exp \lambda
\in (0,+\infty)$, in which case $\bar{w}=\frac{\lambda}{e^\lambda-1}
\bar{b}$.
\end{theorem}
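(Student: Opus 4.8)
The plan is to read off the content of the theorem directly from the two ODEs \eqref{ndg}--\eqref{ndf} that define the exponential map, together with the exponentiality of $SO(1,3)^\uparrow$ already established in Theorem \ref{nco}. First I would dispose of the analytic claim that $(e^F-I)/F$ is a well-defined invertible matrix for every $F\in\mathfrak{so}(1,3)$. Writing $g(z)=(e^z-1)/z=\sum_{n\ge 0}z^n/(n+1)!$, the matrix $g(F)$ is defined by the everywhere-convergent power series, so it is always well defined; it is invertible iff $g(\lambda)\ne 0$ for every eigenvalue $\lambda$ of $F$. The zeros of $g$ are $z=2\pi i k$, $k\in\mathbb{Z}\setminus\{0\}$. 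So the point to check is that no eigenvalue of a matrix in $\mathfrak{so}(1,3)$ is a nonzero integer multiple of $2\pi i$. By Theorem \ref{vkx} every $F$ is conjugate to $A(\varphi,\theta)$ or $B(\alpha)$; the eigenvalues of $A$ are $\{\pm\varphi,\pm i\theta\}$ and those of $B$ are all $0$. An eigenvalue $2\pi i k$ with $k\ne 0$ would force $\varphi=0$ and $\theta=\pm 2\pi k$; but then $A(0,2\pi k)=A(0,0)$ is the zero matrix up to the angle identification, i.e.\ $e^F=I$ on the relevant block and in fact $F$ itself, restricted to $\mathrm{Span}(e_2,e_3)$, generates a $2\pi k$ rotation — here the subtlety is that such an $F$ is \emph{not} the zero matrix, so I must argue instead that if $\Lambda=e^F$ is given then we are free to \emph{choose} the logarithm $F$, and exponentiality of $SO(1,3)^\uparrow$ lets us pick one whose rotation parameter $\theta$ lies in $[0,2\pi)$, hence with no eigenvalue equal to $2\pi i k$, $k\ne0$. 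I expect this normalization of the logarithm to be the one genuine subtlety in the whole argument.

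Granting that $F/(e^F-I)=g(F)^{-1}$ makes sense for the chosen $F$, the existence and uniqueness statement is essentially a linear algebra identity. Solving \eqref{ndf} with $\Lambda(s)=e^{sF}$ and $\bar b(0)=\bar 0$ by variation of constants gives
\[
\bar b(s)=\Big(\int_0^s e^{(s-u)F}\,\dd u\Big)\bar w = e^{sF}\Big(\int_0^s e^{-uF}\,\dd u\Big)\bar w.
\]
At $s=1$ this reads $\bar b = \big(\tfrac{e^F-I}{F}\big)\bar w = g(F)\,\bar w$, so $\bar w = g(F)^{-1}\bar b = \tfrac{F}{e^F-I}\bar b$ is forced; conversely this $\bar w$ produces exactly the required $\bar b$. (Equivalently: $\tfrac{\dd}{\dd s}(e^{-sF}\bar b(s))=e^{-sF}\bar w$, integrate from $0$ to $1$.) This simultaneously proves existence, uniqueness of $\bar w$ given $F$, and — combined with the previous paragraph's choice of $F$ — surjectivity of $\exp:\mathfrak{iso}(1,3)\to ISO(1,3)^\uparrow$, i.e.\ exponentiality of the group. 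I would note here that $\exp\!\big[\begin{smallmatrix}F&-\bar w\\ \bar 0^\intercal&0\end{smallmatrix}\big]$ has upper-left block $e^F=\Lambda$ and upper-right block $-g(F)\bar w=-\bar b$, matching the target matrix.

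For the final sentence about null eigenvectors, suppose $F\bar w=\lambda\bar w$ with $\lambda\in\mathbb{R}$ and $\bar w$ f.d.\ null. Then $\bar b=g(F)\bar w = g(\lambda)\bar w$, and since $\lambda\in\mathbb{R}$ we have $g(\lambda)=(e^\lambda-1)/\lambda>0$ (including the removed singularity $g(0)=1$), so $\bar b$ is a positive multiple of $\bar w$, hence also f.d.\ null, and $\Lambda\bar b = e^F\bar w = e^\lambda\bar w = e^\lambda\bar b$ with $e^\lambda>0$. Conversely, if $\bar b$ is a f.d.\ null eigenvector of $\Lambda$ with eigenvalue $\mu>0$, then by Proposition \ref{nux} $\bar b$ is an eigenvector of $F$, say $F\bar b=\lambda\bar b$; applying $e^F$ gives $e^\lambda=\mu>0$, so $\lambda=\ln\mu\in\mathbb{R}$, and then $\bar w = g(F)^{-1}\bar b = g(\lambda)^{-1}\bar b = \tfrac{\lambda}{e^\lambda-1}\bar b$ is an f.d.\ null eigenvector of $F$ with eigenvalue $\lambda$. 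This closes the equivalence and records the stated formula $\bar w=\tfrac{\lambda}{e^\lambda-1}\bar b$. The only place where care is needed is the real-eigenvalue hypothesis, which is exactly what rules out $g(\lambda)$ vanishing or being non-real, so the argument goes through cleanly.
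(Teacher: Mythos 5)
Your proposal follows essentially the same route as the paper: well-definedness of $g(F)=(e^F-I)/F$ via the entire function $g$, reduction of invertibility to the canonical representatives $A$ and $B$ of Theorem \ref{vkx}, variation of constants on Eq.~(\ref{ndf}) to force $\bar b=g(F)\bar w$, and Proposition \ref{nux} for the eigenvector statement. The one real difference is in the invertibility step: the paper computes $(e^A-I)/A$ and $(e^B-I)/B$ explicitly and reads off the determinants $\frac{2(\cosh\varphi-1)}{\varphi^2}\,\frac{2(1-\cos\theta)}{\theta^2}$ and $1$, while you apply the spectral mapping theorem ($g(F)$ is invertible iff $g(\lambda)\ne 0$ on the spectrum of $F$, and the zeros of $g$ are $2\pi i k$, $k\ne 0$). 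Your version is sharper, and it correctly exposes a point the paper glosses over: the blanket claim that $(e^F-I)/F$ is invertible for \emph{every} $F\in\mathfrak{so}(1,3)$ is false, since for $F$ conjugate to $A(\varphi,\theta)$ with $\theta\in 2\pi\mathbb{Z}\setminus\{0\}$ the paper's own determinant factor $\frac{2(1-\cos\theta)}{\theta^2}$ vanishes. Your repair --- use the surjectivity of $\exp$ on $SO(1,3)^\uparrow$ to \emph{choose} a logarithm $F$ of $\Lambda$ whose rotation parameter lies in $[0,2\pi)$ --- is the right one and is exactly what is needed for the exponentiality claim; note only that the theorem's ``for any such choice'' clause then inherits the same restriction on $F$. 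Two small corrections to your write-up: an eigenvalue $2\pi i k$ does not force $\varphi=0$ (the problem already occurs for $A(\varphi,2\pi k)$ with $\varphi\ne 0$, where $e^F$ is a pure boost; the fix is still to drop $\theta$ to $0$ mod $2\pi$), and in the converse direction of the final claim you should remark that $\lambda$ is automatically real because $F$ and $\bar b$ are real and $\bar b\ne 0$. Everything else --- the variation-of-constants identity, uniqueness of $\bar w$ given $F$, and the computation $g(\lambda)=(e^\lambda-1)/\lambda>0$ for real $\lambda$ --- matches the paper and is correct.
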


\begin{proof}

The function $f(z)=(e^z-1)/z$ is analytic thus it makes sense to
consider  $f(F)$, for $F\in \mathfrak{so}(1,3)$. Since $f$ satisfies
$f(LFL^{-1})=Lf(F) L^{-1}$ for $L\in SO(1,3)^\uparrow$, in order to
prove its invertibility we have just to verify this property over
one representative for each orbit on $\mathfrak{so}(1,3)$.
Therefore, we can  use the  representatives  $A$ and $B$ selected in
theorem \ref{vkx}.

If $F=A$ then
\[
(e^A-I)/A=\begin{pmatrix}
 \frac{\sinh \varphi}{\varphi} &  \frac{1-\cosh\varphi}{\varphi} & 0 & 0 \\
 \frac{1-\cosh\varphi}{\varphi}  & \frac{\sinh \varphi}{\varphi} & 0  & 0 \\
 0  & 0 & \frac{\sin \theta}{\theta} & \frac{1-\cos \theta}{\theta}\\
 0 & 0 & \frac{1-\cos \theta}{\theta} & \frac{\sin \theta}{\theta}
\end{pmatrix},
\]
which has positive determinant $\frac{2(\cosh \varphi
-1)}{\varphi^2} \frac{2(1 -\cos\theta)}{\theta^2}$ (this expression
makes sense and is finite for $\varphi=0$ or $\theta=0$).

If $F=B$ we can use $B^3=0$ so that
\[
(e^B-I)/B=I+\frac{B}{2}+\frac{B^2}{6} =
\begin{pmatrix}
 1+ \alpha ^2/6 &  -\alpha^2/6 & -\alpha/2 & 0 \\
 \alpha^2/6 & 1-\alpha ^2/6 & -\alpha/2  & 0 \\
 -\alpha/2 & \alpha/2 & 1 & 0 \\
 0 & 0 & 0 & 1
\end{pmatrix},
\]
which has determinant equal to 1.

Let us try to find $\bar{w}$ in such a way that  $\exp[
{\footnotesize \begin{pmatrix}
F & - \bar{w} \\
\bar{0}^\intercal & 0
\end{pmatrix}}]= {\footnotesize \begin{pmatrix}
\Lambda & - \bar{b} \\
\bar{0}^\intercal & 1
\end{pmatrix}}$.
 From Eq. (\ref{lny}) it follows
\[
\bar{c}(1)=[\int_0^1 \Lambda^{-1}(s) \dd s] \bar{w}
\]
and we must comply with $\bar{b}(1)=\Lambda(1) \bar{c}(1)$. But the
solution to Eq. (\ref{ndg}) is $\Lambda(s)=\exp[F s]$, and by
assumption $\Lambda=\Lambda(1)=\exp F$, thus
\[
\bar{b}= (\exp F) \bar{c}(1)= \exp F [\int_0^1 e^{-F s} \dd s]
\bar{w}=\frac{\exp F-I}{F} \,\bar{w}
\]
Since the matrix on the right-hand side is invertible there is one
and only one vector $\bar{w}$ which complies with this equation.

Finally, the last statement  follows easily from the fact that
$\Lambda$ and $F$ have the same eigenvectors (Prop. \ref{nux}).

\end{proof}

\subsection{Ad-invariants and Lie algebra orbits} \label{bos}

Let us consider the Ad-action of $ISO(1,3)^{\uparrow}$ on
$\mathfrak{iso}(1,3)$,
\[
\begin{pmatrix} L & -\bar{a} \\
  \bar{0}^\intercal & 1 \end{pmatrix}
\begin{pmatrix}
F & - \bar{w} \\
\bar{0}^\intercal & 0
\end{pmatrix}
\begin{pmatrix} L & -\bar{a} \\
  \bar{0}^\intercal & 1 \end{pmatrix}^{-1}=\begin{pmatrix}
L F L^{-1} & LFL^{-1} \bar{a}- L\bar{w} \\
\bar{0}^\intercal & 0
\end{pmatrix}.
\]
We shall be interested in the separated effect of
\begin{itemize}
\item[(i):] homogeneous transformations of the frame
\begin{equation}
\begin{array}{l}
F\to L F L^{-1},\\
\bar{w}\to L\bar{w},
\end{array} \label{hom}
\end{equation}
\item[(ii):] translations of the frame
\begin{equation}
\begin{array}{l}
F\to F,\\
\bar{w}\to \bar{w}-F\bar{a}.
\end{array} \label{tra}
\end{equation}
\end{itemize}
The  action on the homogeneous part $F$ coincides with the Ad-action
of $SO(1,3)^{\uparrow}$ on $\mathfrak{so}(1,3)$.
%
%
Thus the invariants $I_1$ and $I_2$ of section \ref{igk} are still
invariants for the Ad action on the inhomogeneous Lie algebra. We
are going to show that if $I_2=0$ then there is a third invariant
$I_3$. It represents a kind of relativistic generalization of the
square of the screw scalar $\langle v,v\rangle$ (unfortunately,
there does not seem to be any convenient relativistic generalization
of the screw product).

It will be convenient to keep in mind that the most generic frame
transformation can be accomplished through a translation followed by
a homogeneous transformation, according to this scheme
\begin{align*}
F&\to  F  \to LFL^{-1},\\
\bar{w}&\to \bar{w}-F (L^{-1}\bar{a})\to L[\bar{w}-F
(L^{-1}\bar{a})].
\end{align*}

\begin{lemma} \label{jef}
Let $F\in \mathfrak{so}(1,3)$,  and let $\tilde{F}_{c d}=\frac{1}{2}
\varepsilon_{a b c d} F^{a b}$, then
\[
\tilde{F}_{c d} F^{c e}=\frac{1}{4} (F^{a b} \tilde{F}_{a b})
\delta^e_{d}
\]

\end{lemma}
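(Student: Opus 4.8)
The plan is to prove this identity --- which says that the ``dual tensor'' $\tilde F$ composed with $F$ is a multiple of the identity, with the multiple fixed by the scalar invariant $F^{ab}\tilde F_{ab}$ --- by a combination of invariance arguments and a direct check on canonical representatives. First I would note that both sides transform in the same way under the Ad-action of $SO(1,3)^\uparrow$: the left-hand side $\tilde F_{cd}F^{ce}$ is a $(1,1)$-tensor built covariantly from $F$ (here $\tilde F_{cd}=\frac12\varepsilon_{abcd}F^{ab}$ uses the Levi-Civita tensor, which is invariant under $SO(1,3)^\uparrow$ since these transformations have determinant $1$), and the right-hand side is $\tfrac14(F^{ab}\tilde F_{ab})\delta^e_d$, which involves only the scalar invariant $I$ and the identity. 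Hence it suffices to verify the identity for one representative in each Ad-orbit on $\mathfrak{so}(1,3)$, and Theorem \ref{vkx} tells us these are exhausted by the canonical forms $A=A(\varphi,\theta)$ and $B=B(\alpha)$.

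Next I would carry out the two explicit computations. For $F=A$, the tensor $F^{ab}$ has nonzero components essentially $F^{01}=-F^{10}=\varphi$ (after raising the index on the $0$-row, the sign flips because $\eta_{00}=-1$; I would be careful here) and $F^{23}=-F^{32}=\theta$. The dual $\tilde F_{cd}=\frac12\varepsilon_{abcd}F^{ab}$ then has, schematically, $\tilde F_{23}\propto F^{01}$ and $\tilde F_{01}\propto F^{23}$, i.e. the dual exchanges the ``$01$-block'' and the ``$23$-block''. Computing $\tilde F_{cd}F^{ce}$ block by block, one finds it is proportional to $\delta^e_d$ with coefficient proportional to $\varphi\theta$, and the scalar $F^{ab}\tilde F_{ab}$ is also proportional to $\varphi\theta$; matching the numerical constants confirms the $\tfrac14$. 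For $F=B$, one checks directly that $B^{ab}$ is null in the relevant sense --- in fact $\tilde B_{ab}B^{ab}=0$, consistent with $I_2=\theta\varphi=0$ for type (b) --- and that $\tilde B_{cd}B^{ce}$ vanishes identically, so both sides are zero. The trivial orbit $F=0$ is immediate.

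The main obstacle, such as it is, is purely bookkeeping: getting all the signs right from the signature $(-,+,+,+)$ when raising indices on $F$ and $\tilde F$, and fixing the normalization constant in the duality relation so that the factor $\tfrac14$ comes out correctly rather than $\tfrac12$ or $\tfrac18$. A clean way to organize this is to use the standard identity $\varepsilon_{abcd}\varepsilon^{efcd} = -2(\delta_a^e\delta_b^f-\delta_a^f\delta_b^e)$ (the minus sign from the Lorentzian signature), from which $\tilde F_{cd}F^{ce}$ and $F^{ab}\tilde F_{ab}$ can both be rewritten in terms of $F^{ab}$ and $\varepsilon$, and the identity reduces to an algebraic contraction of two $\varepsilon$-symbols; this avoids relying on a particular canonical form at all and makes the sign tracking systematic. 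I would present whichever of the two routes (orbit representatives versus direct $\varepsilon$-algebra) is shorter, most likely the direct one, falling back on the representative check only if a self-contained verification is preferred.
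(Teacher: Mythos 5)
Your preferred route (the direct $\varepsilon$-algebra) is exactly the paper's proof: one writes $F^{\gamma\sigma}=-\tfrac12\varepsilon^{\gamma\sigma\eta\nu}\tilde F_{\eta\nu}$ (using $\tilde{\tilde F}=-F$ in Lorentzian signature) and contracts the two Levi-Civita symbols. One correction, though: the identity you quote, $\varepsilon_{abcd}\varepsilon^{efcd}=-2(\delta_a^e\delta_b^f-\delta_a^f\delta_b^e)$, is the \emph{two}-index contraction, whereas after the substitution only \emph{one} index is shared between the two $\varepsilon$'s, so what is actually needed is $\varepsilon_{\alpha\beta\gamma\delta}\,\varepsilon^{\gamma\sigma\eta\nu}=-\delta^{\sigma\eta\nu}_{\alpha\beta\delta}$, the $3\times 3$ generalized Kronecker determinant with six terms. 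Expanding that determinant (as the paper does, by cofactors along one column) produces $\tfrac12 F^{\eta\nu}\tilde F_{\eta\nu}\,\delta^\sigma_\delta$ plus two cross terms which, by antisymmetry, together equal $-\tilde F_{\gamma\delta}F^{\gamma\sigma}$; moving that term to the left and dividing by $2$ gives the stated factor $\tfrac14$. With the two-index contraction alone you cannot close the computation, so be sure to use the single contraction. Your fallback route --- Ad-covariance of both sides plus a check on the representatives $A$ and $B$ of Theorem \ref{vkx} (your component computations are right: for $A$ both sides equal $-\varphi\theta\,\delta^e_d$, and for $B$ both vanish) --- is genuinely different from the paper's and is also valid; it trades a few lines of index algebra for a logical dependence on the orbit classification, which the lemma does not otherwise need, so the paper's self-contained $\varepsilon$-computation is the cleaner choice.
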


\begin{proof}
\begin{align*}
\tilde{F}_{\gamma \delta} F^{\gamma \sigma}&=\frac{1}{2}
\varepsilon_{\alpha \beta \gamma \delta} F^{\alpha \beta} F^{\gamma
\sigma}=\frac{1}{2} \varepsilon_{\alpha \beta \gamma \delta}
F^{\alpha \beta} [-\frac{1}{2} \varepsilon^{\gamma \sigma \eta \nu}
\tilde{F}_{\eta \nu} ]=-\frac{1}{4} F^{\alpha \beta} \tilde{F}_{\eta
\nu} (-\delta^{\sigma\eta \nu}_{\alpha \beta \delta})\\
&= \frac{1}{4} F^{\alpha \beta} \tilde{F}_{\eta \nu} (\delta^{\eta
\nu}_{\alpha \beta}\delta^\sigma_\delta+\delta^{\sigma \eta}_{\alpha
\beta} \delta^\nu_\delta+\delta^{\nu \sigma}_{\alpha
\beta}\delta^\eta_\delta)=\frac{1}{2} (F^{\eta \nu} \tilde{F}_{\eta
\nu} \delta^\sigma_\delta +F^{\sigma \eta} \tilde{F}_{\eta \delta}+
F^{\nu \sigma} \tilde{F}_{\delta \nu})\\
&=-\tilde{F}_{\gamma \delta} F^{\gamma \sigma}+\frac{1}{2}
(F^{\alpha \beta} \tilde{F}_{\alpha \beta}) \delta^\sigma_\delta
\end{align*}
\end{proof}

\begin{theorem} \label{gjs}
Let $\begin{pmatrix}
F & - \bar{w} \\
\bar{0}^\intercal & 0
\end{pmatrix}$ be an element of $\mathfrak{iso}(1,3)$ and let us
suppose that $I_2:=-\frac{1}{2} \tilde{F}_{c d} F^{c d}=0$, then
\[
I_3:=  \tilde{F}_{a b} w^b \tilde{F}^{a}_{\ \, c} w^c
\]
is Ad-invariant.

Condition $F\ne 0$ is Ad-invariant. If $I_2=0$ the condition
$\tilde{F} \bar{w}\ne 0$ is Ad-invariant. If $I_2=0$, $I_1>0$,
$I_3\ge 0$, $\tilde{F} \bar{w}\ne 0$, then $\epsilon_1 =\textrm{sgn}
\,( \tilde{F}^0_{\ a} \tilde{F}^{a}_{\ b} w^b)$ is
 a well defined Ad-invariant. If $I_2=0$, $I_1>0$, $I_3\le 0$, $\tilde{F} \bar{w}\ne
 0$,
then $\epsilon_2=\textrm{sgn}\, (\tilde{F}^{0}_{\ \, c} w^c)$ is a
well defined  Ad-invariant.

If $I_2=I_1=0$ but $F\ne 0$, then $I_3$ is non-negative. The
equality $I_3=0$ holds if and only if $\bar{w}\in Ker F^2$ (this
statement is Ad-invariant). Let us consider the cases $I_3>0$ and
$I_3=0$.

If $I_3>0$ then $\epsilon_1= \textrm{sgn}( \tilde{F}^0_{\ a}
\tilde{F}^{a}_{\ b} w^b)$ is well defined and provides another
Ad-invariant.

Let us come to $I_3=0$. Since $F^3=0$, we have $\textrm{Im} F\subset
\textrm{Ker} F^2$. Frame translations send $\bar{w}$ to another
element in the same class of $\textrm{Ker} F^2/\textrm{Im} F$. There
is a choice that minimizes ${w'}^a w'_a$. This (non-negative)
minimum
\[
I_4:=\min_{\bar{w}'\in [\bar{w}]} {w'}^a {w'}_a,
\]
is a characteristic of the class and is, therefore, an Ad-invariant.
(Any minimizing element belongs to $\textrm{Ker} F$.)

Finally, if $F=0$ (and hence $I_2=I_1=0$), then \[I_4:=w^a w_a,\] is
Ad-invariant. Furthermore,  if $I_4<0$, then
$\epsilon_1=\textrm{sgn}\, w^0$ is Ad-invariant.
\end{theorem}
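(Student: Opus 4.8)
The plan is to exploit that the Ad-action is generated by the two elementary moves already displayed: (i) the homogeneous change $F\to LFL^{-1}$, $\bar w\to L\bar w$ with $L\in SO(1,3)^{\uparrow}$, and (ii) the translation $F\to F$, $\bar w\to \bar w-F\bar a$. Thus for each claimed invariant it is enough to check stability under (i) and under (ii) separately. That $F\ne 0$ is Ad-invariant is immediate. All the other side conditions that appear as hypotheses ($I_1>0$, the sign conditions on $I_3$, $\tilde F\bar w\ne 0$, $\bar w\in \textrm{Ker}\,F^2$, $F=0$, $I_4<0$) will be seen to be Ad-invariant as the argument proceeds, so every statement is consistent on the whole orbit.

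The structural ingredient is Lemma \ref{jef}, which together with the definition of $I_2$ gives $\tilde F^a_{\ b}F^b_{\ c}=\frac{1}{2}I_2\,\delta^a_c$, i.e. $\tilde F F=F\tilde F=\frac{1}{2}I_2\,\textrm{Id}$, so that \emph{whenever $I_2=0$ the operator $\tilde F$ annihilates $\textrm{Im}\,F$}. Hence, when $I_2=0$, under (ii) one has $\tilde F(\bar w-F\bar a)=\tilde F\bar w$, and likewise $\tilde F^2\bar w$, $\tilde F^3\bar w$ are unchanged; and under (i) these vectors are just the $L$-transforms of the old ones, because the Hodge dual commutes with conjugation by $SO(1,3)^{\uparrow}$ (which has determinant $1$). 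Therefore $u^a:=\tilde F^a_{\ b}w^b$ is translation-invariant and Lorentz-covariant, which already yields two items: $I_3=\eta(u,u)=\tilde F_{ab}w^b\,\tilde F^a_{\ c}w^c$ is an Ad-invariant scalar, and the condition $\tilde F\bar w\ne 0$ is Ad-invariant.

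For the sign invariants, observe that $\epsilon_1=\textrm{sgn}\big((\tilde F^2\bar w)^0\big)$ and $\epsilon_2=\textrm{sgn}\big((\tilde F\bar w)^0\big)$. Since the sign of the time component of a nonzero causal vector is preserved by $SO(1,3)^{\uparrow}$, and these vectors are translation-invariant by the previous step, everything reduces to checking that, under the stated hypotheses, $\tilde F\bar w$ (resp.\ $\tilde F^2\bar w$) is nonzero and causal; I would do this after putting $F$ into the canonical forms of Theorem \ref{vkx}. If $I_2=0$ and $I_1>0$ then necessarily $\varphi=0$, $\theta\ne 0$, so in a suitable frame $F=\theta J$ with $J$ a rotation generator and $\tilde F$ is $\theta$ times a boost generator in the complementary $2$-plane; a one-line computation there shows $\tilde F\bar w$ and $\tilde F^2\bar w$ lie in that boost $2$-plane with $\eta(\tilde F\bar w,\tilde F\bar w)=I_3$ and $\eta(\tilde F^2\bar w,\tilde F^2\bar w)=-\theta^2 I_3$, the two vectors vanishing together exactly when $\tilde F\bar w=0$; hence $I_3\le 0$ makes $\tilde F\bar w$ causal (the $\epsilon_2$-item) and $I_3\ge 0$ makes $\tilde F^2\bar w$ causal (the $\epsilon_1$-item). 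If instead $I_1=I_2=0$ with $F\ne 0$, then $F$ is conjugate to $B(\alpha)$, and computing $\tilde B$, $B^2$, $\tilde B^2$ gives $I_3=\eta(\tilde B\bar w,\tilde B\bar w)=\alpha^2(w^0-w^1)^2\ge 0$ (the non-negativity claim), with equality iff $w^0=w^1$ iff $\bar w\in \textrm{Ker}\,F^2$, while $\tilde B^2\bar w=\alpha^2(w^0-w^1)\,(1,1,0,0)$ is a null vector, nonzero iff $I_3>0$ — this is the remaining $\epsilon_1$-item.

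It then remains to treat $I_4$. In the nilpotent case with $I_3=0$ one has $F^3=0$, hence $\textrm{Im}\,F\subset\textrm{Ker}\,F^2$ and $\bar w\in\textrm{Ker}\,F^2$; the move (ii) shifts $\bar w$ within its coset $\bar w+\textrm{Im}\,F\subset\textrm{Ker}\,F^2$, and on that affine subspace the function $\bar w'\mapsto {w'}^a w'_a$ — which in the $B(\alpha)$-frame reads $(w^2+t)^2+(w^3)^2$, with $t$ parametrizing part of $\textrm{Im}\,F$ — is a nonnegative quadratic attaining its minimum, the minimizers lying in $\textrm{Ker}\,F$; since (ii) leaves the coset unchanged and (i) carries cosets to cosets isometrically, $I_4=\min_{\bar w'\in[\bar w]}{w'}^a w'_a$ is Ad-invariant. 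Finally, when $F=0$ the move (ii) is trivial, so the orbit of $\bar w$ is just its $SO(1,3)^{\uparrow}$-orbit; thus $I_4=w^a w_a$ is Ad-invariant and, if $I_4<0$, $\bar w$ is timelike and $\textrm{sgn}\,w^0$ is orthochronous-invariant. The one conceptual point is the identity $\tilde F F=0$ valid precisely for $I_2=0$, which makes all the $\tilde F$-twisted quantities translation-invariant; the main labour — and the place where care rather than ingenuity is needed — will be the orderly division into sub-cases and the explicit evaluation of $\tilde F$, $\tilde F^2$ and $F^2$ on the representatives $A$ and $B$ with every sign correct.
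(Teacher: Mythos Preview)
Your proposal is correct and follows essentially the same route as the paper: both hinge on Lemma~\ref{jef} (equivalently, $\tilde F F=\tfrac{1}{2}I_2\,\mathrm{Id}$, hence $\tilde F\bar w$ is translation-invariant when $I_2=0$), both reduce the sign-invariants to checking causality of $\tilde F\bar w$ and $\tilde F^2\bar w$ in the canonical frames $A$ (with $\varphi=0$) and $B$ of Theorem~\ref{vkx}, and both handle $I_4$ by an explicit minimization in the $B$-frame. Your organization is slightly tidier in that you isolate ``$u=\tilde F\bar w$ is translation-invariant and Lorentz-covariant'' as the single structural fact driving all of $I_3$, $\epsilon_1$, $\epsilon_2$, and you record the clean identities $\eta(\tilde F\bar w,\tilde F\bar w)=I_3$, $\eta(\tilde F^2\bar w,\tilde F^2\bar w)=-\theta^2 I_3$ in the $I_1>0$ case, but the content is identical to the paper's proof.
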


\begin{remark}
The equation defining  $I_4$ for $F=0$ follows from an extension of
that defining $I_4$ for $I_1=I_2=I_3=0$, $F\ne 0$. Indeed, if $F=0$,
$\textrm{Ker} F^2=W$ and $\textrm{Im} F=\{0\}$ thus every class
contains only one element, i.e. $\textrm{Ker} F^2/\textrm{Im} F =W$,
thus $\min_{\bar{w}'\in [\bar{w}]} {w'}^a {w'}_a=w^a w_a$.
\end{remark}

\begin{proof}
It is sufficient to prove that $I_2$ is invariant under (i)
homogeneous transformations, i.e., $\bar{a}=\bar{0}$, and (ii)
translations, i.e., $L=I$. Case (i) is clear since $I_3$ is defined
as the (Lorentzian) square of a 4-vector and, under the assumption
$\bar{a}=\bar{0}$, both $F^{a}_{\ b}$ and $w^{b}$ transform as
tensors. For case (ii) observe that $F\to F$ and $w^b\to w^b-F^b_{\
c} a^c$, and by lemma \ref{jef}, $I_3$ is left invariant.

Condition $F\ne 0$ is trivially Ad-invariant. If $I_2=0$ condition
$\tilde{F}\bar{w}\ne 0$ is invariant under frame changes because of
lemma \ref{jef}.  Suppose that $I_2=0$, $I_1>0$, $I_3\ge 0$, and
$\tilde{F}\bar{w}\ne 0$ if $I_3=0$.
 Let $z^c:= \tilde{F}^c_{\ a} \tilde{F}^{a}_{\ b}
w^b$. There is a frame for which $z^0  \ne 0$, and $z^\gamma$ is
null, namely that for which $F=A$ with $\varphi=0$. Under
translations of the frame $z^c$ does not change, while under
homogeneous transformations it behaves as a vector, thus it is a
lightlike vector in any frame and $\epsilon_1:= \textrm{sgn} (z^0)$
is invariant.


%
%
%
%
%

Suppose  that $I_2=0$, $I_1>0$, $I_3\le 0$, and $\tilde{F}\bar{w}\ne
0$ if $I_3=0$. Let $v^a:=\tilde{F}^{a}_{\ \, c} w^c$. Since $I_3\le
0$, there is a frame for which $v^0\ne 0$,
 namely that for which $F=A$ with $\varphi=0$. In that frame, since
$I_3\le 0$, $v^c$ is a causal vector. Because of lemma \ref{jef},
under translations of the frame $v^c$ is left unaltered, while under
homogeneous transformations it transform as a tensor, thus in any
frame $v^c$ is a causal vector and $\epsilon_2:= \textrm{sgn}\, v^0$
is invariant.


Suppose $I_2=I_1=0$ and $F\ne 0$. The inequality $I_3 \ge 0$ can be
easily checked for $F=B$ and arbitrary $\bar{w}$. Since $I_3$ is a
scalar under transformations of type (i), the inequality holds for
any $F$ in the orbit of $B$. With the same type of argument, i.e.
studying the case $F=B$, we can show that $\tilde{F}^2=F^2$, and
from the special form of $B^2$ we easily deduce that $I_3=0$ if and
only if $\bar{w}\in \textrm{Ker} F^2$.


Let us observe that the condition $\bar{w}\in \textrm{Ker} F^2$ is
independent of the frame since under translations of the frame
$\bar{w}$ is added terms belonging to $\textrm{Im} F$, and
$\textrm{Im} F\subset \textrm{Ker} F^2$ as $F^3=0$.

Let us consider $\epsilon_1:= \textrm{sign}\, v^0$ where
$v^c:=\tilde{F}^c_{\ a} \tilde{F}^{a}_{\ b} w^b=F^c_{\ a} F^{a}_{\
b} w^b$. Let us observe that $v^c$ is a null vector because $F^3=0$.
The frame for which $F=B$ shows that if $\bar{w}\notin \textrm{Ker}
F^2$ (iff $I_3>0$) then $v^0\ne 0$ in that frame. Under homogeneous
transformations of the frame $v^c$ transform as a tensor (hence
remaining a null vector), while under translations of the frame it
is left unchanged because $F^3=0$. As a consequence, $\epsilon_1$ is
well defined and invariant.


Let us consider the possibility $I_3=0$, and hence $\bar{w}\in
\textrm{Ker} F^2$. We mentioned that under frame changes $\bar{w}$
transforms as $\bar{w}\to \bar{w}-F\bar{a}$, thus being altered by
additive terms belonging to $\textrm{Im} F$. Let us show that this
additive term can be chosen so as to minimize $w^a w_a$. Let us
study the problem in the frame for which $F=B$ so that:
$\textrm{Ker} F=\textrm{Span}(e_0+e_1,e_3)$, $\textrm{Ker}
F^2=\textrm{Span}(e_0+e_1,e_2,e_3)$, $\textrm{Im}
F=\textrm{Span}(e_0+e_1,e_2)$, $\textrm{Im}
F^2=\textrm{Span}(e_0+e_1)$. Then $w=(l,l,c,d)$, $w^a w_a=c^2+d^2$,
where $l$ and $c$ can be chosen freely. Clearly the minimum exists,
and is attained for $c=0$ ($l$ remains undetermined). Any minimizing
vector, being of the form $w=(l,l,0,d)$, belongs to $\textrm{Ker}
F$.

The last statement is trivial.

\end{proof}

\begin{definition}
We call  $\epsilon_1$ the {\em time orientation} of the Lie algebra
element (future directed or positive if $\epsilon_1=+1$). We call
$\epsilon_2$  {\em helicity}.
\end{definition}

%
%
%
%

\begin{theorem} (Classification of Lie orbits)
\label{vkz}\\
Let $\mathcal{P}\in \mathfrak{IL}$, let $I_1,I_2,I_3,I_4,
\epsilon_1,\epsilon_2,$ be the Ad-invariants of $\mathcal{P}$ and
let  $\varphi$ and $\theta$ be defined as in Eqs.
(\ref{vt1})-(\ref{vt2}). Moreover, if $I_2=0$ and $I_1\ne 0$ let
\[
b=\sqrt{\vert \frac{I_3}{2 I_1}\vert }.
\]
Then it is possible to choose the reference frame in such a way that
$\mathcal{P}$  takes one of the following matrix forms, and
corresponding vector field form.
\begin{enumerate}
\item $I_2\ne 0$:
\[
 {\footnotesize
\begin{pmatrix}
0 & -\varphi  & 0 & 0 & 0 \\
 -\varphi & 0 & 0 & 0 & 0 \\
 0 & 0 & 0 & \theta  & 0 \\
 0 & 0 & -\theta & 0 & 0 \\
 0 & 0 & 0 & 0 & 0
\end{pmatrix}}, \qquad \varphi(x^0\p_1+x^1\p_0)+\theta
(x^2\p_3-x^3\p_2),
\]
\item $I_2=0$, $I_1< 0$ ($I_3\ge 0$):
\[
{\footnotesize
\begin{pmatrix}
0 & -\varphi  & 0 & 0 & 0 \\
 -\varphi & 0 & 0 & 0 & 0 \\
 0 & 0 & 0 & 0  & 0 \\
 0 & 0 & 0 & 0 & -b \\
 0 & 0 & 0 & 0 & 0
\end{pmatrix}}, \qquad \qquad \qquad \varphi(x^0\p_1+x^1\p_0)+b \p_3,
\]
\item $I_2=0$, $I_1> 0$, $I_3< 0$:
\[
{\footnotesize
\begin{pmatrix}
0 & 0  & 0 & 0 & 0 \\
0 & 0 & 0 & 0 & -\epsilon_2 b \\
 0 & 0 & 0 & \theta & 0 \\
 0 & 0 & -\theta & 0 & 0 \\
 0 & 0 & 0 & 0 & 0
\end{pmatrix}}, \qquad \qquad \quad  \ \theta
(x^2\p_3-x^3\p_2)+ \epsilon_2 b \p_1,
\]
\item $I_2=0$, $I_1> 0$, $I_3> 0$:
\[
{\footnotesize
\begin{pmatrix}
0 & 0  & 0 & 0 & -\epsilon_1 b \\
0 & 0 & 0 & 0 & 0 \\
 0 & 0 & 0 & \theta & 0 \\
 0 & 0 & -\theta & 0 & 0 \\
 0 & 0 & 0 & 0 & 0
\end{pmatrix}}, \qquad \qquad \quad  \theta
(x^2\p_3-x^3\p_2)+ \epsilon_1 b \p_0,
\]
\item $I_2=0$, $I_1> 0$, $I_3=0$, $\tilde{F} \bar{w}\ne
 0$:
\[
{\footnotesize
\begin{pmatrix}
0 & 0  & 0 & 0 & -\epsilon_1  \\
0 & 0 & 0 & 0 & -\epsilon_2 \\
 0 & 0 & 0 & \theta & 0 \\
 0 & 0 & -\theta & 0 & 0 \\
 0 & 0 & 0 & 0 & 0
\end{pmatrix}}, \qquad \theta
(x^2\p_3-x^3\p_2)+ \epsilon_1  \p_0 + \epsilon_2  \p_1,
\]
\item $I_2=0$, $I_1> 0$, $I_3=0$, $\tilde{F} \bar{w}=0$:
\[
{\footnotesize
\begin{pmatrix}
0 & 0  & 0 & 0 & 0  \\
0 & 0 & 0 & 0 & 0 \\
 0 & 0 & 0 & \theta & 0 \\
 0 & 0 & -\theta & 0 & 0 \\
 0 & 0 & 0 & 0 & 0
\end{pmatrix}}, \qquad \qquad \qquad \qquad \theta (x^2\p_3-x^3\p_2),
\]
\item $I_2=I_1=0$, $F\ne 0$, $I_3>0$:
\[ \
{\footnotesize
\begin{pmatrix}
0 & 0  & -1 & 0 & - \epsilon_1 \sqrt{I_3}  \\
0 & 0 & -1 & 0 & 0 \\
 -1 & 1 & 0 & 0 & 0 \\
 0 & 0 & 0 & 0 & 0 \\
 0 & 0 & 0 & 0 & 0
\end{pmatrix}}, \quad (x^0\p_2+x^2\p_0)+
(x^2\p_1-x^1\p_2)+\epsilon_3\sqrt{I_3}\, \p_0,
\]
\item $I_2=I_1=0$, $F\ne 0$, $I_3=0$:
\[ \
{\footnotesize
\begin{pmatrix}
0 & 0  & -1 & 0 & 0  \\
0 & 0 & -1 & 0 & 0 \\
 -1 & 1 & 0 & 0 & 0 \\
 0 & 0 & 0 & 0 & -\sqrt{I_4} \\
 0 & 0 & 0 & 0 & 0
\end{pmatrix}}, \quad (x^0\p_2+x^2\p_0)+
(x^2\p_1-x^1\p_2)+\sqrt{I_4}\, \p_3,
\]
\item  $F=0$, $I_4<0$:
\[
{\footnotesize
\begin{pmatrix}
0 & 0  & 0 & 0 & - \epsilon_1 \sqrt{-I_4}  \\
0 & 0 & 0 & 0 & 0 \\
 0 & 0 & 0 & 0 & 0 \\
 0 & 0 & 0 & 0 & 0 \\
 0 & 0 & 0 & 0 & 0
\end{pmatrix}}, \qquad \qquad \qquad  \epsilon_1
\sqrt{-I_4}\,\p_0, \quad \ \ \quad \quad \qquad \qquad \empty
\]
\item  $F=0$, $I_4>0$:
\[
{\footnotesize
\begin{pmatrix}
0 & 0  & 0 & 0 & 0  \\
0 & 0 & 0 & 0 &  0 \\
 0 & 0 & 0 & 0 & 0 \\
 0 & 0 & 0 & 0 & -\sqrt{I_4} \\
 0 & 0 & 0 & 0 & 0
\end{pmatrix}}, \qquad \qquad \qquad \sqrt{I_4}\,\p_3, \quad \  \qquad \qquad \qquad \empty
\]
\item  $F=0$, $I_4=0$, $\mathcal{P}\ne 0$:
\[
\qquad {\footnotesize
\begin{pmatrix}
0 & 0  & 0 & 0 & - \epsilon_1  \\
0 & 0 & 0 & 0 & - \epsilon_1 \\
 0 & 0 & 0 & 0 & 0 \\
 0 & 0 & 0 & 0 & 0 \\
 0 & 0 & 0 & 0 & 0
\end{pmatrix}}, \ \qquad \qquad \qquad  \epsilon_1(\p_0+\p_1). \quad \ \ \qquad \quad \qquad \empty
\]
\end{enumerate}
Finally, there is a twelfth case corresponding to the trivial  Lie
algebra orbit of the zero element of $\mathfrak{IL}$.

Stated in another way, the orbits of the adjoint action of
$ISO(1,3)^\uparrow$ on $\mathfrak{iso}(1,3)$ admit one and only one
of the twelve representatives given above.

\end{theorem}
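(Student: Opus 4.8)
The plan is to normalise a generic $\mathcal{P}\in\mathfrak{iso}(1,3)$, with homogeneous part $F\in\mathfrak{so}(1,3)$ and vector part $\bar{w}$, in two stages, using the observation (recorded just before Lemma \ref{jef}) that a generic frame change is a translation (\ref{tra}) followed by a homogeneous transformation (\ref{hom}). In the first stage I would apply Theorem \ref{vkx} to choose the frame so that $F$ becomes one of the canonical endomorphisms $A(\varphi,\theta)$ or $B(\alpha)$ of that theorem, or $F=0$. This exhausts the homogeneous freedom except for the stabiliser $Z(F)\subset SO(1,3)^{\uparrow}$ of $F$, together with the rescaling $\alpha\to\alpha e^{-\gamma}$ of Eq. (\ref{mkd}) when $F=B$. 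The invariants $I_1,I_2$ are then read off from $F$ as in Theorem \ref{vkx}, and one checks by a direct computation on each canonical $F$ that $I_3,I_4,\epsilon_1,\epsilon_2$ are defined exactly in the regimes of Theorem \ref{gjs} and take the asserted values --- for instance, when $F=B$ one finds $\textrm{Im}\,F^2=\textrm{Span}(e_0+e_1)$, $I_3=(w^0-w^1)^2$, and $I_3=0$ iff $\bar{w}\in\textrm{Ker}\,F^2$.

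In the second stage I would, case by case, use the residual freedom --- the translations $\bar{w}\mapsto\bar{w}-F\bar{a}$, which shift $\bar{w}$ by an arbitrary vector of $\textrm{Im}\,F$, together with $Z(F)$ acting on $\bar{w}$ linearly --- to bring $\bar{w}$ to the listed form. If $I_2\neq0$ then $F$ is invertible (Theorem \ref{vkx}), so the choice $\bar{a}=F^{-1}\bar{w}$ kills $\bar{w}$ and gives representative~1. If $F=A$ with $\theta=0$ (hence $I_1<0$), then $\textrm{Im}\,F=\textrm{Span}(e_0,e_1)$ and $\textrm{Ker}\,F=\textrm{Span}(e_2,e_3)$ are complementary: a translation puts $\bar{w}$ in $\textrm{Ker}\,F$, a rotation in $\textrm{Span}(e_2,e_3)\subset Z(F)$ aligns it with $e_3$, and a $\pi$-rotation fixes its sign, yielding representative~2 with $b=\sqrt{\vert I_3/(2I_1)\vert}$. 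If $F=A$ with $\varphi=0$ (hence $I_1>0$), a translation puts $\bar{w}$ in $\textrm{Ker}\,F=\textrm{Span}(e_0,e_1)$, whose Lorentzian causal character is governed by $\textrm{sgn}\,I_3$ and by whether $\tilde{F}\bar{w}=0$; a boost in $\textrm{Span}(e_0,e_1)\subset Z(F)$ then normalises this vector to $b\,e_0$, $b\,e_1$, a null vector $\pm e_0\pm e_1$, or $0$, the leftover signs being exactly $\epsilon_1$ (time orientation) and $\epsilon_2$ (helicity). This produces representatives~3--6. The cases $F=0$ are immediate: here there is no translation freedom, so one simply classifies $\bar{w}\in W$ under $SO(1,3)^{\uparrow}$, i.e.\ by causal character $I_4=w^aw_a$ and, in the timelike and null cases, by the time orientation $\epsilon_1$ that $SO(1,3)^{\uparrow}$ preserves; this gives representatives~9--11 and the trivial orbit.

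The delicate case is $F=B$ (i.e.\ $I_1=I_2=0$, $F\neq0$), with representatives~7 and~8. Here one must compute $\textrm{Ker}\,F$, $\textrm{Im}\,F$, $\textrm{Ker}\,F^2$ and $\textrm{Im}\,F^2$ explicitly --- now $\textrm{Im}\,F$ and $\textrm{Ker}\,F$ share the line $\textrm{Span}(e_0+e_1)$, so they fail to be complementary --- and one identifies $Z(F)$ as, up to the $\alpha$-rescaling, the abelian group of null rotations transverse to the one defining $B$. For $I_3>0$ one clears the $\textrm{Im}\,F$-part of $\bar{w}$ by a translation and the $e_3$-part by a null rotation in $Z(F)$, reaching $\bar{w}=\epsilon_1\sqrt{I_3}\,e_0$ of representative~7. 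For $I_3=0$ one has $\bar{w}\in\textrm{Ker}\,F^2$, and as in Theorem \ref{gjs} one minimises $w^aw_a$ over the coset $\bar{w}+\textrm{Im}\,F$; the minimiser lies in $\textrm{Ker}\,F$, its $e_0+e_1$ component is absorbed by a translation, and what remains is $\sqrt{I_4}\,e_3$, i.e.\ representative~8. I expect the main obstacle to be precisely this bookkeeping in the $B$-cases: keeping track of how the $\alpha$-rescaling boost acts simultaneously on $F$ and on $\bar{w}$, and checking that the residual sign ambiguities in representatives~7 and~8 are genuinely removed by $Z(F)$ and the $\pi$-rotation associated with Eq. (\ref{mkd}). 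Finally, the uniqueness assertion (``one and only one'') follows by evaluating $I_1,I_2,I_3,I_4,\epsilon_1,\epsilon_2$ on the twelve representatives and observing, with the help of Theorem \ref{gjs}, that no two of them agree on all the relevant invariants.
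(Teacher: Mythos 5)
Your overall strategy coincides with the paper's: reduce $F$ to one of the canonical forms of Theorem \ref{vkx}, then normalise $\bar{w}$ case by case using translations (which shift $\bar{w}$ by an arbitrary element of $\textrm{Im}\,F$) together with the residual homogeneous freedom, reading the surviving parameters off the invariants of Theorem \ref{gjs}. Cases 1--6 and 9--11 go through exactly as you describe. Your treatment of case 7 is in fact cleaner than the paper's: the transverse null rotation $\exp[\beta(K^3+J^2)]$ commutes with $B$, sends $w^3\mapsto w^3-\beta w^0$, and pollutes $\bar{w}$ only within $\textrm{Im}\,B$, so it replaces the paper's preliminary rapidity-$r$ boost and its explicit boost--plus--rotation identity.

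However, the check you explicitly defer for case 8 --- that the residual sign ambiguity in $\sqrt{I_4}\,e_3$ is removed by $Z(F)$ and the $\pi$-rotation associated with Eq. (\ref{mkd}) --- does not go through, and this is a genuine gap (one which the paper's own proof shares at the same spot). Compute the centraliser of $B=B(1)$ in $SO(1,3)^{\uparrow}$: any such $L$ preserves $\textrm{Im}\,B^2=\textrm{Span}(e_0+e_1)$, $\textrm{Im}\,B=\textrm{Span}(e_0+e_1,e_2)$ and $\textrm{Ker}\,B=\textrm{Span}(e_0+e_1,e_3)$; orthochronicity together with $\eta(Le_2,Le_2)=1$ forces $L(e_0+e_1)=e_0+e_1$ and $Le_2=e_2+a(e_0+e_1)$, and then, writing $Le_3=b(e_0+e_1)+f e_3$, one finds $\det L=f$, so $f=+1$. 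Thus the centraliser is the connected two-parameter group of null rotations and cannot reverse the $e_3$-component of $\bar{w}$; since $e_3\notin\textrm{Im}\,B$, translations cannot reverse it either. Consequently $(B(1),\sqrt{I_4}\,e_3)$ and $(B(1),-\sqrt{I_4}\,e_3)$ lie in distinct Ad-orbits: the sign of $d$ is an extra $\mathbb{Z}_2$ (handedness-type) invariant in the regime $I_1=I_2=I_3=0$, $F\ne0$, $I_4>0$, analogous to the sign of the pitch for a classical screw. The paper's move at this point --- ``if $d\le 0$, a $\pi$-rotation on $\textrm{Span}(e_2,e_3)$ gives $d\ge 0$'' --- silently sends $\alpha=1$ to $\alpha=-1$, and any proper orthochronous transformation restoring $\alpha=1$ flips $d$ back. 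So either representative 8 must carry an additional sign, or an argument must be supplied that neither you nor the paper gives; in your write-up this is precisely the point that cannot be left as deferred bookkeeping.
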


\begin{proof}
If $F=0$ then $w^\alpha$ transform as a vector under changes of
frame. Cases 9, 10 and 11 are then rather obvious, it is sufficient
to observe that under a rotation of the frame we can accomplish
$w^2=w^3=0$, thus though a boost on the timelike plane
$Span(e_0,e_1)$ we can obtain one of the forms 9, 10 or 11 (or the
trivial zero element).

Thus let $F\ne 0$. If $I_2\ne 0$ then $F$ is non-singular thus we
find a suitable translation $\bar{w}\to \bar{w}-F\bar{a}$ which
sends $\bar{w}$ to zero. Then with a homogeneous transformation we
send $F$ to the canonical form $A$ of theorem \ref{vkx}. We obtain
in this way the representative of case 1.

Suppose that $F\ne 0$, $I_2=0$, $I_1\ne 0$. According to theorem
\ref{vkx} through a homogeneous transformation of the reference
frame we can send $F$ to the canonical form $A$. If $I_1<0$ then
$\theta=0$, if $I_1>0$ then $\varphi=0$.

In the former case $A\vert_{\textrm{Span}(e_0,e_1)}:
\textrm{Span}(e_0,e_1) \to \textrm{Span}(e_0,e_1)$ is non-singular
thus with a translation of the reference frame $\bar{w}\to
\bar{w}-A\bar{a}$ we accomplish $w^0=w^1=0$. With a rotation of the
reference plane on the plane $\textrm{Span}(e_2,e_3)$ we obtain
$w^2=0$. Finally, with a rotation of $\pi$ along the first axis we
choose suitably the sign of $w^3$ so as to send it to $-b$ (by the
existence of the invariants). We arrive in this way at the
representative 2.

Let us consider the latter case $I_1>0$, $\varphi=0$. The map
$A\vert_{\textrm{Span}(e_2,e_3)}: \textrm{Span}(e_2,e_3) \to
\textrm{Span}(e_2,e_3)$ is non-singular, thus with a translation of
the reference frame $\bar{w}\to \bar{w}-A\bar{a}$ we accomplish
$w^2=w^3=0$. Then with a boost in the timelike plane
$\textrm{Span}(e_0,e_1)$ we accomplish one of the representatives
3,4, 5 or 6.

Suppose that $I_2=I_1=0$, $\bar{w}\notin \textrm{Ker} F^2$. With a
homogeneous transformation of the reference frame we  send $F$ to
$B$ for some $\alpha>0$. The image of $B$ is
$\textrm{Span}(e_0+e_1,e_2)$ thus with a translation of the
reference frame we obtain $w^1=w^2=0$. Since $F=B$ the condition
$\bar{w}\notin \textrm{Ker} F^2$ implies $w^0\ne 0$.

With a boost in the timelike plane $\textrm{Span}(e_0,e_1)$ of
rapidity $r$ followed by a translation we send $\alpha$ to
$\alpha':=\alpha e^{-r}$ and $w^0$ to $w^0 e^r$, keeping
$w^1=w^2=0$. Thus we can choose $r$ so that $\vert w^0e^r \vert >
\vert w^3\vert$. Next we use the identity
\begin{align*}
  & {\footnotesize\begin{pmatrix}
 \cosh \gamma & 0 & 0 & -\sinh \gamma & 0 \\
 0 & 1 & 0 & 0 & 0 \\
 0 & 0 & 1 & 0 & 0\\
 -\sinh \gamma & 0 & 0 & \cosh \gamma & 0\\
 0 & 0 & 0 & 0 & 1
\end{pmatrix} \begin{pmatrix}
 0 & 0 & -\alpha' & 0 & -w^0 e^r\\
 0 & 0 & -\alpha' & 0 & 0\\
 -\alpha' & \alpha' & 0 & 0 & 0 \\
 0 & 0 & 0 & 0 & -w^3 \\
 0 & 0 & 0 & 0 & 0
\end{pmatrix} \begin{pmatrix}
 \cosh \gamma & 0 & 0 & \sinh \gamma & 0 \\
 0 & 1 & 0 & 0 & 0 \\
 0 & 0 & 1 & 0 & 0\\
 \sinh \gamma & 0 & 0 & \cosh \gamma & 0\\
 0 & 0 & 0 & 0 & 1
\end{pmatrix}}\\
&={\footnotesize\begin{pmatrix}
 0 & 0 & -\alpha' \cosh \gamma & 0 & -(w^0e^r\cosh \gamma-w^3\sinh\gamma) \\
  0 & 0 & -\alpha' & 0 & 0   \\
 -\alpha' \cosh \gamma & \alpha' & 0 & -\alpha' \sinh\gamma & 0\\
 0 & 0 & \alpha' \sinh\gamma & 0 & (w^0e^r\sinh\gamma-w^3\cosh \gamma)\\
 0 & 0 & 0 & 0 & 0
\end{pmatrix}}
\end{align*}
which followed by a rotation of angle $\beta=\tan^{-1} \sinh \gamma$
around $e_2$ brings the matrix to the following form (note that
$\sin \beta=\tanh \gamma$, $\cos \beta=1/\cosh\gamma$)
\[
{\footnotesize\begin{pmatrix}
 0 & 0 & -\alpha' \cosh \gamma & 0 & -(w^0e^r \cosh \gamma-w^3\sinh\gamma) \\
  0 & 0 & -\alpha' \cosh \gamma & 0 & -\sinh \gamma (w^0e^r\tanh \gamma-w^3)   \\
 -\alpha' \cosh \gamma & \alpha' \cosh \gamma & 0 & 0 & 0\\
 0 & 0 & 0 & 0 & w^0e^r\tanh \gamma-w^3\\
 0 & 0 & 0 & 0 & 0
\end{pmatrix}}.
\]
Choosing $\gamma=\tanh^{-1}(\frac{w^3}{w^0 e^r})$ we arrive at
\[
{\footnotesize\begin{pmatrix}
 0 & 0 & -\alpha' \cosh \gamma & 0 & -w^0e^r/ \cosh \gamma \\
  0 & 0 & -\alpha' \cosh \gamma & 0 & 0   \\
 -\alpha' \cosh \gamma & \alpha' \cosh \gamma & 0 & 0 & 0\\
 0 & 0 & 0 & 0 & 0\\
 0 & 0 & 0 & 0 & 0
\end{pmatrix}}.
\]
Finally, with a boost on the timelike plane $\textrm{Span}(e_0,e_1)$
followed by a translation we obtain
\[
{\footnotesize\begin{pmatrix}
 0 & 0 & -\alpha & 0 & -w^0 \\
  0 & 0 & -\alpha & 0 & 0   \\
 -\alpha & \alpha & 0 & 0 & 0\\
 0 & 0 & 0 & 0 & 0\\
 0 & 0 & 0 & 0 & 0
\end{pmatrix}}.
\]
Thus with a sequence of frame changes we have been able to send
$w^3$ to zero. With a final boost on the timelike plane
$\textrm{Span}(e_0,e_1)$ followed by a translation we send $\alpha$
to $\pm 1$ and $w^0$ to $\epsilon_3 \sqrt{I_3}$ keeping unchanged
all the other matrix entries. A last $\pi$-rotation on the plane
$\textrm{Span}(e_2,e_3)$ sends the possible value $\alpha=-1$ to
$\alpha=1$ giving us the representative 7.

Suppose that $I_2=I_1=0$, $\bar{w}\in \textrm{Ker} F^2$. With a
homogeneous transformation of the reference frame we  send $F$ to
$B$ with $\alpha=1$. At the end of the proof of theorem \ref{gjs} we
have shown that the reference frame can be translated in such a way
that $\bar{w}=(l,l,0,d)$ where $d^2=I_4$. With another translation
we obtain $l=0$ and if $d\le 0$, with a $\pi$-rotation on the plane
$\textrm{Span}(e_2,e_3)$ we obtain $d\ge 0$ and hence
$d=\sqrt{I_4}$, which gives us the representative 8.

\end{proof}

\begin{corollary} \label{pou}
Let $k$ be a Killing field on Minkowski spacetime. Then there is a
reference frame through whose coordinates $k$ takes one of the
twelve forms listed in theorem \ref{vkz}.
\end{corollary}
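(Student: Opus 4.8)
The plan is to reduce the statement to Theorem~\ref{vkz} by identifying Killing fields with elements of $\mathfrak{IL}$. First I would use the Killing equation to pin down the coordinate form of an arbitrary Killing field $k$ on $M$. In affine orthonormal coordinates $\{x^a\}$ associated to any reference frame, $k$ satisfies $\partial_a k_b + \partial_b k_a = 0$; differentiating once more and combining the cyclic permutations of this identity with the vanishing of the curvature of Minkowski spacetime gives $\partial_a\partial_b k_c = 0$. Hence $k$ is affine-linear, $k = (-\Omega^{d}_{\ b} x^b + b^d)\, e_d$ for a constant matrix $\Omega$ and constant vector $b$, and the Killing equation forces $\Omega_{ab} + \Omega_{ba} = 0$, i.e. $\Omega \in \mathfrak{so}(1,3)$. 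By the correspondence~(\ref{ntv}) this is exactly the vector field attached to the element $\tfrac{1}{2}\Omega_{ab}J^{ab} - b_c P^c$ of $\mathfrak{iso}(1,3)\cong\mathfrak{IL}$, and conversely every such element is a Killing field. This is the standard fact that the Killing fields of Minkowski spacetime constitute the Lie algebra of its isometry group $IL^{\uparrow}_+$; the flow of any Killing field is complete and lies in the identity component, so no improper or time-reversing isometries intervene.

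Second, I would invoke the fact, already set up at the start of Section~4, that a change of reference frame $(o,\{e_a\})$ acts on the associated element of $\mathfrak{iso}(1,3)$ through the Ad-action of $ISO(1,3)^{\uparrow}$. Consequently the collection of coordinate expressions that $k$ can assume, as the reference frame varies, is precisely the Ad-orbit of the corresponding Lie-algebra element. Theorem~\ref{vkz} asserts that this orbit contains one, and only one, of the twelve representatives listed there, or is the trivial orbit of $0$. Choosing a reference frame that realizes that representative --- which exists by the very content of Theorem~\ref{vkz} --- expresses $k$ in one of the twelve vector-field forms, which is the claim of the corollary.

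The only delicate point is the first step, the passage from \emph{Killing field} to \emph{element of $\mathfrak{IL}$} realized as an affine vector field; everything after that is a verbatim translation of Theorem~\ref{vkz} into the language of vector fields and requires no further computation. I do not expect a serious obstacle here: the affine-linear form of Killing fields on flat space is classical, and the identification of frame changes with the Ad-action was already established when $\mathfrak{IL}$ was introduced, so the corollary is essentially a reformulation of Theorem~\ref{vkz}.
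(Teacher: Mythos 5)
Your proposal is correct and follows the same route as the paper, which simply observes that the corollary is a rephrasing of Theorem~\ref{vkz} once one knows that $\mathfrak{IL}$ is the Lie algebra of Killing fields of Minkowski spacetime. The only difference is that you spell out that standard identification (the $\partial_a\partial_b k_c=0$ argument and the matching with the correspondence~(\ref{ntv})), which the paper takes as given.
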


\begin{proof}
This is just a rephrasing of the previous theorem, given that
$\mathfrak{IL}$ is the Lie algebra of the Killing fields of
Minkowski spacetime.
\end{proof}

\begin{remark}
The closure of the conjugacy class 8 of theorem \ref{vkz} contains
class 10. Indeed, a boost of the frame in the timelike plane
$\textrm{Span}(e_0,e_1)$ shows that
\[
{\footnotesize \begin{pmatrix}
0 & 0  & -\alpha & 0 & 0  \\
0 & 0 & -\alpha & 0 & 0 \\
 -\alpha & \alpha & 0 & 0 & 0 \\
 0 & 0 & 0 & 0 & -\sqrt{I_4} \\
 0 & 0 & 0 & 0 & 0
\end{pmatrix}},
\]
for any $\alpha> 0$ stays in class 8. Taking the limit $\alpha \to
0$ we obtain the representative of class 10. As a consequence, no
continuous $Ad$-invariant function $f:\mathfrak{IL} \to \mathbb{R}$
can distinguish between classes 8 and 10. In particular, no
algebraic Ad-invariant built from the pair $(F^a_{\ b}, w^b)$ can
allow us to distinguish between the two classes.
\end{remark}

\begin{remark}
In a recent paper Barbot considered the conjugacy classes of the
proper orthochronous inhomogeneous Lorentz group \cite[Sect.
6]{barbot05} and obtained a, somewhat coarser, classification. With
respect to that work our proofs are slightly longer because our aim
was to obtain nice representatives by bringing the homogeneous and
translational part into a canonical form. Thanks to our complete set
of Ad-invariants we are able to identify a single conjugacy class
for each choice of allowed Ad-invariants,  and we  are able to tell
exactly which is the conjugacy class of a given transformation by
means of straightforward matrix calculations. On the other hand, the
more geometrical approach by Barbot serves more easily the intuition
for the sake of the classification.

Barbot selects some families of conjugacy classes which, although we
worked on the Lie algebra and he on the Lie group, can be put into
correspondence with our families. The correspondence is as follows.
\begin{description}
\item[{\rm Elliptic}:]  These are our cases 3-6, which correspond to  $I_2=0$, $I_1>0$, and the pure translations 9-11.
\item[{\rm Hyperbolic}:] This is our case 2, which corresponds to $I_2=0$,
$I_1<0$.
\item[{\rm Unipotent}:] These are our cases 7-8, which correspond to
$I_2=I_1=0$, $F\ne 0$, with $I_3>0$ for $7$ and $I_3=0$ for 8.
Barbot's trichotomy is as follows. The {\em linear} case is our case
8 with $I_4=0$. The {\em tangent} case is our case 8 with $I_4\ne
0$. The {\em transverse} case is our case 7.
\item[{\rm Loxodromic}:] This is our case 1 which corresponds to $I_2\ne 0$.
\item[{\rm Parabolic}:] Does not apply in the four dimensional
spacetime case considered here.
\end{description}

\end{remark}

%
%

\subsection{The Lie wedge}

In section \ref{jnu} we  argued that the semigroup $I\subset
ISO(1,3)^\uparrow$ (resp. $J$) selects those transformations that
are physically reasonable, in the sense that they can be induced by
the dragging of an observer's frame on spacetime.

We would like to select those generators that induce the mentioned
transformation belonging to $I$ (resp. $J$). In other words, we have
to find the Lie algebra counterpart of the semigroup. Fortunately,
there is a well developed Lie theory for subsemigroups of Lie groups
\cite{hilgert89,hilgert93}. If $S$ is a closed subsemigroup of a Lie
group $G$, its Lie wedge (or cone) is the set
\begin{equation} \label{wed}
L(S)=\{X\in \mathfrak{g}: \exp( \mathbb{R^+} X)\subset S\},
\end{equation}
where $\mathbb{R}^+=(0,+\infty)$. The Lie cone is convex because of
the following identity which can be deduced from the
Baker-Campbell-Hausdorff formula \cite[Lemma II.1.1]{hilgert89}
\[
\exp[X+Y]=\lim_{n\to +\infty} [\exp\frac{X}{n}\, \exp\frac{Y}{n}]^n.
\]
The semigroup $J$ is closed, thus the standard theory which can be
found in \cite{hilgert89,hilgert93} applies to it. In particular,
the {\em causal wedge} $L(J)$ is convex.

\begin{remark}
Although $I$ is not closed, we shall define $L(I)$ according to Eq.
(\ref{wed}) and we shall call it the {\em timelike wedge}. The
reader is warned that we are making an abuse of notation, and that
$L(I)$ is not convex.\footnote{There is a definition of Lie wedge
that applies to non-closed semigroup \cite{hilgert89,hilgert93}, but
it would lead back to $L(J)$, while we will need $L(I)$ for our
arguments.}
\end{remark}


Let us identify the Lie wedges for the semigroups $I$ and $J$.

\begin{theorem}
The Lie wedges of the semigroups $I$ and $J$ satisfy
\begin{align}
L(J)&=\{ {\footnotesize \begin{pmatrix}
F & - \bar{w} \\
\bar{0}^\intercal & 0
\end{pmatrix}}\!: \ F\in \mathfrak{so}(1,3) \ \textrm{and }  \bar{w} \textrm{ is f.d.\ nonspacelike} \}, \label{kw1}\\
L(J)\backslash L(I)&= \{ {\footnotesize \begin{pmatrix}
F & - \bar{w} \\
\bar{0}^\intercal & 0
\end{pmatrix}}\!: \ F\! \in  \mathfrak{so}(1,3), \  \bar{w} \textrm{ is f.d.\ null and } F\bar{w}\!=\!\lambda \bar{w} \}, \label{kw2}\\
e^{[L(J)\backslash L(I)]} &= \{{\footnotesize \begin{pmatrix}
\Lambda & -\bar{b} \\
\bar{0}^\intercal & 1
\end{pmatrix}}\!: \ \Lambda \!\in SO(1,3), \  \bar{b} \textrm{ is f.d.\ null and } \Lambda\bar{b}=e^\lambda\, \bar{b}
\}, \label{kw3}\\
&\subsetneq J\backslash I, \label{kw4}\\
e^{L(J)}& \subsetneq J, \label{kw5}
\end{align}
where $\lambda\in \mathbb{R}$.


%
\end{theorem}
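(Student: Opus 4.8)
The plan is to base everything on the explicit solution of the defining ODE in the proof of Theorem~\ref{biq}: writing $X=\begin{pmatrix} F & -\bar{w} \\ \bar{0}^\intercal & 0\end{pmatrix}$, one has $\exp(sX)=\begin{pmatrix} e^{sF} & -\bar{b}(s) \\ \bar{0}^\intercal & 1\end{pmatrix}$ with $\bar{b}(s)=\int_0^s e^{uF}\bar{w}\,\dd u$. Since each $e^{uF}\in SO(1,3)^\uparrow$ maps the closed convex future causal cone into itself, (\ref{kw1}) follows from two elementary observations: for $\supseteq$, if $\bar{w}$ is f.d.\ nonspacelike then so is every $e^{uF}\bar{w}$, hence so is the positive-weight integral $\bar{b}(s)$, so $\exp(sX)\in J$ for all $s>0$; for $\subseteq$, if $X\in L(J)$ then $\bar{b}(s)$ is f.d.\ nonspacelike for all $s>0$ and $\bar{w}=\lim_{s\to 0^+}\bar{b}(s)/s$ stays in the cone by closedness.

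For (\ref{kw2}) the one new ingredient I would need is a rigidity statement: if $v:[0,s_0]\to W$ is continuous with every $v(u)$ f.d.\ null and $\int_0^{s_0}v(u)\,\dd u$ again f.d.\ null, then all $v(u)$ lie on one null ray. This comes from $\eta\big(\int v,\int v\big)=\int_0^{s_0}\!\int_0^{s_0}\eta(v(u),v(t))\,\dd u\,\dd t$ together with $\eta(v(u),v(t))\le 0$ for f.d.\ null vectors, so the integrand must vanish identically; taking $v(u)=e^{uF}\bar{w}$ gives $\eta(\bar{w},e^{uF}\bar{w})=0$ near $u=0$, i.e.\ $e^{uF}\bar{w}=\rho(u)\bar{w}$ with $\rho>0,\ \rho(0)=1$, and differentiating at $u=0$ gives $F\bar{w}=\lambda\bar{w}$ with $\lambda=\rho'(0)\in\mathbb{R}$. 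Then: if $X\in L(J)\backslash L(I)$ and $\bar{w}\neq 0$, $\bar{w}$ cannot be timelike (otherwise the same convexity argument applied to the open timelike cone would give $\bar{b}(s)$ timelike for all $s>0$, i.e.\ $X\in L(I)$), so $\bar{w}$ is f.d.\ null; and since $X\notin L(I)$ some $\bar{b}(s_0)$ is null, so the lemma yields $F\bar{w}=\lambda\bar{w}$. Conversely, if $\bar{w}$ is f.d.\ null with $F\bar{w}=\lambda\bar{w}$, then $\bar{b}(s)=\tfrac{e^{\lambda s}-1}{\lambda}\bar{w}$ is a strictly positive multiple of $\bar{w}$, hence f.d.\ null but never timelike, so $X\in L(J)\backslash L(I)$. (The case $\bar{w}=0$, where $\exp(sX)$ still lies in $J\backslash I$ for every $s>0$, I would record as a degenerate boundary case of the stated description.)

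Next, (\ref{kw3}) is essentially a restatement of (\ref{kw2}) under $\exp$: applying Theorem~\ref{biq} to the elements described in (\ref{kw2}) gives $\bar{b}=\tfrac{e^{\lambda}-1}{\lambda}\bar{w}$ and hence $\Lambda\bar{b}=e^{\lambda}\bar{b}$, proving $\subseteq$; for $\supseteq$, given $\Lambda\in SO(1,3)^\uparrow$ and an f.d.\ null $\bar{b}$ with $\Lambda\bar{b}=e^{\lambda}\bar{b}$, I would use exponentiality (Theorems~\ref{nco},~\ref{biq}) to choose $F$ with $e^F=\Lambda$, observe via Proposition~\ref{nux} that $\bar{b}$ is then also an eigenvector of $F$, and invoke the last clause of Theorem~\ref{biq} to get the unique preimage $\bar{w}=\tfrac{\lambda}{e^{\lambda}-1}\bar{b}$, which is f.d.\ null, so the corresponding $X$ lies in $L(J)\backslash L(I)$ by (\ref{kw2}). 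The inclusions (\ref{kw4}) and (\ref{kw5}) as set-containments are then immediate: an f.d.\ null $\bar{b}$ puts the matrix in $J\backslash I$, and $X\in L(J)$ gives $e^X=\exp(1\cdot X)\in J$.

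The step I expect to be the main obstacle is showing the two inclusions are \emph{strict}. I would exhibit a single witness: fix $\alpha\neq 0$, let $B=B(\alpha)$ be the canonical null boost of Theorem~\ref{vkx}, and set $g=\exp\!\begin{pmatrix} B & -(e_0-e_1) \\ \bar{0}^\intercal & 0\end{pmatrix}=\begin{pmatrix} e^B & -\bar{b} \\ \bar{0}^\intercal & 1\end{pmatrix}$, where $\bar{b}=\tfrac{e^B-I}{B}(e_0-e_1)$ is a positive multiple of $e_0-e_1$, hence f.d.\ null; thus $g\in J\backslash I$. For (\ref{kw4}): a direct computation gives $e^B(e_0-e_1)=(1+\alpha^2,\ \alpha^2-1,\ -2\alpha,\ 0)$, which is not proportional to $e_0-e_1$ when $\alpha\neq 0$, so $\bar{b}$ is not an eigenvector of $e^B$ and $g$ is not in the right-hand side of (\ref{kw3}); hence (\ref{kw4}) is strict. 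For (\ref{kw5}): since $e^B$ has characteristic polynomial $(\lambda-1)^4$ with eigenvectors spanning only $\textrm{Span}(e_0+e_1,e_3)$, Proposition~\ref{nux} forces any $F$ with $e^F=e^B$ to be skew with those eigenvectors and eigenvalue $0$ on them, which pins $F=B$; the only candidate preimage of $g$ is then $\bar{w}=\tfrac{B}{e^B-I}(e_0-e_1)=\big(1+\tfrac{\alpha^2}{6},\ -1+\tfrac{\alpha^2}{6},\ \alpha,\ 0\big)$ (using $B^3=0$), for which $\eta(\bar{w},\bar{w})=\tfrac{\alpha^2}{3}>0$, so $\bar{w}$ is spacelike. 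By (\ref{kw1}) no element of $L(J)$ exponentiates to $g$, so $g\in J\backslash e^{L(J)}$ and (\ref{kw5}) is strict as well.
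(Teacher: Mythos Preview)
Your arguments for (\ref{kw1})--(\ref{kw3}) are correct and close to the paper's. For (\ref{kw2}) you replace the paper's appeal to the ``causal curve which is not a null pregeodesic connects chronologically related points'' lemma by the direct computation $\eta(\int v,\int v)=\iint \eta(v(u),v(t))\,\dd u\,\dd t\le 0$, which is a clean and self-contained alternative.

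The strictness witness, however, is wrong as written. You define $g=\exp X$ with $X=\begin{pmatrix} B & -(e_0-e_1)\\ \bar 0^\intercal & 0\end{pmatrix}$ and then assert that $\bar b=\tfrac{e^B-I}{B}(e_0-e_1)$ is a positive multiple of $e_0-e_1$. It is not: using $(e^B-I)/B=I+B/2+B^2/6$ and $B(e_0-e_1)=-2\alpha e_2$, $B^2(e_0-e_1)=2\alpha^2(e_0+e_1)$, one gets $\bar b=(1+\tfrac{\alpha^2}{3},\,-1+\tfrac{\alpha^2}{3},\,-\alpha,\,0)$, which satisfies $\eta(\bar b,\bar b)=-\alpha^2/3<0$. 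So $\bar b$ is \emph{timelike}, $g\in I$, and $g$ witnesses neither (\ref{kw4}) nor (\ref{kw5}). Your later computation of the ``only candidate preimage'' $\bar w=\tfrac{B}{e^B-I}(e_0-e_1)$ is inconsistent with your own definition: you started with $\bar w=e_0-e_1$, so the unique preimage (Theorem~\ref{biq}) is just $e_0-e_1$ again, which is null, not spacelike. The computation you actually performed corresponds to taking $\bar b=e_0-e_1$ (not $\bar w$) as the data for $g$; if you define $g$ that way the rest of your argument goes through.

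The paper's route to strictness is shorter and avoids any computation: for (\ref{kw4}) take \emph{any} $\Lambda\neq I$ and any f.d.\ null $\bar b$ that is not an eigenvector of $\Lambda$; such $\bar b$ exist since a nontrivial $\Lambda$ does not fix every null direction. For (\ref{kw5}) observe that $e^{L(J)}=e^{L(I)}\cup e^{L(J)\backslash L(I)}$ with $e^{L(I)}\subset I$ by definition, so the witness just produced for (\ref{kw4}), lying in $J\backslash I$ but outside $e^{L(J)\backslash L(I)}$, automatically lies outside $e^{L(J)}$.
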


\begin{proof}
Let us consider the system which defines the exponential map
(\ref{ndg})-(\ref{ndf}) with initial condition $\Lambda=I$,
$\bar{b}=\bar{0}$.
If $\begin{pmatrix}
\Lambda & - \bar{b} \\
\bar{0}^\intercal & 1
\end{pmatrix}(s)$ belongs to $J$ for all $s>0$ then the same holds
for small positive $s$. By Eq. (\ref{ndf}), since at $s=0$,
$\bar{b}=\bar{0}$, we have that $\bar{w}$ must be f.d.\
nonspacelike.

Conversely, let us suppose that $\bar{w}$ is f.d.\ nonspacelike, and
let $\bar{c}$ be such that $\bar{b}=\Lambda\bar{c}$ (thus $\bar{b}$
is f.d.\ nonspacelike iff $\bar{c}$ is). Eq. (\ref{ndf}) becomes
\begin{equation} \label{mww}
\frac{\dd }{\dd s} \,\bar{c} =\Lambda^{-1}\bar{w}.
\end{equation}
Since the right-hand side is nonspacelike, the integral $\bar{c}$ is
nonspacelike. Equation (\ref{kw1}) is proved.

Let us prove Eq. (\ref{kw2}). Let us suppose that $\bar{b}(s)$ is
f.d.\ nonspacelike for all $s>0$ and lightlike for some
$\tilde{s}>0$. The same holds for $\bar{c}(s)$. We already know that
$\bar{w}$ must be nonspacelike and equation (\ref{mww}) proves that
$\bar{c}(s)$ is a smooth causal curve or
$\bar{c}(s)=\bar{w}=\bar{0}$ for all $s$. Every causal curve which
is not a lightlike pregeodesic  connects chronologically related
points \cite{hawking73}. Thus $\bar{c}(s)$, $0\le s<\tilde{s}$ is a
null pregeodesic curve or $\bar{c}(s)=\bar{0}$. Imposing that the
tangent vector to $\bar{c}(s)$ be proportional to the same null
vector for all $0\le s<\tilde{s}$ gives $\Lambda^{-1}(s)
\bar{w}=f(s) \bar{n} $, for some smooth function $f(s)$. This
equation for $s=0$ gives $\bar{w}=f(0) \bar{n}$ which shows that
$\bar{w}$ is null. Let us differentiate $\bar{w}=f(s) \Lambda(s)
\bar{n} $ and evaluate it at $s=0$. We get $0= f'(0) \bar{n}+ Ff(0)
\bar{n}$ which proves that $w$ is an eigenvector of $F$. Let
$\lambda$ be the eigenvalue, i.e. $F\bar{w}=\lambda \bar{w}$. The
scalar product of Eq. (\ref{ndf}) with $\bar{w}$ gives, $\frac{\dd
}{\dd s} \eta (w,b) =-\lambda \eta(w,b)$, and using the initial
condition $\bar{b}(0)=0$ we obtain $\eta(w,b)=0$. Since, by
assumption, $\bar{b}$ is f.d.\ nonspacelike, we have $\bar{b}=h(s)
\bar{w}$ which plugged back into Eq. (\ref{ndf}) gives $h'=\lambda
h+1$ or $\bar{w}=\bar{0}$. If $\bar{w}\ne \bar{0}$ we infer
$h(s)=\frac{1}{\lambda}[\exp(\lambda s)-1]$ for $\lambda\ne 0$ and
$h(s)=s$ for $\lambda=0$. In summary, if the matrix ${\footnotesize
\begin{pmatrix}
F & - \bar{w} \\
\bar{0}^\intercal & 0
\end{pmatrix}}$ belongs to
$L(J)\backslash L(I)$ then ${\footnotesize\begin{pmatrix}
\Lambda(s) & - \bar{b}(s) \\
\bar{0}^\intercal & 1
\end{pmatrix}}=\exp [s {\footnotesize\begin{pmatrix}
F & - \bar{w} \\
\bar{0}^\intercal & 0
\end{pmatrix}}]$ is such that
$\bar{b}(s)=\frac{1}{\lambda}[\exp(\lambda s)-1]\hat{w}$ where it is
understood that $\frac{1}{\lambda}[\exp(\lambda s)-1]:=s$ for
$\lambda=0$. In particular, $\bar{b}(s)$ is f.d. null for all $s$
and it is an eigenvector for $\Lambda$ with positive eigenvalue
because
\[
\Lambda \bar{b}=(\exp F)\, \bar{b}= (\exp\lambda) \bar{b}.
\]
(Notice that with such a $\bar{b}$ the matrix
${\footnotesize\begin{pmatrix}
\Lambda(s) & - \bar{b}(s) \\
\bar{0}^\intercal & 1
\end{pmatrix}}$ belongs to $J\backslash I$.)
Let us show that conversely every matrix
${\footnotesize\begin{pmatrix}
\Lambda & - \bar{b} \\
\bar{0}^\intercal & 1
\end{pmatrix}}$ belongs to $\exp [L(J)\backslash L(I)]$ provided  $\bar{b}$
is f.d.\ null and it is an eigenvector with positive eigenvalue for
$\Lambda$. The Lie group $SO(1,3)^\uparrow$ is exponential, namely
the exponential map is surjective (for the references see after
theorem \ref{nco}). Thus there is some $F\in \mathfrak{so}(1,3)$
such that $\Lambda =\exp F$. Furthermore, Prop. \ref{nux} shows that
$\Lambda$ and $F$ have the same f.d.\ null eigenvectors thus $F
\bar{b}=\lambda \bar{b}$.  Let us define $\bar{w}=\lambda (\exp
\lambda -1)^{-1} \bar{b}$ for $\lambda\ne 0$, and $\bar{w}=\bar{b}$
for $\lambda=0$, then by the above calculations $\exp
{\footnotesize\begin{pmatrix}
F& - \bar{w} \\
\bar{0}^\intercal & 0
\end{pmatrix}}={\footnotesize\begin{pmatrix}
\Lambda & - \bar{b} \\
\bar{0}^\intercal & 1
\end{pmatrix}}$. We proved Eq. (\ref{kw3}). The fact that the
inclusion (\ref{kw4}) is strict follows taking
${\footnotesize\begin{pmatrix}
\Lambda & - \bar{b} \\
\bar{0}^\intercal & 1
\end{pmatrix}}$ such that $\Lambda\ne I$  (thus some null vectors are not eigenvectors) and $\bar{b}$
is a f.d.\ null vector which is not an eigenvector. This matrix
belongs to $J\backslash I$ but not to $\exp[L(J)\backslash L(I)]$.
The last strict inclusion is an immediate consequence of the
previous one and $\exp I\subset I$.
\end{proof}

As a simple corollary of the previous theorem we obtain

\begin{proposition}
The sets $L(J)$ and $L(J)\backslash L(I)$ are closed and $L(I)$ is
not open. However, $L(I)$ is open in the topology induced on $L(J)$.
\end{proposition}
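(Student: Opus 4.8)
The plan is to read everything off the explicit descriptions (\ref{kw1})--(\ref{kw2}) of $L(J)$ and $L(J)\setminus L(I)$ already obtained in the theorem, working in the linear coordinates $(F,\bar w)\in\mathfrak{so}(1,3)\times\mathbb{R}^4$ on $\mathfrak{iso}(1,3)$ induced by ${\footnotesize\begin{pmatrix}F & -\bar w\\\bar{0}^\intercal & 0\end{pmatrix}}\leftrightarrow(F,\bar w)$. Under this identification (\ref{kw1}) says that $L(J)$ is the product $\mathfrak{so}(1,3)\times\overline{C}$, where $\overline{C}=\{\bar w:\eta(\bar w,\bar w)\le 0,\ w^0\ge 0\}$ is the closed future causal cone; as the product of a linear subspace with a closed cone it is closed, which is the first assertion.

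For $N:=L(J)\setminus L(I)$ I would rewrite (\ref{kw2}) as a conjunction of closed conditions on $(F,\bar w)$. The requirement ``$\bar w$ is f.d.\ null'' (allowing also $\bar w=\bar 0$) cuts out $\{\eta(\bar w,\bar w)=0,\ w^0\ge 0\}$, which is closed; and ``$F\bar w$ is proportional to $\bar w$'', a condition that also makes sense when $\bar w=\bar 0$, is exactly the vanishing of the bivector with components $w^a(F\bar w)^b-w^b(F\bar w)^a$, hence an algebraic — and therefore closed — condition. Intersecting, $N$ is closed in $\mathfrak{iso}(1,3)$. (One could equally argue by sequences: if $(F_n,\bar w_n)\to(F,\bar w)$ with $\bar w_n$ f.d.\ null and $F_n\bar w_n=\lambda_n\bar w_n$, then $\bar w$ is again f.d.\ null since $\overline{C}$ is closed, and when $\bar w\ne\bar 0$ one recovers $\lambda=\lim_n\lambda_n$ by contracting with a fixed vector not $\eta$-orthogonal to $\bar w$; the case $\bar w=\bar 0$ is trivial.) This, including the harmless $\bar w=\bar 0$ corner, is the only place in the proof that needs a moment's attention.

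Once $N$ is known to be closed, the relative openness of $L(I)$ is immediate: since $I\subset J$ we have $L(I)\subset L(J)$, hence $L(I)=L(J)\setminus N$ is the complement within $L(J)$ of a set closed in $\mathfrak{iso}(1,3)$, so it is open in the subspace topology of $L(J)$.

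Finally, to see that $L(I)$ is \emph{not} open in $\mathfrak{iso}(1,3)$ I would argue by contradiction. Were it open, then $L(I)\subset L(J)$ would force $L(I)\subset\mathrm{int}\,L(J)$; but no matrix in $L(J)$ whose translational part $\bar w$ is f.d.\ null can be interior to $L(J)$, because an arbitrarily small decrease of the time component of $\bar w$ makes it spacelike, producing by (\ref{kw1}) matrices outside $L(J)$. Hence $\mathrm{int}\,L(J)$ consists only of matrices with $\bar w$ f.d.\ timelike, so $L(I)$ would contain no matrix with null $\bar w$ — which is false: the matrix ${\footnotesize\begin{pmatrix}M^{01} & -\bar w\\\bar{0}^\intercal & 0\end{pmatrix}}$ with $\bar w=(1,0,1,0)$ has $\bar w$ f.d.\ null while $M^{01}\bar w=(0,-1,0,0)$ is not proportional to $\bar w$, so by (\ref{kw2}) it lies in $L(J)\setminus N=L(I)$. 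Thus $L(I)$ is not open. I do not anticipate any genuine obstacle: apart from the closedness of $N$, everything reduces to the product structure of $L(J)$ and a one-line matrix computation.
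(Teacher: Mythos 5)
Your proof is correct and follows the route the paper intends: the proposition is stated there as ``a simple corollary of the previous theorem,'' i.e.\ it is meant to be read off from the explicit descriptions (\ref{kw1})--(\ref{kw2}), which is exactly what you do, including the two points that actually need care (treating the $\bar w=\bar 0$ corner of $L(J)\backslash L(I)$ so that the set is genuinely closed, and exhibiting an element of $L(I)$ with null $\bar w$ to defeat openness). No gaps.
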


\subsubsection{The strict inclusion $\exp L(I) \subsetneq I$ and the causal cone of $F$}
Suppose that $F\in \mathfrak{so}(1,3)$ is so close to zero that
defined $\Lambda=\exp F \in SO(1,3)^\uparrow$ there is no other
$F'\in \mathfrak{iso}(1,3)$ such that $\Lambda=\exp F'$. We ask the
following question: for which $\bar{b}\in \mathbb{R}^4$ we have
${\footnotesize\begin{pmatrix}
\Lambda & - \bar{b} \\
\bar{0}^\intercal & 1
\end{pmatrix}} \in \exp L(I)$? Is it possible to find some $\bar{b}$
such that this matrix belongs to $I$ but not to $\exp L(J)$?
According to theorem \ref{biq} the $\bar{b}$s which satisfy the
first condition are those which are  causal according to the metric
\[
G=(\frac{F}{e^F-I})^\intercal \eta\, (\frac{F}{e^F-I}),
\]
and f.d.\ timelike according to $\eta$ (recall that $\exp
L(I)\subset I$, then $\bar{w}=\frac{F}{e^F-I} \bar{b}$ cannot be
p.d.\  timelike for otherwise $\bar{b}$ would be  p.d.\ timelike
because of Eq. (\ref{lny})).


 It is instructive to
calculate this metric for the canonical forms $A$ and $B$ of $F$
given by theorem \ref{vkx}. The result is
\begin{align*}
G(A)&={\footnotesize
\begin{pmatrix}
 \frac{-\varphi^2}{2\cosh \varphi -2} & 0  & 0 & 0 \\
 0 & \frac{\varphi^2}{2\cosh \varphi -2} & 0 & 0 \\
 0 & 0 & \frac{\theta^2}{2-2\cos \theta} & 0 \\
 0 & 0 & 0 & \frac{\theta^2}{2-2\cos \theta}
\end{pmatrix}} \\
G(B)&={\footnotesize
\begin{pmatrix}
 -1+\frac{\alpha^2}{12} & -\frac{\alpha^2}{12}  & 0 & 0 \\
 -\frac{\alpha^2}{12}  & 1+\frac{\alpha^2}{12} & 0 & 0 \\
 0 & 0 & 1 & 0 \\
 0 & 0 & 0 & 1
\end{pmatrix}}.
\end{align*}
For the generic $F$ we have $G(F)=L G(A) L^{-1}$ or $G(F)=L G(B)
L^{-1}$, where $L \in SO(1,3)^\uparrow$ and the former or the latter
case apply depending on whether $F$ belongs to the Ad-orbit of $A$
or $B$. Using the inequalities $\frac{\varphi^2}{2\cosh \varphi
-2}\le 1$ (with equality iff $\varphi =0$) and
$\frac{\theta^2}{2-2\cos \theta}\ge 1$ (with equality iff
$\theta=0$), we easily infer that if $F$ is in the Ad-orbit of $A$,
then the causal cone of $G(F)$ is contained inside the causal cone
of $\eta$. Moreover, if $F\ne 0$ it is tangent to it in just two
distinct null directions. As a consequence, the set $I\backslash
\exp L(J)$ is non-empty, it suffices to consider a vector $\bar{b}$
which stay outside the causal cone of $G(F)$ but inside the timelike
cone  of $\eta$. Actually, we can say more, namely that
$\overline{\exp J}\subsetneq J$, because under small perturbations
of $F$ and of $\bar{b}$ as above, $\bar{b}$ keeps staying outside
the causal cone of $G(F)$.

In order to complete our analysis, observe that if $y,x\in
\mathbb{R}$ are such that $y^2\ge x^2$ then
\[
(-1+\frac{\alpha^2}{12})x^2-\frac{\alpha^2}{6}
xy+(1+\frac{\alpha^2}{12}) y^2\ge \frac{\alpha^2}{12} (x-y)^2,
\]
which implies that whenever $F$ belongs to the Ad-orbit of $B$, the
causal cone of $G(F)$ is contained in the  causal cone of $\eta$ and
it is tangent to it in just one null direction. As a consequence, we
can again conclude that the set $I\backslash \exp L(J)$ is
non-empty.


We summarize some of these findings through the following
proposition.
\begin{proposition}
We have $\exp L(I)\subsetneq I$ and  $ I\backslash \exp L(J)\ne
\emptyset$. Moreover,  $\overline{\exp J}\subsetneq J$, that is, $J$
is not weakly exponential \cite{hilgert89,hilgert93,hofmann97}.
\end{proposition}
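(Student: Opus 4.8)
The plan is to combine the uniqueness of logarithms near the identity (Theorem~\ref{biq}) with the comparison of the metrics $G(A)$ and $G(B)$ computed above. First I would fix $F\in\mathfrak{so}(1,3)$ with $F\neq 0$ and so close to $0$ that $\Lambda:=e^{F}$ has no logarithm in $\mathfrak{so}(1,3)$ other than $F$; this is possible because $\exp$ is a local diffeomorphism at the identity. By Theorem~\ref{biq}, for any $\bar b\in\mathbb{R}^{4}$ the matrix $\bigl(\begin{smallmatrix}\Lambda & -\bar b\\ \bar 0^{\intercal} & 1\end{smallmatrix}\bigr)$ then has a \emph{unique} logarithm in $\mathfrak{iso}(1,3)$, namely $\bigl(\begin{smallmatrix}F & -\bar w\\ \bar 0^{\intercal} & 0\end{smallmatrix}\bigr)$ with $\bar w=\frac{F}{e^{F}-I}\,\bar b$. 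By (\ref{kw1}) this logarithm lies in $L(J)$ iff $\bar w$ is future directed $\eta$-nonspacelike or $\bar w=\bar 0$; and since $\eta(\bar w,\bar w)=G(F)(\bar b,\bar b)$ with $G(F)=(\frac{F}{e^{F}-I})^{\intercal}\eta\,(\frac{F}{e^{F}-I})$, this happens (for $\bar b\neq\bar 0$, which forces $\bar w\neq\bar 0$) iff $\bar b$ is causal for $G(F)$.

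The geometric heart is the observation, already argued above, that for $F\neq 0$ the $G(F)$-causal cone is \emph{strictly} contained in the $\eta$-causal cone. When $F$ lies in the Ad-orbit of $A$ this follows from the displayed form of $G(A)$ and the elementary inequalities $\frac{\varphi^{2}}{2\cosh\varphi-2}\leq 1$ and $\frac{\theta^{2}}{2-2\cos\theta}\geq 1$, with equality only for $\varphi=0$, resp.\ $\theta=0$; when $F$ lies in the Ad-orbit of $B$ it follows from the quadratic inequality displayed above for $G(B)$. Hence I may choose $\bar b$ future directed and $\eta$-timelike but lying strictly outside the $G(F)$-causal cone. Then $\bar w=\frac{F}{e^{F}-I}\bar b$ is $\eta$-spacelike and nonzero, so the unique logarithm of $\bigl(\begin{smallmatrix}\Lambda & -\bar b\\ \bar 0^{\intercal} & 1\end{smallmatrix}\bigr)$ is not in $L(J)$, hence not in $L(I)$ either; yet the matrix itself belongs to $I$ because $\bar b$ is future directed $\eta$-timelike. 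Since $\exp L(I)\subseteq I$ by the very definition of the Lie wedge and $\exp L(I)\subseteq\exp L(J)$, this single matrix witnesses simultaneously $\exp L(I)\subsetneq I$ and $I\setminus\exp L(J)\neq\emptyset$.

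For the last assertion I would argue by openness. The conditions on $(\Lambda,\bar b)$ used above, namely that $\log\Lambda$ be small and nonzero (so that $\Lambda$ has a unique logarithm in $\mathfrak{so}(1,3)$), that $\bar b$ be strictly inside the $\eta$-timelike cone, and that $\bar b$ be strictly outside the $G(F)$-causal cone, are all open, and $F\mapsto G(F)$ is continuous. Therefore the matrix constructed above has a whole neighbourhood in $ISO(1,3)^{\uparrow}$ consisting of elements of $I$ whose unique logarithm avoids $L(J)$, i.e.\ a neighbourhood in $J$ disjoint from $\exp L(J)$. Hence this matrix is not in $\overline{\exp L(J)}=\overline{\exp J}$, so $\overline{\exp J}\subsetneq J$ and $J$ is not weakly exponential. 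The one point needing care throughout is the uniqueness bookkeeping: ``not in $\exp L(J)$'' could a priori fail because some far-away logarithm $F'$ of $\Lambda$ slips into $L(J)$, which is precisely why I restrict to $\Lambda$ so near the identity that its $\mathfrak{so}(1,3)$-logarithm is unique, and then propagate this restriction under perturbation; the cone comparison itself is a routine computation once $G(A)$ and $G(B)$ are in hand.
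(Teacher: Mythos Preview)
Your proposal is correct and follows essentially the same approach as the paper: the paper's argument, given in the paragraphs immediately preceding the proposition (which is stated there as a summary of ``some of these findings''), likewise restricts to $F$ near $0$ so that the $\mathfrak{so}(1,3)$-logarithm is unique, compares the causal cones of $G(F)$ and $\eta$ via the explicit forms $G(A)$, $G(B)$ and the same elementary inequalities, picks $\bar b$ inside the $\eta$-timelike cone but outside the $G(F)$-causal cone, and then invokes stability under small perturbations of $(F,\bar b)$ for the closure claim. Your write-up is if anything slightly more explicit about the uniqueness bookkeeping and the openness argument, but the strategy is the same.
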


\begin{remark}
One of the consequences of the strict inclusion $\exp L(I)\subsetneq
I$ is that, given two events $p,q\in M$, with $q\in J^{+}(p)$, and
two proper orthochronous bases $\{e_a^p\}$,  $\{e_a^q\}$, at $p$ and
$q$ respectively, it is possible that no observer which moves with
constant acceleration and angular velocity can start with a comoving
base coincident with $\{e_a^p\}$ to later reach $\{e_a^q\}$. One of
the points of this paper is to show that, nevertheless, $\{e_a^p\}$
can be dragged into $\{e_a^q\}$, with the motion of an observer
which moves with constant acceleration and angular velocity.
However, this observer does not necessarily pass through $p$ or $q$.
\end{remark}


In the next section we study the physical meaning of these causal
orbits on the Lie algebra.
%
%
%
%
%

\subsection{The causal orbits}

The Ad action of $ISO(1,3)^{\uparrow}$ on $\mathfrak{iso}(1,3)$ (or
the Ad action of $IL^{\uparrow}_+$ on $\mathfrak{IL}$) generates
orbits which we classified in section \ref{bos}.

It is possible to assign a causal character to these orbits.

\begin{theorem} \label{njx}
The Lie algebra Ad-orbits on $\mathfrak{iso}(1,3)$ which admit some
representative in $L(I)$ belong to the families of orbits (according
to the classification of theorem \ref{vkz}) 1, 2, 4, 7, 9, with
$\epsilon_1=1$ (whenever it applies). Those which admit some
representative in $L(J)\backslash L(I)$ but do not admit any
representative in $L(I)$ belong to the families of orbits 5, 6, 11,
with $\epsilon_1=1$, 8 with $I_4=0$, and the trivial orbit of the
origin (12).
\end{theorem}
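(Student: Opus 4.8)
The plan is to run through the twelve canonical representatives furnished by Theorem \ref{vkz} one family at a time and, for each, decide which of three mutually exclusive possibilities holds: the orbit meets $L(I)$; the orbit meets $L(J)\backslash L(I)$ but not $L(I)$; or the orbit avoids $L(J)$ altogether. The only inputs needed are the wedge descriptions established just above — namely $(F,-\bar w)\in L(J)$ iff $\bar w$ is f.d.\ nonspacelike, and $(F,-\bar w)\in L(J)\backslash L(I)$ iff $\bar w$ is f.d.\ null and an eigenvector of $F$, together with the degenerate sub-case $\bar w=\bar 0$, $F\ne 0$ which the proof of that theorem also covers — plus the fact that along an Ad-orbit the pair $(F,\bar w)$ is replaced by $(LFL^{-1},\,L(\bar w-F\bar a))$. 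Thus, after putting $F$ into one of the canonical forms $A$, $B$, $0$, the vector $\bar w$ may still be shifted by any element of $\textrm{Im}\,F$ and then moved by the stabilizer of $F$ in $SO(1,3)^\uparrow$.

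For the positive assertions I would simply exhibit a representative. In families $1$ and $2$ the hypotheses $I_2\ne 0$, resp.\ $I_1<0$, make $F$ invertible on $W$, resp.\ on $\textrm{Span}(e_0,e_1)$, so a translation of $\bar w$ produces an arbitrarily large future-directed timelike $e_0$-component and the representative lands in $L(I)$. In families $4$, $7$, $9$ with $\epsilon_1=1$ the canonical $\bar w$ is already a positive multiple of $e_0$, hence f.d.\ timelike, so these representatives lie in $L(I)$. In families $5$, $6$, $8$ with $I_4=0$, $11$ with $\epsilon_1=1$, and $12$, the canonical $\bar w$ — translated by an element of $\textrm{Im}\,F$ into the form $l(e_0+e_1)$, $l\ge 0$, in families $6$, $8$ and $12$ — is f.d.\ null and lies in $\textrm{Ker}\,F$, i.e.\ is an $F$-eigenvector of eigenvalue $0$, so the representative lies in $L(J)\backslash L(I)$; this disposes of the existence half of the claims about the second family. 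Here one should state explicitly that in family $6$ (and in family $12$) the relevant representative may instead be taken with $\bar w=\bar 0$, since the literal wording of (\ref{kw2}) excludes the zero vector.

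The negative assertions — that families $3$ and $10$, the $\epsilon_1=-1$ versions of $4$, $5$, $7$, $9$, $11$, and family $8$ with $I_4>0$ meet no causal wedge, and that families $5$, $6$, $8$ (with $I_4=0$), $11$, $12$ do not meet $L(I)$ — I would handle through the Ad-invariants of Theorems \ref{vkx} and \ref{gjs} rather than by a complete description of the little groups. If $(F,-\bar w)\in L(J)$, so that $\bar w$ is f.d.\ nonspacelike, one computes in the frame where $F$ is canonical that $w^0>0$, whence $\epsilon_1=+1$ wherever $\epsilon_1$ is defined (each time it is the sign of a quantity that comes out positive on such a $\bar w$); this yields all the $\epsilon_1=-1$ exclusions, using also that $SO(1,3)^\uparrow$, being orthochronous, cannot turn a past-directed causal vector future-directed. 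The same computation shows $I_3\ge 0$ in the regime $I_2=0$, $I_1>0$, ruling out family $3$, and $I_4=\min_{\bar w'\in[\bar w]}{w'}^a{w'}_a\le\eta(\bar w,\bar w)\le 0$, ruling out families $10$ and $8$ with $I_4>0$. Finally, in families $5$, $6$, $8$ with $I_4=0$, $11$, $12$ the invariant data forces $I_3=0$ (or $F=0$ with $I_4=0$), and then $\bar w$ f.d.\ nonspacelike is forced to be a null element of $\textrm{Ker}\,F$ — a direct check with $F$ in canonical form $A$ (or $B$, or $0$) shows the reachable f.d.\ nonspacelike $\bar w$'s are exactly the f.d.\ multiples of a fixed vector in $\textrm{Ker}\,F$ — so no representative lies in $L(I)$.

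The main obstacle I anticipate is bookkeeping, not ideas: in each family one must pin down $\textrm{Im}\,F$ and enough of the stabilizer of $F$ to be certain that no unexpected frame change manufactures an f.d.\ nonspacelike $\bar w$ that the obvious boosts, rotations and translations miss. Re-expressing the exclusions through the already-established Ad-invariants $I_1$, $I_3$, $I_4$, $\epsilon_1$ is precisely what sidesteps a full case-by-case computation of the little groups, and the $\bar w=\bar 0$ degeneracy deserves one explicit sentence since the paper's conventions do not count $\bar 0$ as null.
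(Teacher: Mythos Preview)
Your proposal is correct and, for the positive assertions (exhibiting representatives in $L(I)$ or in $L(J)\backslash L(I)$), follows the paper essentially verbatim.

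For the negative assertions you take a genuinely different route. The paper argues directly from the geometry of $\textrm{Im}\,F$: starting from each canonical representative of Theorem \ref{vkz}, it observes that $\textrm{Im}\,F$ is a spacelike subspace orthogonal to the canonical $\bar w$ (possibly trivial), so that a frame translation (which shifts $\bar w$ by an element of $\textrm{Im}\,F$) followed by a homogeneous transformation (which preserves causal type) can never manufacture a f.d.\ nonspacelike $\bar w$; the few cases that do not fit this template (case 7 with $\epsilon_1=-1$, case 8 with $I_4\ne 0$) are checked by hand. Your approach instead exploits the Ad-invariance of $I_3$, $I_4$, $\epsilon_1$ established in Theorem \ref{gjs}: assuming a representative with $\bar w$ f.d.\ nonspacelike exists, you bring only $F$ to canonical form (so $\bar w\to L\bar w$ stays f.d.\ nonspacelike) and evaluate the invariants there, obtaining $I_3\ge 0$, $I_4\le 0$, $\epsilon_1=+1$, which immediately eliminates the unwanted families. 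This is conceptually cleaner, since each exclusion becomes a one-line inequality among quantities by which the orbits were already classified, and it sidesteps the case-by-case verification that $\textrm{Im}\,F\perp\bar w$; the paper's argument, on the other hand, is more elementary in that it never has to compute $\tilde F\bar w$ or $I_3$ for a non-canonical $\bar w$, only to name $\textrm{Im}\,F$ in each case. Your explicit flag on the $\bar w=\bar 0$ degeneracy in families 6 and 12 is apposite and matches a genuine wrinkle in the conventions.
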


\begin{proof}
If we start from representatives 1 or 2 in theorem \ref{vkz}, then,
since $\textrm{Im} F=\textrm{Span}(e_0,e_1)$, with a translation of
the frame we can send $w^0=0$ to $w^0=c$, with $c>0$ arbitrary (in
particular $c>b$ in case 2), and leaving unaltered all the other
matrix entries. After this translation $w^a$ becomes f.d.\ timelike
thus the new representative belongs to $L(I)$. Representatives
4,7,9, with $\epsilon_1=1$ satisfy $w^a$ f.d.\ timelike, thus there
is nothing to prove. As for representatives 5,11, with
$\epsilon_1=1$, 6, or 8 with $I_4=0$, it is clear that $\bar{w}$ is
f.d.\ null  and that $\bar{w}$ is an eigenvector of $F$.


It remains to show that orbits of type 3,  8 with $I_4\ne 0$, 10, do
not have any representative in $L(J)$, that those of type 4, 5, 7,
9, 11, with $\epsilon_1=-1$, do not have any representative in
$L(J)$, and that those of type 5, 11, with $\epsilon_1=1$, 6, 8 with
$I_4=0$, 12, do not have any representative in $L(I)$.

The argument is the same for most of these cases. Any frame change
can be accomplished with a translation  followed by a homogeneous
transformation. In cases 3,  8 with $I_4\ne 0$, 10, and 4, 5,  9,
11, with $\epsilon_1=-1$, $\bar{w}$ is not f.d.\ non-spacelike and
$\textrm{Im} F$ is  a spacelike subspace orthogonal to it (possibly
empty). After the first translation of the frame, the new $\bar{w}$
becomes the sum of the old $\bar{w}$ and of an element belonging to
$\textrm{Im} F$  and hence, is still non f.d.\ non-spacelike.

As for case 8 with $I_4\ne 0$, any frame change can be accomplished
with a translation  followed by a homogeneous transformation. The
former transformation does not change the spacelike causal character
of $\bar{w}$ (since one gets $w^0=w^1$ and possibly $w^3\ne 0$ for
any choice of $\bar{a}$) while the latter preserves its causal
character.

Analogously, in case 7 with $\epsilon_1=-1$, it is easy to check
that operating with a translation to make $w^0$ positive leads to
$w^a$ spacelike.

The proof that classes 5, 11, with $\epsilon_1=1$, 6, 8 with $I_4=0$
and 12, do not have any representative in $L(I)$, proceeds
analogously.

%
%
%
%
%
%

\end{proof}

\begin{theorem} \label{kug}
Let $\mathcal{P}\in \mathfrak{IL}$ and suppose that for some $q\in
M$, $(\exp \mathcal{P}) q \in J^{+}(q)$ (resp. $(\exp \mathcal{P}) q
\in I^{+}(q)$), then there is some $q'\in M$ such that $\exp
(\mathcal{P} s)q' \in J^{+}(q')$ (resp. $\exp (\mathcal{P} s)q' \in
I^{+}(q')$) for every $s>0$.

Stated in another way, if an element of $\mathfrak{iso}(1,3)$ has
exponential belonging to $J$ (resp. $I$), then there must be some
representative in its Ad-orbit which belongs to $L(J)$ (resp.
$L(I)$).

\end{theorem}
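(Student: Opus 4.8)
The plan is to reduce Theorem~\ref{kug} to the orbit classification of Theorem~\ref{vkz} together with the casework already carried out for Theorem~\ref{njx}, using Theorem~\ref{biq} to compare $\mathcal{P}$ with $\exp\mathcal{P}$. By Proposition~\ref{ndz} and the definition of $L(J)$ the two formulations in the statement are equivalent, so it suffices to prove: \emph{if the conjugacy class of $\exp\mathcal{P}$ in $ISO(1,3)^{\uparrow}$ meets $J$ (resp.\ $I$), then the Ad-orbit of $\mathcal{P}$ in $\mathfrak{iso}(1,3)$ meets $L(J)$ (resp.\ $L(I)$)}. Since both hypothesis and conclusion are invariant under conjugation, after a change of frame I may assume $\mathcal{P}=\begin{pmatrix}F & -\bar{w}\\ \bar{0}^{\intercal} & 0\end{pmatrix}$ is one of the twelve canonical representatives of Theorem~\ref{vkz}; then, writing $g(F)=(e^{F}-I)/F$ (invertible by Theorem~\ref{biq}), one has $\exp\mathcal{P}=\begin{pmatrix}\Lambda & -\bar{b}\\ \bar{0}^{\intercal} & 1\end{pmatrix}$ with $\Lambda=e^{F}$ and $\bar{b}=g(F)\bar{w}$.

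Next I would set up two bookkeeping equivalences. Under conjugation the translational part of $\exp\mathcal{P}$ runs through $\bigcup_{L}L\big[\bar{b}+\mathrm{Im}(\Lambda-I)\big]$ with $L\in SO(1,3)^{\uparrow}$, and such $L$ preserves and reflects the property of being future directed nonspacelike (resp.\ timelike); hence the conjugacy class of $\exp\mathcal{P}$ meets $J$ (resp.\ $I$) if and only if $\bar{b}+\mathrm{Im}(\Lambda-I)$ contains a future directed nonspacelike (resp.\ timelike) vector. Likewise, by the characterization (\ref{kw1}) of $L(J)$, the Ad-orbit of $\mathcal{P}$ meets $L(J)$ (resp.\ $L(I)$) iff $\bar{w}+\mathrm{Im}(F)$ contains a future directed nonspacelike (resp.\ timelike) vector. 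Finally $\Lambda-I=F\,g(F)$ with $g(F)$ invertible, hence $\mathrm{Im}(\Lambda-I)=\mathrm{Im}(F)$; and $g(F)$ commutes with $F$, so it maps $\mathrm{Im}(F)$ onto itself and $\bar{b}+\mathrm{Im}(F)=g(F)\big(\bar{w}+\mathrm{Im}(F)\big)$.

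Now I would go through the twelve families. If $\mathcal{P}$ lies in a family already known from Theorem~\ref{njx} to possess a representative in $L(J)$ --- namely $1,2,6,12$, the families $4,5,7,9,11$ with $\epsilon_{1}=1$, and $8$ with $I_{4}=0$; for the $I$-version the families $1,2,9$ and $4,7$ with $\epsilon_{1}=1$ --- the conclusion holds unconditionally. For every other family I would derive a contradiction by showing the hypothesis cannot hold, i.e.\ that $\bar{b}+\mathrm{Im}(F)$ contains no future directed nonspacelike (resp.\ timelike) vector. In almost all of these families this is immediate: whenever $F=0$ one has $g(F)=I$; whenever $F$ is a pure rotation (families $3$, and $4,5$ with $\epsilon_{1}=-1$) the canonical $\bar{w}$ lies in $\textrm{Span}(e_{0},e_{1})$, on which $g(F)$ acts as the identity; and in family $8$ with $I_{4}\ne 0$ the canonical $\bar{w}$ lies in $\mathrm{Ker}\,F$. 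In each such case $g(F)\bar{w}=\bar{w}$, so $\bar{b}+\mathrm{Im}(F)=\bar{w}+\mathrm{Im}(F)$ and the needed emptiness is exactly what the proof of Theorem~\ref{njx} established; for the $I$-version one checks, again with $\bar{b}=\bar{w}$, that $\bar{w}+\mathrm{Im}(F)$ contains no timelike vector in the families $6$ and $12$, the families $5$ and $11$ with $\epsilon_{1}=1$, and family $8$ with $I_{4}=0$. The one genuinely new computation is family $7$ with $\epsilon_{1}=-1$: there $F=B$ and $\bar{w}=-\sqrt{I_{3}}\,e_{0}$ is not fixed by $g(F)$, so using $B^{3}=0$ and $g(B)=I+\tfrac12 B+\tfrac16 B^{2}$ I would compute $\bar{b}=\sqrt{I_{3}}\,(-\tfrac76,-\tfrac16,\tfrac12,0)$ and $\mathrm{Im}(B)=\textrm{Span}(e_{2},\,e_{0}+e_{1})$; a general vector $\bar{b}+u\,e_{2}+v\,(e_{0}+e_{1})$ of $\bar{b}+\mathrm{Im}(B)$ with nonnegative $0$-component has $v\ge\tfrac76$, whence its Minkowski square equals $2v-\tfrac43+(\tfrac12+u)^{2}\ge 1>0$ and it is spacelike. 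This rules out every bad family and finishes the proof.

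I expect the main obstacle to be organizational rather than computational: isolating the identity $\mathrm{Im}(\Lambda-I)=\mathrm{Im}(F)$ and the observation that, in the canonical frame, $g(F)$ fixes $\bar{w}$ in all bad families except $7$ with $\epsilon_{1}=-1$ --- which is what makes the bulk of the work reducible to Theorem~\ref{njx} --- and then disposing of that one exceptional family through the short explicit inequality displayed above.
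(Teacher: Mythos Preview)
Your argument is correct, but it takes a markedly different route from the paper's and does more work than is needed.

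The paper's proof avoids any case analysis. Working in a frame with origin at the point $q$ that witnesses the hypothesis, it writes the translational part of the conjugate of $\exp\mathcal{P}$ by a translation $\bar{c}$ as $\bar{r}(s):=-(\check{\Lambda}(s)-I)\bar{c}+\check{b}(s)$, and from the flow equations (\ref{ndg})--(\ref{ndf}) obtains
\[
\frac{\dd}{\dd s}\big(\bar{r}(s)-\check{w}\,s\big)=\check{F}\big(\bar{r}(s)-\bar{c}\big),\qquad \bar{r}(0)=\bar{0},
\]
so that $\bar{r}(1)-\check{w}\in\mathrm{Im}\,\check{F}$. Since $\bar{r}(1)$ is by hypothesis future directed nonspacelike (resp.\ timelike), the Lie algebra element obtained from the canonical one by the frame translation that replaces $\check{w}$ with $\bar{r}(1)$ lies in $L(J)$ (resp.\ $L(I)$). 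That is the whole proof: one differential identity, no casework, no appeal to Theorem~\ref{njx}.

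Your approach reaches the same underlying affine subspace $\bar{w}+\mathrm{Im}\,F$, but via $g(F)=(e^{F}-I)/F$ and then a twelve-family case split. In fact you were one line away from collapsing the casework: from $g(F)-I=(e^{F}-I-F)/F=F\cdot h(F)$ with $h$ entire, you get $\bar{b}-\bar{w}=(g(F)-I)\bar{w}\in\mathrm{Im}\,F$, hence $\bar{b}+\mathrm{Im}\,F=\bar{w}+\mathrm{Im}\,F$ as \emph{sets}, and the hypothesis and conclusion become literally the same condition. So what each approach buys: the paper's ODE computation (or the power-series identity you nearly wrote down) gives a uniform one-step argument; your route, while correct, trades that identity for an orbit-by-orbit verification, including the explicit inequality for family~$7$ with $\epsilon_{1}=-1$, that could have been avoided.
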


\begin{proof}
Suppose that  ${\footnotesize
\begin{pmatrix}
F & - \bar{w} \\
\bar{0}^\intercal & 0
\end{pmatrix}}\in \mathfrak{iso}(1,3)$ has exponential belonging to
$J$ (resp. $I$). There is a matrix ${\footnotesize\begin{pmatrix} L & -\bar{a} \\
  \bar{0}^\intercal & 1 \end{pmatrix}} \in ISO(1,3)^\uparrow$ such
  that
  \begin{align*}
  {\footnotesize\begin{pmatrix} F & - \bar{w} \\
\bar{0}^\intercal & 0 \end{pmatrix}}&=
{\footnotesize\begin{pmatrix} L & -\bar{a} \\
  \bar{0}^\intercal & 1 \end{pmatrix}
\begin{pmatrix}
\check{F} & - \check{w} \\
\bar{0}^\intercal & 0
\end{pmatrix}
\begin{pmatrix} L & -\bar{a} \\
  \bar{0}^\intercal & 1 \end{pmatrix}^{-1}} \\
  &={\footnotesize\begin{pmatrix} L & \bar{0} \\
  \bar{0}^\intercal & 1 \end{pmatrix}
  \begin{pmatrix} I & -L^{-1}\bar{a} \\
  \bar{0}^\intercal & 1 \end{pmatrix}
\begin{pmatrix}
\check{F} & - \check{w} \\
\bar{0}^\intercal & 0
\end{pmatrix}
\begin{pmatrix} I & -L^{-1}\bar{a} \\
  \bar{0}^\intercal & 1
  \end{pmatrix}^{-1}
\begin{pmatrix} L & \bar{0} \\
  \bar{0}^\intercal & 1
  \end{pmatrix}^{-1}},
  \end{align*}
where ${\footnotesize\begin{pmatrix}
\check{F} & - \check{w} \\
\bar{0}^\intercal & 0
\end{pmatrix}}$
is one of the representatives of theorem \ref{vkz}. Let
$\bar{c}=L^{-1} \bar{a}$
\begin{align*}
  {\footnotesize\exp \begin{pmatrix} F & - \bar{w} \\
\bar{0}^\intercal & 0 \end{pmatrix}}
  &={\footnotesize\begin{pmatrix} L & \bar{0} \\
  \bar{0}^\intercal & 1 \end{pmatrix}
  \begin{pmatrix} I & -\bar{c} \\
  \bar{0}^\intercal & 1 \end{pmatrix}
(\exp
\begin{pmatrix}
\check{F} & - \check{w} \\
\bar{0}^\intercal & 0
\end{pmatrix})
\begin{pmatrix} I & -\bar{c} \\
  \bar{0}^\intercal & 1
  \end{pmatrix}^{-1}
\begin{pmatrix} L & \bar{0} \\
  \bar{0}^\intercal & 1
  \end{pmatrix}^{-1}.}
  \end{align*}
The frame changes obtained through homogeneous transformations send
$J$ (resp. $I$) to itself, thus the assumption of the theorem is
that
\[
{\footnotesize\begin{pmatrix} I & -\bar{c} \\
  \bar{0}^\intercal & 1 \end{pmatrix}
(\exp
\begin{pmatrix}
\check{F} & - \check{w} \\
\bar{0}^\intercal & 0
\end{pmatrix})
\begin{pmatrix} I & -\bar{c} \\
  \bar{0}^\intercal & 1
  \end{pmatrix}^{-1}},
\]
belongs to $J$ (resp. $I$). Let ${\footnotesize \begin{pmatrix}
\check{\Lambda}(s) & - \check{b}(s) \\
\bar{0}^\intercal & 1
\end{pmatrix}}= \exp ( {\footnotesize
\begin{pmatrix}
\check{F} & - \check{w} \\
\bar{0}^\intercal & 0
\end{pmatrix}}s)$ then we are assuming that
\[
\bar{r}(s):=-(\check{\Lambda}(s)-I) \bar{c}+\check{b}(s),
\]
is f.d.\ nonspacelike (resp. timelike) for some $\bar{c}$ and for
$s=1$. Let us use Eqs. (\ref{ndg})-(\ref{ndf})
\[
\frac{\dd }{\dd s} \,(\bar{r}(s)-\check{w} s)=
\check{F}(\bar{r}(s)-\bar{c}), \qquad \textrm{and }
\bar{r}(0)=\bar{0},
\]
from which we obtain $\bar{r}(1)\in \check{w}+ \textrm{Im}
\check{F}$. This inclusion implies that
${\footnotesize\begin{pmatrix}
\check{F} & - \check{w} \\
\bar{0}^\intercal & 0
\end{pmatrix}}$ belongs to the same orbit of ${\footnotesize\begin{pmatrix}
\check{F} & - \bar{r}(1) \\
\bar{0}^\intercal & 0
\end{pmatrix}}$ (they are connected through a translation of the
frame), which, because of the causal character of $\bar{r}(1)$,
belongs to $L(J)$ (resp. $L(I)$). Thus the Ad-orbit of
${\footnotesize
\begin{pmatrix}
F & - \bar{w} \\
\bar{0}^\intercal & 0
\end{pmatrix}}$ contains an element in $L(J)$ (resp. $L(I)$).

%
%


\end{proof}

\begin{definition} \label{vkk}
 A conjugacy class of $ISO(1,3)^\uparrow$ is {\em causal} ({\em timelike}) if it admits a
 representative belonging to $J$ (rep. $I$). An Ad-orbit of
 $\mathfrak{iso}(1,3)$ is {\em causal} ({\em timelike})  if it admits an element
 belonging to $L(J)$ (resp. $L(I)$). An Ad-orbit is an {\em
 horismos} Ad-orbit if it is causal but not timelike.
\end{definition}

The logarithm of an element belonging to $ISO(1,3)^\uparrow$ gives
those matrices of $\mathfrak{iso}(1,3)$ whose exponential gives the
original matrix. This set is non-empty because $ISO(1,3)^\uparrow$
is exponential. Clearly, the logarithm sends conjugacy classes into
unions of Ad-orbits.

The previous theorem implies

\begin{corollary}
 The exponential of a causal (timelike) orbit gives a causal (resp.
 timelike) conjugacy class. The logarithm of a causal (resp.
 timelike) conjugacy class is a union of causal (resp.
 timelike) Ad-orbits.
\end{corollary}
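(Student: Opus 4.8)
The plan is to derive both halves of the corollary from three facts already in hand: the Ad-equivariance of the exponential map, $\exp(\textrm{Ad}_g X)=g(\exp X)g^{-1}$; the inclusions $\exp L(J)\subset J$ and $\exp L(I)\subset I$ (Eq. (\ref{kw5}) and the preceding proposition); and Theorem \ref{kug}, which says that an element of $\mathfrak{iso}(1,3)$ whose exponential lies in $J$ (resp. $I$) has a representative of its Ad-orbit in $L(J)$ (resp. $L(I)$). Almost everything is then bookkeeping.

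For the first assertion, let $\mathcal{O}\subset\mathfrak{iso}(1,3)$ be a causal Ad-orbit; by Definition \ref{vkk} it contains some $X\in L(J)$, so $\exp X\in J$. By Ad-equivariance, $\exp\mathcal{O}=\{g(\exp X)g^{-1}:g\in ISO(1,3)^\uparrow\}$ is contained in the conjugacy class of $\exp X$, and that class meets $J$ (at $\exp X$); hence it is causal. If $\mathcal{O}$ is moreover timelike, choose $X\in L(I)$ instead, use $\exp X\in I$, and the same equivariance argument shows the class is timelike. I would stress here that ``the exponential of a causal orbit gives a causal conjugacy class'' must be read as ``the conjugacy class containing $\exp\mathcal{O}$ is causal'': $\exp$ need not be injective and the class of $\exp X$ may be strictly larger than $\exp\mathcal{O}$, so $\exp\mathcal{O}$ is not literally a conjugacy class.

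For the second assertion, let $C$ be a causal conjugacy class with a representative $g_0\in J$, and put $\log C=\{X\in\mathfrak{iso}(1,3):\exp X\in C\}$, which is nonempty since $ISO(1,3)^\uparrow$ is exponential (Theorem \ref{biq}). First I would observe that $\log C$ is Ad-invariant: if $\exp X\in C$ then $\exp(\textrm{Ad}_g X)=g(\exp X)g^{-1}\in C$, so $\log C$ is a union of Ad-orbits. It remains to see that every such orbit is causal. Given $X\in\log C$, causality of $C$ supplies some $h$ with $h(\exp X)h^{-1}\in J$, i.e.\ $\exp(\textrm{Ad}_h X)\in J$; thus $Y:=\textrm{Ad}_h X$, which lies in the same Ad-orbit as $X$, has exponential in $J$. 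Theorem \ref{kug} then produces a representative of the Ad-orbit of $Y$ (equivalently of $X$) inside $L(J)$, so that orbit is causal. Running this over all $X\in\log C$ gives the result; the timelike case is verbatim the same, with $I$ and $L(I)$ replacing $J$ and $L(J)$ and invoking the ``resp.'' clause of Theorem \ref{kug}.

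The one nontrivial input is Theorem \ref{kug} itself, which is already proved; consequently I do not expect a real obstacle here, only the need to be careful with the two readings (``causal''/``timelike'') throughout and with the non-injectivity caveat noted above. Everything else is a direct consequence of Ad-equivariance and the definitions.
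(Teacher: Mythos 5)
Your proof is correct and takes the same route the paper intends: the paper states this corollary without a written proof, as an immediate consequence of Theorem \ref{kug}, and your argument supplies exactly the missing bookkeeping (Ad-equivariance of $\exp$, the definitional inclusions $\exp L(J)\subset J$ and $\exp L(I)\subset I$, and Ad-invariance of the logarithm set). One minor remark: by equivariance $\exp\mathcal{O}$ is in fact the \emph{entire} conjugacy class of $\exp X$, since conjugation acts transitively on that class through the adjoint action on $\mathcal{O}$, so your non-injectivity caveat, while harmless, is not needed.
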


We reformulate the relativistic Chasles' theorem emphasizing the
physical content of the classification. For this reason we focus on
the infinitesimal transformations of $M$ whose exponential moves at
least one point $x\in M$ into its causal future $J^{+}(x)$.

In what follows  $\ln I$ ($\ln J$) denotes the subset of
$\mathfrak{iso}(1,3)$ made of matrices whose exponential is
contained in $I$ (resp. $J$). Clearly, $L(I)\subset \ln I$, and
analogously $L(J)\subset \ln J$.

\begin{theorem}[Relativistic Chasles' theorem, causal Lie cone version] \label{vkw}
$\empty$
\begin{itemize}
\item Let  $\mathcal{P}\in \mathfrak{IL}$,
$\mathcal{P}\ne 0$, and suppose that there is a point $q\in M$ such
that $P(s)=\exp (\mathcal{P}s)$ sends $q$  to its timelike future
for some $s>0$. Then it is possible to choose a reference frame such
that $\mathcal{P}$ takes one of the following matrix forms
\begin{align*}
& (a) \  {\footnotesize \begin{pmatrix}
 0 & -a  & 0 & 0 & -1\\
 -a& 0 & 0 & 0 & 0\\
 0 & 0 & 0 & \omega & 0\\
 0 & 0 & -\omega & 0 & 0 \\
0 & 0 & 0 & 0 & 0
\end{pmatrix} } \tau=(a K^1+\omega J^1+ H)\tau, \quad \parbox{3.5cm}{where $a > 0$, $\omega \ne 0$,} \\
 & (b) \
 {\footnotesize\begin{pmatrix}
 0 & 0 & -a & 0 & -1\\
 0 & 0 & -\omega & 0 & 0\\
 -a & \omega & 0 & 0 & 0 \\
 0 & 0 & 0 & 0 & 0 \\
 0 & 0 & 0 & 0 & 0
\end{pmatrix} } \tau=(a K^2-\omega J^3+ H)\tau, \quad \parbox{4cm}{where $a,\omega\ge 0$.}
\end{align*}
where $\tau>0$. Stated in another way, the orbits of
$\mathfrak{iso}(1,3)$ under the Ad action of $ISO(1,3)^\uparrow$
which admit an element in $\ln I$ admit a representative which is
either of type (a) (if $I_2\ne 0$) or of type (b) (if $I_2=0$). The
constants $a,\omega,\tau$, are arbitrary as long as they satisfy
\begin{align}
& & (a^2-\omega^2)\tau^2&=-2  I_1, \label{cf1}\\
I_2\ne 0 \ &\Rightarrow&  a\omega \tau^2&=I_2,\\
I_2=0 \ &\Rightarrow& \ \omega^2 \tau^4&= I_3, \\
\qquad I_1=I_2=I_3=0 \ (F=0) \ &\Rightarrow& \  \tau^2&=- I_4.
\qquad \qquad  \label{cf3}
\end{align}

Whenever case (b) applies, it is possible to choose the frame in
such a way that $0\le \omega\le a$ (if $I_1\le 0$) or $a=0$,
$\omega=2I_1/\sqrt{I_3}$ (if $I_1>0$). We have pure rotation if  $I_2=0$, $I_1>0$ or $I_1=I_2=I_3=0$  ($F=0$).
If pure rotation  does not apply, then  $\tau>0$ can be  chosen arbitrarily, and once this is done, $\vert \omega\vert$ and $a$ are uniquely determined.



\item Let  $\mathcal{P}\in \mathfrak{IL}$,  and suppose that there is a point $q\in M$ such
that $P(s)=\exp (\mathcal{P}s)$ sends $q$  to some point in
$J^{+}(q)\backslash \{q\}$ for some $s>0$, and that $\mathcal{P}$
does not have the property of the previous point. Then it is
possible to choose a reference frame such that $\mathcal{P}$ takes
one of the following matrix forms
\begin{align*}
(c) & \ {\footnotesize
\begin{pmatrix}
0 & 0  & 0 & 0 & -1  \\
0 & 0 & 0 & 0 & -\epsilon_2 \\
 0 & 0 & 0 & \omega & 0 \\
 0 & 0 & -\omega & 0 & 0 \\
 0 & 0 & 0 & 0 & 0
\end{pmatrix}} \lambda=(\omega J^1-\epsilon_2 P^1+H)\lambda \quad \parbox{4cm}{where $\omega \ge 0$, $\lambda>0$, \\ and $\epsilon_2=\pm
1$,}
\end{align*}
and where $\lambda$ and $\omega$ are arbitrary as long as they
satisfy $\lambda \omega=\sqrt{2I_1}$, or
\begin{align*}
(d) & \ {\footnotesize
\begin{pmatrix}
0 & 0  & -\eta & 0 & -1  \\
0 & 0 & -\eta & 0 & -1 \\
 -\eta & \eta & 0 & 0 & 0 \\
 0 & 0 & 0 & 0 & 0 \\
 0 & 0 & 0 & 0 & 0
\end{pmatrix}} \lambda=[( K^2- J^3)\eta -P^1+H]\lambda \quad \parbox{4cm}{where $\eta \ne 0$, \\ and $\lambda>0$,}
\end{align*}
and where $\eta$ and $\lambda$ are arbitrary as long as they satisfy
the constraints.

Stated in another way, the orbits of $\mathfrak{iso}(1,3)$ under the
Ad action of $ISO(1,3)^\uparrow$ which admit an element in $\ln J$
but none in $\ln I$, admit  representative (c) (if $I_1>0$) or (d)
(if $I_1=0$).
\end{itemize}

\end{theorem}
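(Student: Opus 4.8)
The plan is to translate the two causality hypotheses into statements about the Lie cone, cut the problem down to the finitely many relevant orbit families via the classification of Theorem \ref{vkz} and Theorem \ref{njx}, and then bring each surviving canonical representative into one of the observer--motion forms (a)--(d) by a short sequence of explicit frame changes, reading off the constants along the way. For the first step: applying Proposition \ref{ndz} to $P(s)=\exp(\mathcal{P}s)$, the hypothesis that $P(s)$ sends some point into its chronological (resp.\ causal) future for some $s>0$ says that the conjugacy class of $\exp(\mathcal{P}s)$ meets $I$ (resp.\ $J$); since a conjugacy class of an exponential is the exponential of an Ad--orbit, since $L(I),L(J)$ are cones, and since $\exp L(I)\subset I$ and $\exp L(J)\subset J$, Theorem \ref{kug} makes this equivalent to the existence of a representative of the Ad--orbit of $\mathcal{P}$ in $L(I)$ (resp.\ $L(J)$). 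Imposing in addition that some point land in $J^{+}(q)\setminus\{q\}$ forces the $L(J)$--representative to have $\bar w\ne 0$ (otherwise the exponential has trivial translation part and fixes a point). Hence, by Theorem \ref{njx} read through Theorem \ref{vkz}, the first bullet concerns exactly the orbit families $1,2,4,7,9$ (with $\epsilon_1=1$), while the second, after dropping the trivial orbit and family $6$ (excluded because a pure rotation fixes a point and displaces every other point spacelikely), concerns families $5$ (with $\epsilon_1=1$), $11$ (with $\epsilon_1=1$), and $8$ with $I_4=0$.

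For the normalization I would proceed family by family. Family $1$ ($I_2\ne0$) is easiest: $F$ is invertible, so after putting $F$ in the canonical roto--boost form $A$ of Theorem \ref{vkx} ($\varphi>0$, $\theta=I_2/\varphi$) a frame translation carries $\bar w$ to $\tau e_0$ for any prescribed $\tau>0$, giving form (a) with $a=\varphi/\tau$, $\omega=\theta/\tau$. For the remaining families ($I_2=0$) the recipe is: a homogeneous transformation bringing $F$ to $A$ (with $\theta=0$ if $I_1<0$, $\varphi=0$ if $I_1>0$) or to $B$ (if $I_1=0$); a translation within $\textrm{Im}\,F$ to simplify $\bar w$; and a final boost and rotation rewriting $F$ as the combination $aK^{2}-\omega J^{3}$ (using $K^{2}-J^{3}=B(1)$, the rescaling identities (\ref{mkd}), and the fact that a boost in $\textrm{Span}(e_0,e_1)$ commutes with a rotation in $\textrm{Span}(e_2,e_3)$ and rescales the null eigenvectors $e_0\pm e_1$) and $\bar w$ as $\tau e_0$ or as the appropriate null multiple of $e_0\pm e_1$. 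This yields form (b) for families $2,4,7,9$, form (c) for family $5$, and form (d) for families $8$ and $11$, the generator $F$ of (d) degenerating to $0$ (i.e.\ $\eta=0$) precisely in case $11$.

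Then I would verify the constraints (\ref{cf1})--(\ref{cf3}). From $I_1=(\theta^{2}-\varphi^{2})/2$, $I_2=\theta\varphi$ on the roto--boost block, from $\tilde F^{2}=F^{2}$ on the orbit of $B$ (as in the proof of Theorem \ref{gjs}), and from a direct computation of $\tilde F\bar w$ on the explicit matrices (a)--(d), one reads off $(a^{2}-\omega^{2})\tau^{2}=-2I_1$, $a\omega\tau^{2}=I_2$ when $I_2\ne0$, $\omega^{2}\tau^{4}=I_3$ when $I_2=0$ and $F\ne0$, and $\tau^{2}=-I_4$ when $F=0$; completeness of the invariants (Theorem \ref{vkz}) then shows these are the only admissible $(a,\omega,\tau)$ (resp.\ $(\omega,\lambda,\epsilon_2)$, $(\eta,\lambda)$) up to the stated freedom. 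The refinements are bookkeeping on top of this: $\pi$--rotations in $\textrm{Span}(e_2,e_3)$ let one take $a,\omega\ge0$; when $I_1\le0$ the identity $(a^{2}-\omega^{2})\tau^{2}=-2I_1\ge0$ gives $0\le\omega\le a$, while when $I_1>0$ (family $4$) the canonical $F=A$ with $\varphi=0$ forces $a=0$ and, with $\omega^{2}\tau^{4}=I_3$, the value $\omega=2I_1/\sqrt{I_3}$; ``pure rotation'' ($a=0$) is exactly families $4$ and $9$, i.e.\ $I_2=0,I_1>0$ or $F=0$, and there $\omega^{2}\tau^{4}=I_3$ (resp.\ $\tau^{2}=-I_4$) fixes both $\tau$ and $\omega$, whereas in every other first--bullet case $\tau>0$ is free and then $|\omega|$ and $a$ are determined.

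The hard part is the normalization step, specifically exhibiting the Lorentz transformations that turn the orbit--adapted representatives of Theorem \ref{vkz} into the physically--adapted generators $aK+\omega J+H$: the roto--boost case with $\varphi=0$ needs the nontrivial trick of trading a pure boost for a boost--plus--rotation combination, and the $F=B$ case needs the rescaling identities (\ref{mkd}) as used in the proof of Theorem \ref{vkz}. A secondary annoyance is the careful treatment of the degenerate sub-cases ($\omega=0$, $a=\omega$, $\eta=0$, $I_3=0$, $I_4=0$), where several of the named forms collapse onto one another and one must check that the uniqueness and range statements remain correctly phrased.
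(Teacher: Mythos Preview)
Your proposal follows essentially the same architecture as the paper's proof: reduce to $L(I)$ (resp.\ $L(J)\backslash L(I)$) via Theorem~\ref{kug}, cut down to families 1,2,4,7,9 (resp.\ 5, 8 with $I_4=0$, 11) by Theorem~\ref{njx}, and then convert each canonical representative of Theorem~\ref{vkz} into one of the observer--motion generators by explicit frame changes. The assignment of families to forms (a)--(d) and the verification of the constraints are handled the same way in both.

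Two small points. First, you have a slip in identifying which case needs the ``nontrivial trick'': it is the pure--boost case $\theta=0$ (family 2, $I_1<0$), not $\varphi=0$. There $F=A$ with $\theta=0$ must be rewritten as $aK^{2}-\omega J^{3}$, and the paper accomplishes this with a boost in the $\textrm{Span}(e_0,e_3)$ plane (not $\textrm{Span}(e_0,e_1)$) followed by a rotation; this is exactly the computation you allude to but mislabel. Family 4 ($\varphi=0$) needs only a rotation. Second, your placement of family 11 under form (d) with $\eta=0$ conflicts with the stated restriction $\eta\neq0$; it fits more naturally under form (c) with $\omega=0$ (the constraint $\lambda\omega=\sqrt{2I_1}=0$ is then satisfied), though admittedly the theorem's dichotomy ``(c) if $I_1>0$, (d) if $I_1=0$'' is itself slightly imprecise on this boundary case. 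The paper's own proof of the horismos part is a one--line appeal to inspection of cases 5, 8, 11, so neither treatment dwells on this.
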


\begin{proof}
According to theorem \ref{kug} we can suppose that $q$ is sent to
its (timelike) causal future for every $s>0$.

Let us choose a reference frame with origin at $q$ and let ${\footnotesize \begin{pmatrix} F& -\bar{w} \\
\bar{0}^\intercal & 0\end{pmatrix}}$ be the corresponding matrix
form of $\mathcal{P}$. Let ${\footnotesize \begin{pmatrix} \Lambda(s)& -\bar{b}(s) \\
\bar{0}^\intercal & 1\end{pmatrix}}$ be the matrix of
$\exp(\mathcal{P}s)$. Since
$\bar{b}(s)$ is timelike for every $s>0$ we have ${\footnotesize \begin{pmatrix} F& -\bar{w} \\
\bar{0}^\intercal & 0\end{pmatrix}} \in L(I)$. By theorem \ref{njx}
the frame can actually be chosen in such a way that ${\footnotesize \begin{pmatrix} F& -\bar{w} \\
\bar{0}^\intercal & 0\end{pmatrix}}$ takes one of the forms
1,2,4,7,9, (with $\epsilon_1=1$) of theorem \ref{vkz}. We have  to
show that in each of these cases, through a suitable frame change,
we can bring the matrix to forms (a) or (b).

We are going to show that we can obtain (a) starting from 1,   and
(b) from 2, 4, 7 or 9. In other words we get (a) if $I_2\ne 0$ and
(b) if $I_2=0$.


Thus let us suppose that ${\footnotesize \begin{pmatrix} F& -\bar{w} \\
\bar{0}^\intercal & 0\end{pmatrix}}$ is the representative given in
point 1,   theorem \ref{vkz}. Let us observe that $\varphi,\theta\ne
0$. Since $\textrm{Im} F\supset \textrm{Span}(e_0,e_1)$, through a
translation of the frame we reach the matrix form
\[
{\footnotesize \begin{pmatrix}
 0 & -\varphi & 0 & 0 & -\tau \\
 -\varphi & 0 & 0 & 0 & 0 \\
 0 & 0 & 0 & \theta & 0\\
 0 & 0 & -\theta & 0 & 0\\
 0 & 0 & 0 & 0 & 0
\end{pmatrix}}={\footnotesize\begin{pmatrix}
 0 & -a & 0 & 0 & -1 \\
 -a & 0 & 0 & 0 & 0 \\
 0 & 0 & 0 & \omega & 0\\
 0 & 0 & -\omega & 0 & 0\\
 0 & 0 & 0 & 0 & 0
\end{pmatrix}}\tau,
\]
where $\tau>0$ can be chosen arbitrarily and $a=\varphi/\tau$,
$\omega=\theta/\tau$.

Let us come to the cases that will lead us to the form (b).


In case 9 set $\tau=\sqrt{-I_4}$, $a=\omega=0$.

Suppose that we are in case 2. Through translation of the frame we
reach the matrix form
\[
{\footnotesize \begin{pmatrix}
 0 & -\varphi & 0 & 0 & -c \\
 -\varphi & 0 & 0 & 0 & 0 \\
 0 & 0 & 0 & 0 & 0\\
 0 & 0 & 0 & 0 & -b\\
 0 & 0 & 0 & 0 & 0
\end{pmatrix}},
\]
where we can choose $c>\vert b\vert$. The next identity holds
\begin{align*}
  & {\footnotesize\begin{pmatrix}
 \cosh \gamma & 0 & 0 & -\sinh \gamma & 0 \\
 0 & 1 & 0 & 0 & 0 \\
 0 & 0 & 1 & 0 & 0\\
 -\sinh \gamma & 0 & 0 & \cosh \gamma & 0\\
 0 & 0 & 0 & 0 & 1
\end{pmatrix} \begin{pmatrix}
 0 & -\varphi & 0 & 0 & -c \\
 -\varphi & 0 & 0 & 0 & 0 \\
 0 & 0 & 0 & 0 & 0\\
 0 & 0 & 0 & 0 & -b\\
 0 & 0 & 0 & 0 & 0
\end{pmatrix} \begin{pmatrix}
 \cosh \gamma & 0 & 0 & \sinh \gamma & 0 \\
 0 & 1 & 0 & 0 & 0 \\
 0 & 0 & 1 & 0 & 0\\
 \sinh \gamma & 0 & 0 & \cosh \gamma & 0\\
 0 & 0 & 0 & 0 & 1
\end{pmatrix}}\\
&={\footnotesize\begin{pmatrix}
 0 & -\varphi \cosh \gamma & 0 & 0 & -(c\cosh \gamma-b\sinh\gamma) \\
 -\varphi \cosh \gamma & 0 & 0 & -\varphi \sinh\gamma & 0 \\
 0 & 0 & 0 & 0 & 0\\
 0 & \varphi\sinh\gamma & 0 & 0 & (c\sinh\gamma-b\cosh \gamma)\\
 0 & 0 & 0 & 0 & 0
\end{pmatrix}},
\end{align*}
Let $\tau>0$ be defined by $\tau:=\sqrt{c^2-b^2}$. The freedom in
$c$ shows that $\tau>0$ can be chosen arbitrarily. Since $c,\tau
>0$ we can choose $\gamma$ such that $\tanh \gamma= b/c$, so that
$c\sinh\gamma-b\cosh \gamma=0$ and $c\cosh
\gamma-b\sinh\gamma=\tau$. Thus defining $a= \varphi c/\tau^2>0$ and
$\omega= \varphi b/\tau^2$ we obtain (observe that
$c=\tau/\sqrt{1-(\omega/a)^2}$ and $\frac{b}{c}=\frac{\omega}{a}$)
\[
{\footnotesize\begin{pmatrix}
 0 & -a & 0 & 0 & -1 \\
 -a & 0 & 0 & -\omega & 0 \\
 0 & 0 & 0 & 0 & 0\\
 0 & \omega & 0 & 0 & 0\\
 0 & 0 & 0 & 0 & 0
\end{pmatrix}\tau},
\]
which through a suitable rotation of the reference frame can be
brought to the form
\[
{\footnotesize\begin{pmatrix}
 0 & 0 & -a & 0 & -1 \\
 0 & 0 & -\omega & 0 & 0 \\
 -a & \omega & 0 & 0 & 0\\
 0 & 0 & 0 & 0 & 0\\
 0 & 0 & 0 & 0 & 0
\end{pmatrix}\tau}.
\]
We observe that in this case $0\le \omega <a$. For future reference
we record that the original matrix can be rewritten
\begin{equation} \label{bat}
{\footnotesize \begin{pmatrix}
 0 & -\varphi & 0 & 0 & 0 \\
 -\varphi & 0 & 0 & 0 & 0 \\
 0 & 0 & 0 & 0 & 0\\
 0 & 0 & 0 & 0 & -b\\
 0 & 0 & 0 & 0 & 0
\end{pmatrix}}={\footnotesize \begin{pmatrix}
 0 & -a /\gamma & 0 & 0 & 0\\
 -a /\gamma & 0 & 0 & 0 & 0 \\
 0 & 0 & 0 & 0 & 0\\
 0 & 0 & 0 & 0 & -{(\omega/a)}\gamma\\
 0 & 0 & 0 & 0 & 0
\end{pmatrix}} \tau,
\end{equation}
where $\gamma:=1/\sqrt{1-(\omega/a)^2}$.

In case 7 we first boost the frame in the plane
$\textrm{Span}(e_0,e_1)$ and make a translation so that $\alpha=1$
gets replaced by any chosen $\alpha>0$ and the entry $-\sqrt{I_3}$
gets replaced by $-\sqrt{I_3}/\alpha$. We define
$a=\omega=\alpha^2/\sqrt{I_3}$, and $\tau=\sqrt{I_3}/\alpha$. We
observe that the common module of $a$ and $\omega$ can be chosen
freely due to the freedom in $\alpha$.

Suppose that ${\footnotesize \begin{pmatrix} F& -\bar{w} \\
\bar{0}^\intercal & 0\end{pmatrix}}$ is the representative given in
point 4, theorem \ref{vkz}, with $\epsilon_1=1$. Let us observe that
$\varphi=0$; $\theta,b \ne 0$. Defined $\tau=b$, $a=0$ and
$\omega=\theta/b=2I_1/\sqrt{I_3}$, after a rotation of the frame we
obtain the matrix form (b) with $a=0$, $\omega\ne 0$.

So far all the cases that we have considered that lead to case (b)
with $a,\omega \ne 0$ show that we can always satisfy the inequality
$0\le \omega \le a$. Of all the cases that we have considered just
case 4 gives $a<\omega$, but we can regard it as a case of aligned
angular velocity and acceleration. In any case, it is convenient to
observe that case 4 can be brought to the form (b) with  $a,\omega$
such that $0<a<\omega$, through a sequence of translation, boost and
rotation following calculations similar to those of case 2.

The statement concerning Eqs. (\ref{cf1})-(\ref{cf3}) can be easily
checked calculating the invariants for (a) and (b).

The last point is an easy consequence of theorem \ref{njx}, through
inspection of cases 5,11 with $\epsilon_1=1$, and 8 with $I_4=0$, of
theorem \ref{vkz}.

\end{proof}

\subsection{Lorentzian extension of Chasles' theorem}

We are ready to prove that any orientation and time orientation
preserving isometry of Minkowski spacetime which sends some point to
its chronological future, can be accomplished through the dragging
of spacetime points by the motion of an observer's reference frame,
where the observer moves with constant acceleration and angular
velocity for some proper time interval.

\begin{theorem} \label{vtw}(Relativistic Chasles' theorem, group version, timelike
part)  Suppose that $P:M\to M$, $P\in IL^{\uparrow}_+$, sends some
point to its chronological future, then there is a reference frame
on $M$ with respect to whose coordinates  $P$ takes one of the
following matrix forms
\begin{align*}
 (a) \qquad & \exp[(a K^1+\omega J^1+H)\tau]= \exp({\footnotesize \begin{pmatrix}
 0 & -a  & 0 & 0 & -1\\
 -a& 0 & 0 & 0 & 0\\
 0 & 0 & 0 & \omega & 0\\
 0 & 0 & -\omega & 0 & 0 \\
0 & 0 & 0 & 0 & 0
\end{pmatrix} } \tau) \\
&={\footnotesize\begin{pmatrix}
 \cosh (a\tau) & -\sinh (a\tau)  & 0 & 0 & -\frac{1}{a} \sinh (a\tau)  \\
 -\sinh (a\tau) & \cosh (a\tau) & 0 & 0 & \frac{1}{a} [\cosh (a\tau) -1] \\
 0 & 0 & \cos (\omega\tau) &  \sin (\omega\tau)  & 0\\
 0 & 0 & -\sin (\omega\tau) & \cos (\omega\tau) & 0\\
 0 & 0 & 0 & 0 & 1
\end{pmatrix}}, \\
(b)\qquad  & \exp[(a K^2-\omega J^3+ H)\tau]= \exp(
{\footnotesize\begin{pmatrix}
 0 & 0 & -a & 0 & -1\\
 0 & 0 & -\omega & 0 & 0\\
 -a & \omega & 0 & 0 & 0 \\
 0 & 0 & 0 & 0 & 0 \\
 0 & 0 & 0 & 0 & 0
\end{pmatrix} } \tau)\\
&={\footnotesize \begin{pmatrix}
 1+ (a \tau)^2/2 &  -(a \tau)^2/2 & -a \tau & 0 & -\tau-a^2 \tau^3/6 \\
 (a \tau)^2/2 & 1-(a \tau) ^2/2 & -a \tau  & 0 & -a^2 \tau^3/6\\
 -a \tau & a \tau & 1 & 0 & a \tau^2/2\\
 0 & 0 & 0 & 1 & 0\\
 0 & 0 & 0 & 0 & 1
\end{pmatrix}}, \\
(c) \qquad  & \exp[(aK^2-\omega J^3+H)\tau]={\footnotesize\exp
(\begin{pmatrix}
 0 & 0 & -a & 0 & -1 \\
 0 & 0 & -\omega & 0 & 0 \\
 -a & \omega & 0 & 0 & 0\\
 0 & 0 & 0 & 0 & 0\\
 0 & 0 & 0 & 0 & 0
\end{pmatrix}\tau)}\\
& = {\footnotesize \left(\begin{matrix}
 1+\gamma^2[\cosh (a\tau/\gamma )-1] & -{(\omega/a)\gamma^2} [\cosh ({a\tau }/\gamma)-1]   \\
(\omega/a) \gamma^2  [\cosh ({a\tau }/\gamma)-1] & 1-{(\omega/a)^2}\gamma^2  [\cosh ({a\tau }/\gamma)-1] \\
 - \gamma \sinh ({a\tau }/\gamma) & {(\omega/a) }\gamma \sinh ({a\tau }/\gamma) \\
 0 & 0  \\
 0 & 0
\end{matrix}\right. \cdots }\\
&\qquad \ {\footnotesize \cdots \left.
\begin{matrix}
 - \gamma \sinh ({a\tau }/\gamma) & 0 &   {(\omega/a)^2 \gamma^2 \tau}-\frac{1}{a} \gamma^{3} \sinh ({a\tau }/\gamma) \\
- {(\omega/a) }\gamma \sinh ({a\tau }/\gamma) & 0 & {(\omega/a)\gamma^2\tau } -{(\omega/a^2) } \gamma^3 \sinh ({a\tau }/\gamma)\\
  \cosh ({a\tau }/\gamma) & 0 & \frac{1}{a} \gamma^2 [\cosh ({a\tau }/\gamma)-1]\\
 0 & 1 & 0\\
 0 & 0 & 1
\end{matrix}\right) },
\end{align*}
where $\gamma(a,\omega):=1/\sqrt{1-(\omega/a)^2}$, $\tau>0$ and,
furthermore, in (a) $a\ge 0$, in (b) $a=\omega\ne 0$, in (c) $0\le
\omega < a$. The arbitrariness in $a$, $\omega$, $\tau$, is the same
as that given in theorem \ref{vkw}.

\end{theorem}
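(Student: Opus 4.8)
The plan is to deduce this group statement from its Lie-algebra counterpart, Theorem~\ref{vkw}, by exponentiation. First I would note that, since $P\in IL^\uparrow_+$ carries some point $q$ into $I^+(q)$, in particular $P\neq I$. By the exponentiality of $ISO(1,3)^\uparrow$ (Theorem~\ref{biq}) there is, after fixing a reference frame, an element $\mathcal P\in\mathfrak{iso}(1,3)$, necessarily nonzero, with $P=\exp\mathcal P$. Choosing $s=1$ shows that $\exp(\mathcal P s)$ sends $q$ to $I^+(q)$ for some $s>0$, i.e.\ the hypothesis of the first bullet of Theorem~\ref{vkw} is met (that theorem internally uses Theorem~\ref{kug} to promote this to ``for every $s>0$''). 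Hence one may choose the reference frame so that $\mathcal P$ equals the form (a) or the form (b) of Theorem~\ref{vkw}, with $a,\omega,\tau$ subject to (\ref{cf1})--(\ref{cf3}) and to the freedom described there. What remains is to compute $\exp\mathcal P$ in each case.

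In every case one writes $\mathcal P={\footnotesize\begin{pmatrix}F&-\bar{w}\\ \bar{0}^\intercal&0\end{pmatrix}}$ with $\bar{w}=\tau e_0$, so that by Theorem~\ref{biq} $\exp\mathcal P={\footnotesize\begin{pmatrix}e^F&-\bar{b}\\ \bar{0}^\intercal&1\end{pmatrix}}$ with $\bar{b}=\frac{e^F-I}{F}\,\bar{w}$. For form (a) of Theorem~\ref{vkw}, $\mathcal P=(aK^1+\omega J^1+H)\tau$ (which covers $I_2\ne0$ together with the degenerate aligned subcases), the homogeneous part $F=(aK^1+\omega J^1)\tau$ is block diagonal: a boost of rapidity $a\tau$ on $\textrm{Span}(e_0,e_1)$ and a rotation of angle $\omega\tau$ on $\textrm{Span}(e_2,e_3)$. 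Since $\bar{w}$ lies in the boost block, reading off the boost block of $\frac{e^F-I}{F}$ from the proof of Theorem~\ref{biq} (with $\varphi=a\tau$) produces precisely the matrix listed under (a); alternatively $\omega J^1$ is central in the subalgebra spanned by $K^1,J^1,H,P^1$, so $\exp\mathcal P=\exp(\omega J^1\tau)\,\exp((aK^1+H)\tau)$, the two factors being, respectively, the rotation block and the computed boost-plus-translation block.

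For form (b) of Theorem~\ref{vkw}, $\mathcal P=(aK^2-\omega J^3+H)\tau$ with $I_2=0$, I would split according to $\textrm{sgn}(I_1)$, which by (\ref{cf1}) is the sign of $\omega^2-a^2$. If $I_1<0$, i.e.\ $0\le\omega<a$, then $\gamma:=1/\sqrt{1-(\omega/a)^2}$ is finite, $F$ is Ad-conjugate to a pure boost (by (\ref{vt1})--(\ref{vt2}), $\theta=0$), and evaluating $e^F$ and $\bar{b}=\frac{e^F-I}{F}\bar{w}$ yields the matrix listed under (c); this is the generic timelike case. If $I_1=0$ with $F\ne0$, i.e.\ $\omega=a\ne0$, then $F$ is nilpotent with $F^3=0$, so $e^F=I+F+F^2/2$ and $\frac{e^F-I}{F}=I+F/2+F^2/6$, giving the matrix listed under (b). If $I_1>0$, i.e.\ $\omega>a$, Theorem~\ref{vkw} permits re-choosing the frame so that instead $a=0$, $\omega=2I_1/\sqrt{I_3}$; then $\mathcal P=(\omega J^1+H)\tau$ up to a relabelling of axes, which is the $a=0$ instance of form (a) already treated. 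The degenerate pure-translation cases ($F=0$, $\tau^2=-I_4$) fall under form (a) with $a=\omega=0$. The constraints (\ref{cf1})--(\ref{cf3}), the condition $\tau>0$, and the (non-)uniqueness statements for $a,\omega,\tau$ are then inherited verbatim from Theorem~\ref{vkw}, while the inequalities $a\ge0$ in (a), $a=\omega\ne0$ in (b) and $0\le\omega<a$ in (c) simply record the trichotomy in $\textrm{sgn}(I_1)$ just used.

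The substantive part of the argument is purely computational: matching the two canonical Lie-algebra forms of Theorem~\ref{vkw} to the three group forms here via $\textrm{sgn}(I_1)$, and carrying out the three $5\times5$ matrix exponentials through $\bar{b}=\frac{e^F-I}{F}\bar{w}$. I expect the bookkeeping for case (c) to be the main obstacle, since there one must track all the $\gamma$-dependent entries generated by a roto-boost with perpendicular unequal parameters; by contrast cases (a) and (b) drop out immediately from the block-diagonal and nilpotent structure. No further conceptual difficulty arises, all the structural input being already contained in Theorems~\ref{biq}, \ref{kug} and \ref{vkw}.
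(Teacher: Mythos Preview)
Your proposal is correct and follows essentially the same route as the paper: invoke the exponentiality of $ISO(1,3)^\uparrow$ (Theorem~\ref{biq}) to write $P=\exp\mathcal P$, apply Theorem~\ref{vkw} to bring $\mathcal P$ into canonical form, and then compute the matrix exponentials, with your trichotomy on $\textrm{sgn}(I_1)$ correctly matching the two Lie-algebra forms (a), (b) of Theorem~\ref{vkw} to the three group forms (a), (b), (c) here. The only minor difference is computational: for the most laborious case (c) the paper suggests obtaining the exponential by first passing through the frame transformation of Eq.~(\ref{bat}) back to the block-diagonal representative of case~2 in Theorem~\ref{vkz}, whereas you propose direct evaluation of $e^F$ and $\bar b=\frac{e^F-I}{F}\bar w$; both are valid and yield the same result.
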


\begin{proof}
By theorem \ref{biq}  $ISO(1,3)^\uparrow$ is exponential, thus there
is some $\mathcal{P}\in \mathfrak{IL}$ such that $P=\exp
\mathcal{P}$. The remainder of the theorem follows from theorem
\ref{vkw} after some algebra (the last matrix can also be obtained
through a transformation of the frame from Eq. (\ref{bat})).
\end{proof}

The previous theorem involves the exponential of elements of
$\mathfrak{iso}(1,3)$. The reader interested in general closed
exponentiation formulas is referred to
\cite{zeni90,leite99,fredsted01}.

\begin{theorem} \label{vhs} (Relativistic Chasles' theorem, group version, horismos part)
Suppose that $P:M\to M$, $P\in IL^{\uparrow}_+$, sends some point
$q\in M$ to some point in $J^{+}(q)\backslash\{q\}$, but none to its
chronological future, then there is a reference frame on $M$ with
respect to whose coordinates $P$ takes one of the following matrix
forms
\begin{align*}
(a) \qquad & \exp[(\omega J^1-\epsilon_2 P^1+H)\lambda]= \exp
({\footnotesize
\begin{pmatrix}
0 & 0  & 0 & 0 & -1  \\
0 & 0 & 0 & 0 & -\epsilon_2 \\
 0 & 0 & 0 & \omega & 0 \\
 0 & 0 & -\omega & 0 & 0 \\
 0 & 0 & 0 & 0 & 0
\end{pmatrix}} \lambda ) \\
&={\footnotesize\begin{pmatrix}
 1 & 0  & 0 & 0 & -\lambda  \\
 0 & 1 & 0 & 0 & -\epsilon_2 \lambda \\
 0 & 0 & \cos (\omega \lambda) &  \sin (\omega \lambda) & 0\\
 0 & 0 & -\sin (\omega \lambda) & \cos (\omega \lambda) & 0\\
 0 & 0 & 0 & 0 & 1
\end{pmatrix}}, \\
(b) \qquad & \exp\{[( K^2- J^3)\eta -P^1+H]\lambda\}= \exp(
{\footnotesize\begin{pmatrix}
0 & 0  & -\eta & 0 & -1  \\
0 & 0 & -\eta & 0 & -1 \\
 -\eta & \eta & 0 & 0 & 0 \\
 0 & 0 & 0 & 0 & 0 \\
 0 & 0 & 0 & 0 & 0
\end{pmatrix}} \lambda)\\
&={\footnotesize \begin{pmatrix}
 1+ (\eta \lambda)^2/2 &  -(\eta \lambda)^2/2 & -\eta \lambda & 0 & -\lambda \\
 (\eta \lambda)^2/2 & 1-(\eta \lambda) ^2/2 & -\eta \lambda  & 0 & -\lambda \\
 -\eta \lambda & \eta \lambda & 1 & 0 & 0\\
 0 & 0 & 0 & 1 & 0\\
 0 & 0 & 0 & 0 & 1
\end{pmatrix}}
\end{align*}
where $\lambda\ge 0$ and, moreover, in (a) $\omega > 0$,
$\epsilon_2=\pm 1$, $\lambda \omega = \sqrt{2I_1}$. The
arbitrariness in $\lambda$, $\omega$, $\eta$, is the same as that
given in theorem \ref{vkw}.
\end{theorem}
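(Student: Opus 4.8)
The plan is to mirror the proof of Theorem \ref{vtw}: reduce to the Lie-algebra statement of Theorem \ref{vkw}, then exponentiate. By Theorem \ref{biq} the group $ISO(1,3)^\uparrow$ is exponential, so there is $\mathcal{P}\in\mathfrak{IL}$ with $P=\exp\mathcal{P}$. I would first verify that the Ad-orbit of $\mathcal{P}$ meets $\ln J$ but not $\ln I$. For the former, in a reference frame with origin at the point $q$ supplied by the hypothesis the matrix of $P$ lies in $J$, because its translational part is $P(q)-q$, which is f.d.\ nonspacelike; hence the matrix of $\mathcal{P}$ in that frame lies in $\ln J$, and it belongs to the Ad-orbit of $\mathcal{P}$. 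For the latter, if some frame-representative of $\mathcal{P}$ lay in $\ln I$, its exponential would be a frame-representative of $P$ lying in $I$, whence by Proposition \ref{ndz} $P$ would send some point of $M$ to its chronological future, contrary to hypothesis. By the second part of Theorem \ref{vkw} (the reformulation in terms of $\ln J$ and $\ln I$) it then follows that in a suitable reference frame $\mathcal{P}$ equals the Lie-algebra matrix (c) of that theorem if $I_{1}>0$, and the matrix (d) if $I_{1}=0$, with the parameters subject to the constraints listed there, in particular $\lambda\omega=\sqrt{2I_{1}}$ in case (a).

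In that frame $P=\exp\mathcal{P}$ is the exponential of (c) or of (d), and it remains to evaluate these two matrix exponentials. In case (c) the matrix $\lambda(\omega J^{1}-\epsilon_{2}P^{1}+H)$ is block diagonal with respect to the partition of the coordinate indices into $\{2,3\}$ and $\{0,1,4\}$: on the first block it is $\omega\lambda$ times the standard $2\times 2$ rotation generator, while on the second block it is strictly triangular and squares to zero. Its exponential therefore acts as the planar rotation by the angle $\omega\lambda$ on the $\{2,3\}$ block and as $I+\lambda(\omega J^{1}-\epsilon_{2}P^{1}+H)$ on the $\{0,1,4\}$ block, which is precisely the matrix displayed in (a). In case (d) the homogeneous part of $M:=\lambda[(K^{2}-J^{3})\eta-P^{1}+H]$ is $\eta$ times the canonical nilpotent $B$ (with $\alpha=1$) of Theorem \ref{vkx}; using $B^{3}=0$, $B(e_{0}+e_{1})=0$, and the fact that the translational vector of $M$ is a multiple of $e_{0}+e_{1}$, one checks that $M^{3}=0$, so $\exp M=I+M+\frac{1}{2}M^{2}$, and expanding this sum gives the matrix displayed in (b). The admissible ranges of $\lambda,\omega,\eta$ are those of Theorem \ref{vkw}.

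I expect the only genuinely delicate step to be the reduction of the first paragraph, i.e.\ translating the global causal hypothesis on $P$ into the assertion that the Ad-orbit of $\mathcal{P}$ meets $\ln J$ but not $\ln I$, which makes the classification of Theorem \ref{vkw} (and behind it Theorems \ref{kug} and \ref{njx}) available. Everything after that is a routine computation, structurally identical to the passage from Theorem \ref{vkw} to Theorem \ref{vtw}.
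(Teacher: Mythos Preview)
Your proof is correct and follows exactly the paper's approach: invoke exponentiality (Theorem \ref{biq}) to write $P=\exp\mathcal{P}$, apply the horismos part of Theorem \ref{vkw} to $\mathcal{P}$, and then exponentiate the canonical representatives (c) and (d). The paper's own proof compresses all of this into ``follows from theorem \ref{vkw} after some algebra''; you have simply written out that algebra and the causal bookkeeping (Ad-orbit meets $\ln J$ but not $\ln I$) that the paper leaves implicit.
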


\begin{proof}
By theorem \ref{biq}  $ISO(1,3)^\uparrow$ is exponential, thus there
is some $\mathcal{P}\in \mathfrak{IL}$ such that $P=\exp
\mathcal{P}$. The remainder of the theorem follows from theorem
\ref{vkw} after some algebra.
\end{proof}

The transformations of type (a) might be called {\em lightlike
screws}. We mention that in \cite{synge56,shaw69} the term {\em
screw} is used for what we call {\em roto-boost}. Since {\em
roto-boosts} appear already in the study of the Lorentz group, which
does not include translations, it seems to be inappropriate to use
the term {\em screw} for those transformations.


\begin{remark}
With reference to the canonical motions (a), (b) and (c) of theorem
\ref{vtw}, it is interesting to calculate the position
$\Lambda^{-1}\bar{b}$ of the frame with respect to its coordinates
at time $\tau=0$. They are
\begin{align*}
(a) & \quad {\footnotesize \begin{pmatrix} \frac{1}{a} \sinh (a
\tau)
\\ \frac{1}{a} [\cosh (a \tau) -1] \\ 0 \\ 0
\end{pmatrix}}, \qquad
(b)    \quad {\footnotesize \begin{pmatrix} \tau+a^2 \tau^3/6
\\ a^2\tau^3/6 \\ a \tau^2/2 \\ 0
\end{pmatrix}},
\end{align*}
where we omit the expression for (c) which is complex and not
particularly illuminating. It seems curious that we get a rather
simple polynomial expression for case (b) which corresponds to equal
and orthogonal acceleration and angular velocity.
\end{remark}

We end this work giving in table \ref{table4b} and \ref{table4c} the
classification of timelike and horismos Ad-orbits of
$\mathfrak{iso}(1,3)$. There we choose the simplest representative
which, however, might not belong to $L(I)$ (resp. $L(J)\backslash
L(I)$). Nevertheless, we keep the parametrization as it is inherited
by its conjugacy equivalent which belongs to $L(I)$ (resp.
$L(J)\backslash L(I)$). The last column reminds us that once the
parameters selecting the orbit have been fixed, the freedom left in
the choice of simplifying reference frame selects some
characteristic geometric object. These ingredients provide the
generalization to the relativistic case of Mozzi and Chasles'
instantaneous axis of rotation.

\section{Conclusions}

We have generalized Chasles' theorem to the Lorentzian spacetime
case, proving that every inhomogeneous proper orthochronous Lorentz
transformation, which sends some point to its chronological future,
can be obtained through the displacement of an observer which moves
at constant angular velocity and constant acceleration (theorems
\ref{vkw} and \ref{vtw}). We have also given an horismos version of
this result in which a lightlike geodesic plays the role of the
observer's worldline (theorem \ref{vhs}).

Intuitively, this result states that if the isometry satisfies the
mentioned causality requirement, then it is generated through some
canonical frame motion along the natural causal entities that live
on spacetime: observers and light rays.


In order to accomplish this result we first proved the
exponentiality of the proper orthochronous inhomogeneous Lorentz
group (Theor. \ref{biq}). We studied the Lie algebra introducing a
complete set of Ad-invariants (Theor. \ref{gjs}) which allowed us to
classify the Ad-orbits (Theor. \ref{vkz}). As a corollary, we
obtained a classification of the adjoint inequivalent Killing fields
of Minkowski spacetime (Theor. \ref{vkz}, Cor. \ref{pou}).

It is clear that space translations, while being isometries, are not
generated by any observer's causally meaningful motion. In order to
obtain a relativistic version of Chasles' theorem it was necessary
to impose some causality condition. The weakest is the requirement
that the transformation sends some point to its chronological
(causal) future. Keeping this observation in mind we went to study
the causal semigroup of the inhomogeneous Lorentz group and its Lie
cone. In this respect, we connected this weak causality condition
with the apparently stronger condition which wants the logarithm of
the transformation on the Lie wedge \ref{kug}, and we identified
those Ad-orbits that admit a causal representative (Theor.
\ref{njx}). Finally, we proved the relativistic generalization of
Chasles' theorem.

In our analysis we  payed special attention to the geometrical
content of the Lorentz transformations, summarizing the
possibilities in tables \ref{table4b} and \ref{table4c}. Given the
conjugacy class (or Ad-orbit) and the appropriate geometric
information, it is then possible to fully recover the transformation
and, more importantly, to grasp its physical meaning.


\clearpage

\begin{table}[ht]
\caption{Relativistic Chasles' theorem  and reconstruction (timelike
Lie wedge version).  The simplest representatives here displayed are
not necessarily those belonging to $L(I)$, nevertheless they are
parametrized keeping in mind the physical interpretation of their
equivalents which  belong to $L(I)$. }
 {\small
$\!\!\!\!\!\!\!\!\!\!\!\!\!\!\!$\begin{tabular}{llccc}
\\
\hline
Type & \parbox{4cm}{\begin{center}Families of timelike orbits (Def. \ref{vkk})\\
 (some matrices are given up to a positive factor) \end{center}} &  \parbox{2cm}{Parameters \\ (omitted positive factor)} & Description & \parbox{1.7cm}{Geometric ingredients} \\
\hline
\\
(p1) & {\footnotesize $\begin{pmatrix}
 0 & 0 & 0 & 0 & -1 \\
 0 & 0 & 0 & 0 & 0 \\
 0 & 0 & 0 & 0 & 0\\
 0 & 0 & 0 & 0 & 0\\
 0 & 0 & 0 & 0 & 0
\end{pmatrix}$} & \parbox{1.5cm}{$[none]$}
  & \parbox{2.8cm}{ timelike translation (inertial motion)} & \parbox{1.7 cm}{timelike direction} \\
\\
(p2) & {\footnotesize $\begin{pmatrix}
 0 & - a  & 0 & 0 & 0 \\
 - a & 0 & 0 & 0 & 0 \\
 0 & 0 & 0 &   \omega  & 0\\
 0 & 0 & - \omega & 0 & 0\\
 0 & 0 & 0 & 0 & 0
\end{pmatrix}$} & \parbox{1.8cm}{$a >0$, \\ $\omega \ne 0$.}
  & \parbox{2.8cm}{acceleration aligned   with angular velocity}  & \parbox{1.7 cm}{oriented timelike 2-plane with origin} \\
  \\
(p3) & {\footnotesize $\begin{pmatrix}
 0 & - a  & 0 & 0 & 0  \\
 - a & 0 & 0 & 0 & 0 \\
 0 & 0 & 0 &  0  & 0\\
 0 & 0 & 0 & 0 & 0\\
 0 & 0 & 0 & 0 & 0
\end{pmatrix}$} & \parbox{1.8cm}{$a >0$.}
  & acceleration & \parbox{1.7 cm}{oriented spacelike 2-plane} \\
\\
 (p4) & {\footnotesize $\begin{pmatrix}
 0 & 0  & 0 & 0 & - 1 \\
 0  & 0 & 0 & 0 & 0 \\
 0 & 0 & 0 &  \omega & 0\\
 0 & 0 & - \omega & 0 & 0 \\
 0 & 0 & 0 & 0 & 0
\end{pmatrix}$} & \parbox{1.5cm}{ $\omega>0$,}
  & \parbox{1.5cm}{ rotation} & \parbox{1.7 cm}{oriented timelike 2-plane and timelike direction on it} \\
  \\
(p5) & {\footnotesize $\begin{pmatrix}
 0 & -a /\gamma & 0 & 0 & 0 \\
 -a /\gamma & 0 & 0 & 0 & 0 \\
 0 & 0 & 0 & 0 & 0\\
 0 & 0 & 0 & 0 & -{(\omega/a)}\gamma\\
 0 & 0 & 0 & 0 & 0
\end{pmatrix}$} & \parbox{3cm}{ \mbox{$\gamma:=1/\sqrt{1-(\omega/a)^2}$} \\$a >0$, \\ $0< \omega<a$. }
  & \parbox{2.8cm}{the acceleration and angular velocity are orthogonal} & \parbox{1.7 cm}{oriented spacelike 2-plane and oriented spacelike direction on it} \\
 \\
 (p6) & {\footnotesize $\begin{pmatrix}
 0 &  0 & -a & 0 & -1 \\
 0 & 0 & -\omega  & 0 & 0\\
 -a & \omega & 0 & 0 & 0\\
 0 & 0 & 0 & 0 & 0\\
 0 & 0 & 0 & 0 & 0
\end{pmatrix}$}
  &  \parbox{2cm}{$a=\omega> 0$. }  & \parbox{2.8cm}{the acceleration and angular velocity are orthogonal}
  & \parbox{1.8cm}{oriented lightlike 2-plane}  \\
 \\
 \hline
\end{tabular} }
\label{table4b}
\end{table}

\begin{table}[ht]
\caption{Relativistic Chasles' theorem  and reconstruction (horismos
Lie wedge version)} {\small
$\!\!\!\!\!\!\!\!\!\!\!\!\!\!\!$\begin{tabular}{llccc}
\\
\hline
Type & \parbox{4cm}{\begin{center}Families of horismos orbits (Def. \ref{vkk})\\
 \end{center}} & Parameters & Description & \parbox{1.7cm}{Geometric ingredients} \\
\hline \\
(p7) & {\footnotesize $\begin{pmatrix}
 0 & 0  & 0 & 0 & -1 \\
 0  & 0 & 0 & 0 & -\epsilon_2 \\
 0 & 0 & 0 &   1 & 0\\
 0 & 0 & - 1 & 0 & 0 \\
 0 & 0 & 0 & 0 & 0
\end{pmatrix}\lambda$} & \parbox{1.5cm}{ $\epsilon_2=\pm 1$,  $\lambda>0$}
  & \parbox{2.5cm}{(positive/negative helicity) lightlike screw} & \parbox{1.7 cm}{oriented timelike 2-plane} \\
 \\
(p8) & {\footnotesize$\begin{pmatrix}
0 & 0  & -1 & 0 & 0  \\
0 & 0 & -1 & 0 & 0 \\
 -1 & 1 & 0 & 0 & 0 \\
 0 & 0 & 0 & 0 & 0 \\
 0 & 0 & 0 & 0 & 0
\end{pmatrix}$} & \parbox{1.5cm}{[none]}
  & \parbox{1.5cm}{ [none]} & \parbox{1.7 cm}{oriented lightlike 2-plane and f.d.\ lightlike vector on it} \\
 \\
 \hline
\end{tabular} }
\label{table4c}
\end{table}

\clearpage


\end{document}